\newtheorem{theorem}{Theorem}
\newtheorem{corollary}[theorem]{Corollary}
\newtheorem{definition}[theorem]{Definition}
\newtheorem{lemma}[theorem]{Lemma}
\newcommand{\reals}{\mathbb{R}}
\newcommand{\euc}{\mathbb{E}}
\newcommand{\bB}{\mathcal{B}}
\newcommand{\cC}{\mathcal{C}}
\newcommand{\kK}{\mathcal{K}}
\newcommand{\fF}{\mathcal{F}}
\newcommand{\sS}{\mathcal{S}}
\newcommand{\yY}{\mathcal{Y}}
\newcommand{\SSS}{\mathscr{S}}
\renewcommand{\to}{\rightarrow}
\newcommand{\sub}{\leftarrow}
\newcommand{\sm}{\setminus}
\newcommand{\nin}{\not\in}
\newcommand{\mb}[1]{{\boldsymbol{#1}}}
\newcommand{\mc}{\mb{\gamma}}
\newcommand{\sep}{,\xspace}
\newcommand{\resp}{resp.,\xspace}
\newcommand{\ie}{i.e.,\xspace}
\newcommand{\eg}{e.g.,\xspace}
\newcommand{\lexp}[1]{\lfloor\frac{#1}{2}\rfloor}
\newcommand{\bx}[1]{\partial{#1}}
\newcommand{\link}[2][v]{{#2}/{#1}}
\newcommand{\linkfull}[2][v]{\text{link}(#1,#2)}
\newcommand{\str}[2][v]{\text{star}(#1,#2)}
\newcommand{\shell}[1]{\mathds{S}(#1)}
\newcommand{\bn}[2][V_1]{$(#2,#1)$-bineighborly\xspace}
\newcommand{\labelsize}{}
\begin{document}

\title{The maximum number of faces of the Minkowski sum\\ of two convex
  polytopes}

\author{Menelaos I. Karavelas$^{\dagger,\ddagger}$\hfil{}
Eleni Tzanaki$^{\dagger,\ddagger}$\\[5pt]
\it{}$^\dagger$Department of Applied Mathematics,\\
\it{}University of Crete\\
\it{}GR-714 09 Heraklion, Greece\\
{\small\texttt{\{mkaravel,etzanaki\}@tem.uoc.gr}}\\[5pt]
\it{}$^\ddagger$Institute of Applied and Computational Mathematics,\\
\it{}Foundation for Research and Technology - Hellas,\\
\it{}P.O. Box 1385, GR-711 10 Heraklion, Greece}



\phantomsection
\addcontentsline{toc}{section}{Title}

\maketitle

\phantomsection
\addcontentsline{toc}{section}{Abstract}

\begin{abstract}
  We derive tight expressions for the maximum
number of $k$-faces, $0\le{}k\le{}d-1$, of the
Minkowski sum, $P_1\oplus{}P_2$, of two $d$-dimensional convex polytopes
$P_1$ and $P_2$, as a function of the number of vertices of the polytopes.

For even dimensions $d\ge{}2$, the maximum values are attained when
$P_1$ and $P_2$ are cyclic $d$-polytopes with disjoint vertex sets.
For odd dimensions $d\ge{}3$, the maximum values are attained when
$P_1$ and $P_2$ are $\lfloor\frac{d}{2}\rfloor$-neighborly
$d$-polytopes, whose vertex sets are chosen appropriately from two
distinct $d$-dimensional moment-like curves.

  \bigskip\noindent
  \textit{Key\;words:}\/
  high-dimensional geometry\sep discrete geometry\sep
  combinatorial geometry\sep combinatorial complexity\sep
  Minkowski sum\sep convex polytopes
  \medskip\par\noindent
  \textit{2010 MSC:}\/ 52B05, 52B11, 52C45, 68U05
\end{abstract}

\section{Introduction}
\label{sec:intro}

Given two $d$-dimensional polytopes, or simply $d$-polytopes, $P$ and
$Q$, their Minkowski sum, $P\oplus{}Q$, is defined as the set
$\{p+q\,\,|\,\,p\in{}P, q\in{}Q\}$.
Minkowski sums are fundamental structures in both Mathematics and
Computer Science. They appear in a variety of different subjects,
including Combinatorial Geometry, Computational Geometry, Computer
Algebra, Computer-Aided Design \& Solid Modeling, Motion Planning,
Assembly Planning, Robotics (see \cite{w-mspcc-07,f-mscaa-08} and the
references therein), and, more recently,
Game Theory \cite{r-gtsis-00}, Computational Biology \cite{ps-ascb-05}
and Operations Research \cite{z-pomdp-10}.

Despite their apparent importance, little is known about the
worst-case complexity of Minkowski sums in dimensions four and
higher. In two dimensions, the worst-case complexity of Minkowski sums
is well understood. Given two convex polygons $P$ and
$Q$ with $n$ and $m$ vertices, respectively, the maximum number
of vertices and edges of $P\oplus{}Q$ is $n+m$
\cite{bkos-cgaa-00}. This result can be immediately generalized (\eg
by induction) to any number of summands. If $P$ is convex and $Q$ is
non-convex (or vice versa), the worst-case complexity of $P\oplus{}Q$
is $\Theta(nm)$, while if both $P$ and $Q$ are non-convex the complexity
of their Minkowski sum can be as high as $\Theta(n^2m^2)$
\cite{bkos-cgaa-00}. When $P$ and $Q$ are 3-polytopes
(embedded in the 3-dimensional Euclidean space), the worst-case
complexity of $P\oplus{}Q$ is $\Theta(nm)$, if both $P$ and $Q$ are
convex, and $\Theta(n^3m^3)$, if both $P$ and $Q$ are non-convex
(\eg see \cite{fhw-emcms-09}). For the intermediate cases, \ie if only
one of $P$ and $Q$ is convex, see \cite{s-amp-04}.

Given two convex $d$-polytopes $P_1$ and $P_2$ in $\euc^d$,
$d\ge{}2$, with $n_1$ and $n_2$ vertices, respectively, we can easily get
a straightforward upper bound of $O((n_1+n_2)^{\lexp{d+1}})$ on the
complexity of $P_1\oplus{}P_2$ by means of the following reduction:
embed $P_1$ and $P_2$ in the hyperplanes $\{x_{d+1}=0\}$ and
$\{x_{d+1}=1\}$ of $\euc^{d+1}$, respectively; then the weighted
Minkowski sum
$(1-\lambda)P_1\oplus\lambda{}P_2=
\{(1-\lambda)p_1+\lambda{}p_2\,\,|\,\,p_1\in{}P_1,p_2\in{}P_2\}$,
$\lambda\in(0,1)$, of $P_1$ and $P_2$ is the intersection of the
convex hull, $CH_{d+1}(\{P_1,P_2\})$, of $P_1$ and $P_2$ with the hyperplane
$\{x_{d+1}=\lambda\}$.
The embedding and reduction described above are essentially
what are known as the \emph{Cayley embedding} and \emph{Cayley trick},
respectively \cite{hrs-ctlsb-00}.
From this reduction it is obvious that the worst-case complexity of
$(1-\lambda)P_1\oplus\lambda{}P_2$ is bounded from above by the complexity
of $CH_{d+1}(\{P_1,P_2\})$, which is $O((n_1+n_2)^{\lexp{d+1}})$.
Furthermore, the complexity of the weighted Minkowski sum of $P_1$ and
$P_2$ is independent of $\lambda$, in the sense that for any value of
$\lambda\in(0,1)$ the polytopes we get by intersecting
$CH_{d+1}(\{P_1,P_2\})$ with $\{x_{d+1}=\lambda\}$ are combinatorially
equivalent. In fact, since $P_1\oplus{}P_2$ is nothing but
$\frac{1}{2}P_1\oplus\frac{1}{2}P_2$ scaled by a factor of 2, the
complexity of the weighted Minkowski sum of two convex polytopes is
the same as the complexity of their unweighted Minkowski sum. 
Very recently (cf. \cite{kt-chsch-11b}), the authors of this paper have
considered the problem of computing the asymptotic worst-case
complexity of the convex hull of a fixed number $r$ of convex
$d$-polytopes lying on $r$ parallel hyperplanes of $\euc^{d+1}$.
A direct corollary of our results is a tight bound on the
worst-case complexity of the Minkowski sum of two convex $d$-polytopes
for all odd dimensions $d\ge{}3$, which refines the ``obvious'' upper
bound mentioned above. More precisely, we have shown that for
$d\ge{}3$ odd, the worst-case complexity of $P_1\oplus{}P_2$ is in
$\Theta(n_{1}n_2^{\lexp{d}}+n_{2}n_1^{\lexp{d}})$, which is a refinement of the
obvious upper bound when $n$ and $m$ asymptotically differ. 

In terms of exact bounds on the number of faces of the Minkowski sum
of two polytopes, results are known only when the two summands are
convex. Besides the trivial bound for convex polygons (2-polytopes),
mentioned in the previous paragraph, the first result of this nature
was shown by Gritzmann and Sturmfels \cite{gs-mapcc-93}: given $r$
polytopes $P_1,P_2,\ldots,P_r$ in $\euc^d$, with a total of $n$
non-parallel edges, the number of $l$-faces,
$f_l(P_1\oplus{}P_2\oplus{}\cdots\oplus{}P_r)$,
of $P_1\oplus{}P_2\oplus{}\cdots\oplus{}P_r$ is bounded from above
by $2\binom{n}{l}\sum_{j=0}^{d-1-l}\binom{n-l-1}{j}$. This bound is
attained when the polytopes $P_i$ are \emph{zonotopes}, and their
generating edges are in general position.

Regarding bounds as a function of the number of vertices
or facets of the summands, Fukuda and Weibel \cite{fw-fmacp-07} have
shown that, given two 3-polytopes $P_1$ and $P_2$ in $\euc^3$, the
number of $k$-faces of $P_1\oplus{}P_2$, $0\le{}k\le{}2$, is bounded
from above as follows:
\begin{equation}\label{equ:ub3}
  \begin{aligned}
    f_0(P_1\oplus{}P_2)&\le{}n_1 n_2,\\
    f_1(P_1\oplus{}P_2)&\le{}2n_1 n_2+n_1+n_2-8,\\
    f_2(P_1\oplus{}P_2)&\le{}n_1 n_2+n_1+n_2-6.
  \end{aligned}
\end{equation}
where $n_j$ is the number of vertices of $P_j$, $j=1,2$.
Weibel \cite{w-mspcc-07} has also derived similar expressions in terms
of the number of facets $m_j$ of $P_j$, $j=1,2$, namely:
\begin{align*}
  f_0(P_1\oplus{}P_2)&\le{}4m_1 m_2-8m_1-8m_2+16,\\
  f_1(P_1\oplus{}P_2)&\le{}8m_1 m_2-17m_1-17m_2+40,\\
  f_2(P_1\oplus{}P_2)&\le{}4m_1 m_2-9m_1-9m_2+26.
\end{align*}
All these bounds are tight.
Fogel, Halperin and Weibel \cite{fhw-emcms-09} have further
generalized some of these bounds in the case of $r$ summands. More
precisely, they have shown that given $r$ 3-polytopes
$P_1,P_2,\ldots,P_r$ in $\euc^3$, where $P_j$ has $m_j\ge{}d+1$ facets, the
number of facets of the Minkowski sum
$P_1\oplus{}P_2\oplus{}\cdots\oplus{}P_r$ is bounded from above by
\[\sum_{1\le{}i<j\le{}r}(2m_i-5)(2m_j-5)+\sum_{i=1}^r{}m_i+\binom{r}{2},\]
and this bound is tight.

For dimensions four and higher there are no results that relate the
worst-case number of $k$-faces of the Minkowski sum of two of more
convex polytopes with the number of facets of the summands. There are,
however, bounds on the number of $k$-faces of the Minkowski sum of
convex polytopes, as a function of the number of vertices of the
summands. More precisely, Fukuda and Weibel \cite{fw-fmacp-07} have
shown that the number of vertices of the Minkowski sum of $r$
$d$-polytopes $P_1,\ldots,P_r$, where $r\le{}d-1$ and $d\ge{}2$, is
bounded from above by $\prod_{i=1}^{r}n_i$, where $n_i$ is the number
of vertices of $P_i$, and this bound is tight. On the other hand, for
$r\ge{}d$ this bound cannot be attained: Sanyal \cite{s-tovnm-09} has
shown that for $r\ge{}d$, $f_0(P_1\oplus\cdots\oplus{}P_r)$ is bounded
from above by
$\left(1-\frac{1}{(d+1)^d}\right)\prod_{i=1}^{r}n_i$, which is,
clearly, strictly smaller than $\prod_{i=1}^{r}n_i$.
For higher-dimensional faces, \ie for $k\ge{}1$, Fukuda and Weibel
\cite{fw-fmacp-07} have shown that the number of $k$-faces of the
Minkowski sum of $r$ polytopes is bounded as follows:
\begin{equation}\label{equ:ub-trivial}
  f_k(P_1\oplus{}P_2\oplus\cdots\oplus{}P_r)\le
  \sum_{\substack{1\le{}s_i\le{}n_i\\s_1+\ldots+s_r=k+r}}\prod_{i=1}^r\binom{n_i}{s_i},
  \qquad 0\le{}k\le{}d-1,
\end{equation}
where $n_i$ is the number of vertices of $P_i$. These bounds are
tight for $d\ge{}4$, $r\le\lexp{d}$, and for all $k$ with
$0\le{}k\le\lexp{d}-r$, i.e., for the cases where both the number of
summands and the dimension of the faces considered is
small. 

We end our discussion of the previous work related to this paper by
some results presented in a technical report of Weibel \cite{w-mfmsl-10}.
In this report, Weibel considers the case where the number of summands, $r$,
is at least as big as the dimension of the polytopes. In this setting
he gives a relation between the number of $k$-faces of the Minkowski
sum of $r$ polytopes, $r\ge{}d\ge{}2$, and the number of $k$-faces of
the Minkowski sum of subsets of the original set of $r$ polytopes, that
are of size at most $d-1$. In more detail, if we have $r$
$d$-polytopes $P_1,P_2,\ldots,P_r$ in $\euc^d$, where $r\ge{}d$, that are
in general position, then the following relation holds for any $k$
with $0\le{}k\le{}d-1$:
\begin{equation}\label{equ:many-summands}
  f_k(P_1\oplus{}P_2\oplus\cdots{}\oplus{}P_r)
  =\alpha+\sum_{j=1}^{d-1}(-1)^{d-1-j}\binom{r-1-j}{d-1-j}
  \sum_{S\in\cC_j^r}(f_k(P_S)-\alpha)
  \le\sum_{S\in\cC_{d-1}^r}f_k(P_S),
\end{equation}
where $\cC_j^r$ is the family of subsets of $\{1,2,\ldots,r\}$ of
cardinality $j$, $P_S$ is the Minkowski sum of the polytopes in $S$,
and, finally, $\alpha=2$ if $k=0$ and $d$ is odd, $\alpha=0$,
otherwise. Weibel then uses this relation to derive upper bounds on the
number of vertices of the Minkowski sum of $r$ $d$-polytopes in
$\euc^d$, when $r\ge{}d$. An important qualitative consequence of
relation \eqref{equ:many-summands} is that, when we consider Minkowski
sums of $d$-polytopes in $\euc^d$, essentially it really suffices to
consider up to $d-1$ summands. To pose it otherwise, if we know
good/tight upper bounds for the worst-case number of $k$-faces of the
Minkowski sum of $d-1$ polytopes, then we immediately know upper
bounds for the $k$-faces of the Minkowski sum of $r\ge{}d$ polytopes
in $\euc^d$. If we are to strive for tight exact bounds, however,
there is still something to be done in this case, due to the fact that
the sum in \eqref{equ:many-summands} is an alternating sum: not only
do we have to find sets of $r\ge{}d$ polytopes such that any subset of
them of size at most $d-1$ yields the worst possible number of faces,
but also prove that such a configuration does indeed maximize the
right-hand size of the equality in \eqref{equ:many-summands}.

In this paper, we extend previous results on the exact maximum number
of faces of the Minkowski sum of two convex $d$-polytopes\footnote{In
  the rest of the paper, all polytopes are considered to be convex.}.
More precisely, we show that given two $d$-polytopes $P_1$ and
$P_2$ in $\euc^d$ with $n_1\ge{}d+1$ and $n_2\ge{}d+1$ vertices,
respectively, the maximum number of $k$-faces of the Minkowski sum
$P_1\oplus{}P_2$ is bounded from above as follows:
\begin{equation*}
  f_{k-1}(P_1\oplus{}P_2)\le{}f_k(C_{d+1}(n_1+n_2))
  -\sum_{i=0}^{\lexp{d+1}}\binom{d+1-i}{k+1-i}
  \left(\binom{n_1-d-2+i}{i}+\binom{n_2-d-2+i}{i}\right),
\end{equation*}
where $1\le{}k\le{}d$, and $C_d(n)$ stands for the cyclic $d$-polytope
with $n$ vertices. The expressions above are shown to be tight for any
$d\ge{}2$ and for all $1\le{}k\le{}d$, and, clearly, match
with the corresponding expressions for two and three dimensions
(cf. rel. \eqref{equ:ub3}), as well as the expressions in
\eqref{equ:ub-trivial} for $r=2$ and for all
$0\le{}k\le{}\lexp{d}-2$.

To prove the upper bounds we use the embedding in one dimension higher
already stated above. Given the $d$-polytopes $P_1$ and
$P_2$ in $\euc^d$, we embed $P_1$ and $P_2$ in the hyperplanes
$\{x_{d+1}=0\}$ and $\{x_{d+1}=1\}$ of $\euc^{d+1}$. We consider the
convex hull $P=CH_{d+1}(\{P_1,P_2\})$ and argue that, for the purposes
of the worst-case upper bounds, it suffices to consider the case where
$P$ is simplicial, except possibly for its two facets $P_1$ and
$P_2$. We concentrate on the set $\fF$ of faces of $P$ that are neither
faces of $P_1$ nor faces of $P_2$. The reason that we focus on
$\fF$ is that there is a bijection between the $k$-faces of
$\fF$ and the $(k-1)$-faces of $P_1\oplus{}P_2$, $1\le{}k\le{}d$, and, thus,
deriving upper bounds of the number of $(k-1)$-faces of
$P_1\oplus{}P_2$ reduces to deriving upper bounds for the number of
$k$-faces of $\fF$. We then proceed in a manner analogous to that used
by McMullen \cite{m-mnfcp-70} to prove the Upper Bound Theorem for
polytopes.
We consider the $f$-vector $\mb{f}(\fF)$ of $\fF$, from this we
define the $h$-vector $\mb{h}(\fF)$ of $\fF$, and continue by:
\begin{enumerate}
\item deriving Dehn-Sommerville-like equations for $\fF$, expressed in terms of
  the elements of $\mb{h}(\fF)$ and the $g$-vectors of the
  boundary complexes of $P_1$ and $P_2$, and,
\item establishing a recurrence relation for the elements of
  $\mb{h}(\fF)$.
\end{enumerate}
From the latter, we inductively compute upper bounds on the elements of
$\mb{h}(\fF)$, which we combine with the Dehn-Sommerville-like
equations for $\fF$, to get refined upper bounds for the ``left-most
half'' of the elements of $\mb{h}(\fF)$, \ie for the values
$h_k(\fF)$ with $k>\lexp{d+1}$.
We then establish our upper bounds by computing $\mb{f}(\fF)$ from
$\mb{h}(\fF)$.

To prove the lower bounds we distinguish between even and odd
dimensions. In even dimensions $d\ge{}2$, we show that the $k$-faces
of the Minkowski sum of any two cyclic $d$-polytopes with
$n_1$ and $n_2$ vertices, respectively, whose vertex sets are
distinct, attain the upper bounds we have proved. In odd dimensions $d\ge{}3$,
the construction that establishes the tightness of our bounds is more
intricate. We consider the $(d-1)$-dimensional moment curve
$\mc(t)=(t,t^2,t^3,\ldots,t^{d-1})$, $t>0$, and define two vertex
sets $V_1$ and $V_2$ with $n_1$ and $n_2$ vertices on $\mc(t)$,
respectively. We then embed $V_1$ (\resp $V_2)$ on the hyperplane
$\{x_2=0\}$ (\resp $\{x_1=0\}$) of $\euc^d$ and perturb the
$x_2$-coordinates (\resp $x_1$-coordinates) of the vertices in $V_1$
(\resp $V_2$), so that the polytope $P_1$ (\resp $P_2$) defined as the
convex hull, in $\euc^d$, of the vertices in $V_1$ (\resp $V_2$) is
full-dimensional.
We then argue that by \emph{appropriately choosing} the vertex sets
$V_1$ and $V_2$, the number of $k$-faces of the Minkowski sum
$P_1\oplus{}P_2$ attains its maximum possible value. At a very
high/qualitative level, the appropriate choice we refer to above
amounts to choosing $V_1$ and $V_2$ so that the parameter values on 
$\mc(t)$ of the vertices in $V_1$ and $V_2$, lie within two disjoint
intervals of $\reals$ that are far away from each other. 

The structure of the rest of the paper is as follows. In Section
\ref{sec:prelim} we formally give various definitions, and recall a
version of the Upper Bound Theorem for polytopes that will be useful
later in the paper. In Section \ref{sec:bn} we define what we call
\emph{bineighborly} polytopal complexes and prove some properties
associated with them. The reason that we introduce this new notion is
the fact that the tightness of our upper bounds is shown to be
equivalent to requiring that the $(d+1)$-polytope
$P=CH_{d+1}(\{P_1,P_2\})$, defined above, is bineighborly. In Section
\ref{sec:ub} we prove our upper bounds on
the number of faces of the Minkowski sum of two polytopes. In Section
\ref{sec:lb} we describe our lower bound constructions and show that
these constructions attain the upper bounds proved in Section
\ref{sec:ub}. We conclude the paper with Section \ref{sec:concl},
where we summarize our results, and state open problems and directions
for future work.
 
\section{Definitions and preliminaries}
\label{sec:prelim}

A \emph{convex polytope}, or simply \emph{polytope}, $P$ in
$\euc^d$ is the convex hull of a finite set of points $V$ in
$\euc^d$, called the \emph{vertex set} of $P$.
A polytope $P$ can equivalently be described as the intersection of
all the closed halfspaces containing $V$. 
A \emph{face} of $P$ is the intersection of $P$ with a hyperplane
for which the polytope is contained in one of the two closed
halfspaces delimited by the hyperplane.
The dimension of a face of $P$ is the dimension of its affine hull.
A $k$-face of $P$ is a $k$-dimensional face of $P$.
We consider the polytope itself as a trivial $d$-dimensional face; all
the other faces are called \emph{proper} faces. We use the term
\emph{$d$-polytope} to refer to a polytope the trivial face of which
is $d$-dimensional.
For a $d$-polytope $P$, the $0$-faces of $P$ are its
\emph{vertices}, the $1$-faces of $P$ are its \emph{edges}, the
$(d-2)$-faces of $P$ are called \emph{ridges}, while the
$(d-1)$-faces are called \emph{facets}.
For $0\le{}k{}\le{}d$ we denote by $f_k(P)$ the number of $k$-faces
of $P$.
Note that every $k$-face $F$ of $P$ is also a $k$-polytope whose
faces are all the faces of $P$ contained in $F$.
A $k$-simplex in $\euc^d$, $k\le{}d$, is the convex hull of any
$k+1$ affinely independent points in $\euc^d$.
A polytope is called \emph{simplicial} if all its proper faces are
simplices. Equivalently, $P$ is simplicial if for every vertex $v$
of $P$ and every face $F\in{}P$, $v$ does not belong to the affine
hull of the vertices in $F\sm\{v\}$.


%
A \emph{polytopal complex} $\cC$ is a finite collection of polytopes
in $\euc^d$ such that
(i) $\emptyset\in\cC$,
(ii) if $P\in\cC$ then all the faces of $P$ are also in $\cC$ and
(iii) the intersection $P\cap{}Q$ for two polytopes $P$ and $Q$ in
$\cC$ is a face of both $P$ and $Q$. 
The dimension $\dim(\cC)$ of $\cC$ is the largest dimension of a
polytope in $\cC$.
A polytopal complex is called \emph{pure} if all its maximal (with
respect to inclusion) faces have the same dimension. 
In this case the maximal faces are called the \emph{facets} of
$\cC$. We use the term \emph{$d$-complex} to refer to a
polytopal complex whose maximal faces are $d$-dimensional (\ie the
dimension of $\cC$ is $d$).
A polytopal complex is simplicial if all its faces are simplices. 
Finally, a polytopal complex $\cC'$ is called a \emph{subcomplex}
of a polytopal complex $\cC$ if all faces of $\cC'$ are also faces
of $\cC$.

One important class of polytopal complexes arise from polytopes.
More precisely, a $d$-polytope $P$, together with all its
faces and the empty set, form a $d$-complex, denoted
by $\cC(P)$. The only maximal face of $\cC(P)$, which is clearly
the only facet of $\cC(P)$, is the polytope $P$ itself.
Moreover, all proper faces of $P$ form a pure $(d-1)$-complex,
called the \emph{boundary complex} $\cC(\bx{}P)$, or simply
$\bx{}P$ of $P$. The facets of $\bx{}P$ are just the facets of $P$,
and its dimension is, clearly, $\dim(\bx{}P)=\dim(P)-1=d-1$.

Given a $d$-polytope $P$ in $\euc^d$, consider $F$ a facet of $P$,
and call $H$ the supporting hyperplane of $F$ (with respect to $P$).
For an arbitrary point $p$ in $\euc^d$, we say that $p$ is
\emph{beyond} (\resp \emph{beneath}) the facet $F$ of $P$, if $p$
lies in the open halfspace of $H$ that does not contain $P$ (\resp
contains the interior of $P$). Furthermore, we say that an arbitrary
point $v'$ is \emph{beyond} the vertex $v$ of $P$ if for every facet
$F$ of $P$ containing $v$, $v'$ is beyond $F$, while for every facet
$F$ of $P$ not containing $v$, $v'$ is beneath $F$.
For a vertex $v$ of $P$, the \emph{star} of $v$, denoted by $\str{P}$,
is the polytopal complex of all faces of $P$ that contain $v$, and
their faces. The \emph{link} of $v$, denoted by $\linkfull{P}$, is the
subcomplex of $\str{P}$ consisting of all the
faces of $\str{P}$ that do not contain $v$.

\begin{definition}[{\cite[Remark 8.3]{z-lp-95}}]
  \label{shellability}
  Let $\cC$ be a pure simplicial polytopal $d$-complex. A
  shelling $\shell{\cC}$ of $\cC$ is a linear ordering
  $F_1,F_2,\ldots,F_s$
  of the facets of $\cC$ such that for all $1<j\le{} s$ the
  intersection, $F_j\cap\left(\bigcup_{i=1}^{j-1}F_i\right)$, of the
  facet $F_j$ with the previous facets is non-empty and pure
  $(d-1)$-dimensional.\\
  In other words, for every $i<j$ there exists some $\ell <j$ such that
  the intersection $F_i\cap{}F_j$ is contained in $F_\ell\cap{}F_j$, 
  and such that $F_\ell\cap{}F_j $ is a facet of $F_j$. 
\end{definition}

Every polytopal complex that has a shelling is called \emph{shellable}.
In particular, the boundary complex of a polytope of always shellable.
(cf. \cite{bm-sdcs-71}).
Consider a pure shellable simplicial polytopal complex $\cC$ and let
$\shell{\cC}=\{F_1,\ldots,F_s\}$ be a shelling order of its facets.
The \emph{restriction} $R(F_j)$ of a facet $F_j$ is the set
of all vertices $v\in{}F_j$ such that $F_j\sm\{v\}$ is contained in
one of the earlier facets.\footnote{For simplicial faces, we
  identify the face with its defining vertex set.}
The main observation here is that when we construct $\cC$ according to
the shelling $\shell{\cC}$, the new faces at the $j$-th step of the
shelling are exactly the vertex sets $G$ with
$R(F_j)\subseteq{}G\subseteq{}F_j$ (cf. \cite[Section 8.3]{z-lp-95}).
Moreover, notice that $R(F_1)=\emptyset$ and $R(F_i)\ne{}R(F_j)$ for
all $i\ne{}j$.

The $f$-vector $\mb{f}(P)=(f_{-1}(P),f_0(P),\ldots,f_{d-1}(P))$
of a $d$-polytope $P$ (or its boundary complex $\bx{}P$) is
defined as the $(d+1)$-dimensional vector consisting of the number
$f_k(P)$ of $k$-faces of $P$, $-1\le{}k\le{}d-1$, where $f_{-1}(P)=1$
refers to the empty set.
The $h$-vector $\mb{h}(P)=(h_0(P),h_1(P),\ldots,h_d(P))$ of a
$d$-polytope $P$ (or its boundary complex $\bx{}P$) is
defined as the $(d+1)$-dimensional vector, where
$h_k(P):=\sum_{i=0}^{k}(-1)^{k-i}\binom{d-i}{d-k}f_{i-1}(P)$,
$0\le{}k\le{}d$. It is easy to verify from the
defining equations of the $h_k(P)$'s that the elements of
$\mb{f}(P)$ determine the elements of $\mb{h}(P)$ and
vice versa. 

For simplicial polytopes, the number $h_k(P)$ counts the number of
facets of $P$ in a shelling of $\bx{}P$, whose restriction has size
$k$; this number is independent of the particular shelling chosen
(cf. \cite[Theorem 8.19]{z-lp-95}). 
Moreover, the elements of $\mb{f}(P)$ (or,
equivalently, $\mb{h}(P)$) are not linearly independent; they
satisfy the so called \emph{Dehn-Sommerville equations}, which can be
written in a very concise form as:
$h_k(P)=h_{d-k}(P)$, $0\le{}k\le{}d$.
An important implication of the existence of the Dehn-Sommerville
equations is that if we know the face numbers $f_k(P)$ for all
$0\le{}k\le\lexp{d}-1$, we can determine the remaining face
numbers $f_k(P)$ for all $\lexp{d}\le{}k\le{}d-1$.
Both the $f$-vector and $h$-vector of a simplicial $d$-polytope are
related to the so called $g$-vector. For a simplicial $d$-polytope $P$
its $g$-vector is the $(\lexp{d}+1)$-dimensional vector
$\mb{g}(P)=(g_0(P),g_1(P),\ldots,$ $g_{\lexp{d}}(P))$, where $g_0(P)=1$, and
$g_k(P)=h_k(P)-h_{k-1}(P)$, $1\le{}k\le{}\lexp{d}$ (see also
\cite[Section 8.6]{z-lp-95}). Using the convention that
$h_{d+1}(P)=0$, we can actually extend the definition of $g_k(P)$ for
all $0\le{}k\le{}d+1$, while using the Dehn-Sommerville equations for
$P$ yields: $g_{d+1-k}(P)=-g_k(P)$, $0\le{}k\le{}d+1$. We can then
express $\mb{f}(P)$ in terms of $\mb{g}(P)$ as follows:
\begin{equation*}
  f_{k-1}(P)=\sum_{j=0}^{\lexp{d}}g_j(P)
  \left(\binom{d+1-j}{d+1-k}-\binom{j}{d+1-k}\right),\quad 0\le{}k\le{}d+1.
\end{equation*}
As a final note for this section, the Upper Bound Theorem for
polytopes can be expressed in terms of their $g$-vector:

\begin{corollary}[{\cite[Corollary 8.38]{z-lp-95}}]\label{cor:UBT}
  We consider simplicial $d$-polytopes $P$ of fixed dimension $d$ and
  fixed number of vertices $n=g_1(P)+d+1$.
  $\mb{f}(P)$ has its componentwise maximum if and only if all the
  components of $\mb{g}(P)$ are maximal, with 
  \begin{equation}
    g_k(P)=\binom{g_1(P)+k-1}{k}=\binom{n-d-2+k}{k}.
  \end{equation}
  Also, $f_{k-1}(P)$ is maximal if an only if $g_i(P)$ is maximal for
  all $i$ with $i\le\min\{k,\lexp{d}\}$. 
\end{corollary}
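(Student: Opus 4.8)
The statement is the Upper Bound Theorem recast in terms of $g$-vectors, and its whole content sits in one estimate on $h$-vectors: for every simplicial $d$-polytope $P$ with $n$ vertices,
\[
  h_k(P)\;\le\;\binom{n-d-1+k}{k},\qquad 0\le k\le\lexp{d},
\]
with equality for all such $k$ simultaneously when $P$ is a cyclic $d$-polytope. Granting this, both ``if and only if'' assertions drop out by elementary manipulations with binomial coefficients, and I would organize the argument in three steps.

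\emph{Step 1 (the McMullen estimate).} I would prove the displayed inequality following McMullen \cite{m-mnfcp-70} (see also \cite[\S8.4]{z-lp-95}): fix a generic linear functional on $\euc^d$, order the vertices of $P$ by increasing value, and use the induced line shelling $F_1,\dots,F_s$ of $\partial P$. As recalled in Section~\ref{sec:prelim}, $h_k(P)$ counts the facets $F_j$ whose restriction $R(F_j)$ has $k$ elements, and the sets $R(F_j)$ are pairwise distinct. The crux is that for $k\le\lexp{d}$ these restriction sets are highly constrained --- each omits a fixed ``initial'' part of its facet --- so the relevant facets inject into the $k$-subsets of an $(n-d-1+k)$-element set, which yields the bound; for $k>\lexp{d}$ it then follows from $h_k(P)=h_{d-k}(P)$ (Dehn--Sommerville) and the monotonicity of $k\mapsto\binom{n-d-1+k}{k}$. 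Equality: a cyclic $d$-polytope $C_d(n)$ is $\lexp{d}$-neighborly, so $f_{i-1}(C_d(n))=\binom{n}{i}$ for $1\le i\le\lexp{d}$, and feeding this into the defining equations of the $h_k$'s gives $h_k(C_d(n))=\binom{n-d-1+k}{k}$ for all $k\le\lexp{d}$; hence the componentwise maximum of $\mb{h}(P)$ (equivalently of $\mb{f}(P)$) over all such $P$ is attained. Making the combinatorial injection precise is the step I expect to be the real obstacle; everything else is bookkeeping. (Alternatively, one may read the bound off the Cohen--Macaulayness of the Stanley--Reisner ring of $\partial P$, whose Artinian reduction is a standard graded algebra with $h_1(P)=n-d$ generators in degree one.)

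\emph{Step 2 (from $\mb{h}$ to $\mb{g}$).} With $g_0(P)=1$ and $g_k(P)=h_k(P)-h_{k-1}(P)$ for $1\le k\le\lexp{d}$: if $h_0(P),\dots,h_k(P)$ are all maximal, Pascal's identity gives $g_k(P)=\binom{n-d-1+k}{k}-\binom{n-d-2+k}{k-1}=\binom{n-d-2+k}{k}$; conversely, if $g_0(P),\dots,g_k(P)$ are all maximal, the hockey-stick identity gives $h_k(P)=\sum_{j=0}^{k}g_j(P)=\sum_{j=0}^{k}\binom{n-d-2+j}{j}=\binom{n-d-1+k}{k}$, which is maximal. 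Hence $h_0(P),\dots,h_{\lexp{d}}(P)$ are all maximal (which via Dehn--Sommerville forces $\mb{h}(P)$ to its componentwise maximum) if and only if $\mb{g}(P)$ is componentwise maximal, in which case $g_k(P)=\binom{g_1(P)+k-1}{k}=\binom{n-d-2+k}{k}$.

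\emph{Step 3 (from $\mb{g}$ to $\mb{f}$).} Here I would invoke the expansion $f_{k-1}(P)=\sum_{j=0}^{\lexp{d}}g_j(P)\bigl(\binom{d+1-j}{d+1-k}-\binom{j}{d+1-k}\bigr)$ from Section~\ref{sec:prelim} and inspect its coefficients: since $d+1-j>j$ for $0\le j\le\lexp{d}$, each coefficient is $\ge 0$; it equals $0$ when $j>k$ and is strictly positive when $j\le\min\{k,\lexp{d}\}$. So $f_{k-1}(P)$ is a positive linear combination of $g_0(P),\dots,g_{\min\{k,\lexp{d}\}}(P)$ alone, whence $f_{k-1}(P)$ is maximal if and only if $g_i(P)$ is maximal for all $i\le\min\{k,\lexp{d}\}$. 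Taking $k=\lexp{d}$ (or $k=d$) shows $\mb{f}(P)$ is componentwise maximal exactly when all $g_i(P)$, $i\le\lexp{d}$, are maximal --- i.e.\ exactly for the maximal $g$-vector of Step~2, which the cyclic polytope realizes. This closes the argument; as noted, the only non-routine ingredient is the combinatorial estimate in Step~1.
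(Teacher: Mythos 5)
This corollary is not proved in the paper at all: it is imported verbatim from Ziegler (Corollary 8.38), so there is no internal argument to compare yours with, and what you have written is an attempt to reprove the Upper Bound Theorem in its $g$-vector form. Within your sketch, the primary route in Step 1 is not substantiated: you yourself flag the injection of the facets with $|R(F_j)|=k$ into the $k$-subsets of an $(n-d-1+k)$-element set as ``the real obstacle'', and no such direct shelling injection is part of the standard proofs. McMullen's argument runs instead through the vertex-link identity $(k+1)h_{k+1}(\bx{}P)+(d-k)h_k(\bx{}P)=\sum_{v}h_k(\link{\bx{}P})$ together with $h_k(\link{\bx{}P})\le h_k(\bx{}P)$ --- precisely the relations \eqref{equ:hkrelP} and \eqref{equ:sumlinkbound} that the present paper adapts in Lemma \ref{lem:hFrecur} --- while Stanley's runs through Cohen--Macaulayness; your parenthetical Stanley--Reisner fallback is therefore the only complete route you offer for the $h$-bound, and it imports that machinery wholesale.

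The more serious problem is a logical gap: the corollary is a statement about the $g$-vector, and nothing in Steps 1--3 proves the upper bound $g_k(P)\le\binom{g_1(P)+k-1}{k}$ for $k\le\lfloor d/2\rfloor$. Step 2 only verifies the telescoping equivalence between ``$h_0,\dots,h_k$ all equal to $\binom{n-d-1+i}{i}$'' and ``$g_0,\dots,g_k$ all equal to $\binom{n-d-2+i}{i}$''; it does not show these values are the maxima of the individual $g_i$, and bounding $h_k$ and $h_{k-1}$ separately does not bound their difference. Without that bound the ``only if'' directions collapse: from $f_{k-1}(P)=\sum_j c_j g_j(P)$ with $c_j>0$ (your Step 3 coefficient computation is correct) and knowledge of the cyclic values alone, you cannot conclude that a maximizer of $f_{k-1}$ has every $g_j$ at the cyclic value --- one $g_j$ could a priori exceed it while another falls short. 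The fix comes from the very recurrence you bypassed: $h_{k+1}\le\tfrac{n-d+k}{k+1}h_k$ gives
\begin{equation*}
  g_{k+1}(P)=h_{k+1}(P)-h_k(P)\le\frac{n-d-1}{k+1}\,h_k(P)
  \le\frac{n-d-1}{k+1}\binom{n-d-1+k}{k}=\binom{n-d-1+k}{k+1}=\binom{g_1(P)+k}{k+1},
\end{equation*}
which is exactly the needed $g$-bound; with it, together with its simultaneous attainment by cyclic polytopes and your Step 3 expansion, the two ``if and only if'' assertions do follow. (Alternatively one may cite the $M$-sequence property of the $g$-vector, but that is the $g$-theorem, far heavier than what is required here.)
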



\section{Bineighborly polytopal complexes}
\label{sec:bn}

Let $\cC$ be a $d$-complex, and let $V$ be the vertex set of $\cC$.
Let $\{V_1,V_2\}$ be a partition of $V$ and define $\cC_1$ (\resp
$\cC_2$) to be the subcomplex of $\cC$ consisting of all the faces of
$\cC$ whose vertices are vertices in $V_1$ (\resp $V_2$). We start
with a useful definition:

\begin{definition}\label{def:bn}
  Let $\cC$ be a $d$-complex. We say that $\cC$ is \bn{k} if we can
  partition the vertex set $V$ of $\cC$ into two non-empty subsets
  $V_1$ and $V_2=V\sm{}V_1$ such that for every
  $\emptyset\subset{}S_j\subseteq{}V_j$, $j=1,2$, with
  $|S_1|+|S_2|\le{}k$, the vertices of $S_1\cup{}S_2$ define a face of
  $\cC$ (of dimension $|S_1|+|S_2|-1$).
\end{definition}

\noindent
We introduce the notion of bineighborly polytopal complexes because
they play an important role when considering the maximum
complexity of the Minkowski sum of two $d$-polytopes $P_1$ and
$P_2$. As we will see in the upcoming section, the number of
$(k-1)$-faces of $P_1\oplus{}P_2$ is maximal for all
$1\le{}k\le{}l$, $l\le\lexp{d-1}$, if and only if the convex hull $P$ of
$P_1$ and $P_2$, when embedded in the hyperplanes $\{x_{d+1}=0\}$ and
$\{x_{d+1}=1\}$ of $\euc^{d+1}$, respectively, is \bn{l+1},
where $V_1$ stands for the vertex set of $P_1$. Even more
interestingly, in any odd dimension $d\ge{}3$, the number of
$k$-faces of $P_1\oplus{}P_2$ is maximized for all
$0\le{}k\le{}d-1$, if and only if $P$ is \bn{\lexp{d+1}}.
In the rest of this section we highlight some properties of
bineighborly polytopal complexes that will be useful in the upcoming
sections.

A direct consequence of our definition is the following: suppose that
$\cC$ is a \bn{l} polytopal complex, and let $F$ be a $k$-face $F$ of
$\cC$, $1\le{}k<l$, such that at least one vertex of $F$ is in $V_1$
and at least one vertex of $F$ is in $V_2$; then $F$ is simplicial
(\ie $F$ is a $k$-simplex).
Another immediate consequence of Definition \ref{def:bn} is that a
$k$-neighborly $d$-complex is also \bn[V']{k} for every non-empty subset
$V'$ of its vertex set:

\begin{corollary}
  Let $\cC$ be a $k$-neighborly $d$-complex, with vertex set $V$. 
  Then, for every $V'$, with $\emptyset\subset{}V'\subset{}V$, $\cC$
  is \bn[V']{k}.
\end{corollary}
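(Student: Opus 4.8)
The plan is to unwind the definitions and observe that $k$-neighborliness gives us exactly the hypothesis needed by Definition~\ref{def:bn}, once we pick any admissible partition of $V'$. First I would recall what $k$-neighborly means for a $d$-complex $\cC$: every subset of at most $k$ vertices of $V$ spans a face of $\cC$ (of dimension one less than its cardinality). Now fix a non-empty proper subset $V'$ of $V$. The only real content of the claim is that we can split $V'$ into two \emph{non-empty} parts $V_1'$ and $V_2'=V'\sm V_1'$ so that Definition~\ref{def:bn} applies; this is where I should pause to make sure $|V'|\ge 2$, since otherwise no such partition exists and the statement, as phrased, would be vacuous or false. I would handle this by either noting that the interesting case is $|V'|\ge 2$ and choosing $V_1'$ to be any single vertex of $V'$ (so $V_2'$ is non-empty too), or, if the paper's conventions allow a degenerate partition, remarking that the $|V'|=1$ case is trivial because $k\ge 1$ forces every singleton to be a vertex of $\cC$ anyway.

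With the partition chosen, the verification is immediate: take any $S_1$ and $S_2$ with $\emptyset\subset S_j\subseteq V_j'$ and $|S_1|+|S_2|\le k$. Then $S_1\cup S_2$ is a subset of $V'\subseteq V$ of cardinality at most $k$, so by $k$-neighborliness of $\cC$ it spans a face of $\cC$ of dimension $|S_1\cup S_2|-1=|S_1|+|S_2|-1$ (the two parts are disjoint). That is exactly the defining property of \bn[V']{k}, so $\cC$ is \bn[V']{k}. Finally I would note that the face $S_1\cup S_2$ so obtained lies in the subcomplex of $\cC$ spanned by $V'$, so the property is genuinely inherited by that subcomplex, not merely by $\cC$ with a relabelled vertex set.

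There is essentially no obstacle here; the ``hard part,'' such as it is, is purely bookkeeping — making sure the partition of $V'$ has two non-empty blocks and stating the boundary case $|V'|\le 1$ correctly so the corollary is not vacuously misquoted later. Everything else is a one-line application of the definition of $k$-neighborliness.
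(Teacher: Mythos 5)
Your central computation is the right one, but you have set it up against the wrong partition, because you misread what \bn[V']{k} means. In Definition~\ref{def:bn}, bineighborliness is a property of $\cC$ together with a partition of its \emph{entire} vertex set $V$ into two non-empty blocks; the notation \bn[V']{k} fixes the first block to be $V_1=V'$, so the second block is $V_2=V\sm V'$. Your proposal instead splits $V'$ itself into two non-empty pieces $V_1',V_2'$ and verifies the face condition for sets drawn from those pieces — i.e.\ you verify a bineighborliness-type property of the subcomplex induced on $V'$, which is not the statement of the corollary. Two symptoms of this misreading: (i) your worry about the case $|V'|=1$ is spurious — under the correct reading that case is unproblematic, since $V\sm V'\ne\emptyset$; what the hypothesis $\emptyset\subset V'\subset V$ actually excludes is $V'=V$, which would make the second block empty; (ii) your closing remark that the face $S_1\cup S_2$ ``lies in the subcomplex spanned by $V'$'' is false in the intended setting, where $S_2\subseteq V\sm V'$, and indeed the whole point of the notion (as used later, e.g.\ in Lemma~\ref{lem:fbBbound} and in the lower-bound constructions) is to count faces with vertices on \emph{both} sides of the partition.

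The repair is immediate, and it is exactly the one-line argument the paper has in mind when it calls the corollary an immediate consequence of the definition: take $S_1\subseteq V'$ and $S_2\subseteq V\sm V'$, both non-empty, with $|S_1|+|S_2|\le k$; then $S_1\cup S_2$ is a set of at most $k$ vertices of $\cC$, so $k$-neighborliness makes it a face, of dimension $|S_1|+|S_2|-1$ because the two blocks are disjoint. Hence $\cC$ is \bn[V']{k}. So keep your verification step verbatim, but apply it to the partition $\{V',\,V\sm V'\}$ of $V$, and drop both the discussion of how to split $V'$ and the claim about the induced subcomplex.
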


It is easy to see that if a $d$-complex $\cC$ is \bn{k}, then
it is $(k-1)$-neighborly, as the following straightforward lemma
suggests.

\begin{lemma}\label{lem:bn2n}
  Let $\cC$ be a \bn{k} $d$-complex, $k\ge{}2$.
  Then $\cC$ is $(k-1)$-neighborly.
\end{lemma}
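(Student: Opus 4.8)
The plan is to unwind the definitions and show that $(k-1)$-neighborliness for a $d$-complex $\cC$ is literally a special case of the $(V_1,k)$-bineighborly condition. Recall that $\cC$ being $(k-1)$-neighborly means that every subset $S\subseteq V$ of size at most $k-1$ spans a face of $\cC$ (of dimension $|S|-1$). So the goal reduces to: given an arbitrary $S\subseteq V$ with $1\le |S|\le k-1$, produce a legal splitting $S=S_1\cup S_2$ with $S_j\subseteq V_j$, $S_j$ non-empty, and $|S_1|+|S_2|\le k$, so that Definition~\ref{def:bn} applies directly and yields that $S$ is a face.

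The one subtlety is that Definition~\ref{def:bn} requires \emph{both} $S_1$ and $S_2$ to be non-empty, whereas an arbitrary $S$ of size $\le k-1$ might lie entirely inside $V_1$ or entirely inside $V_2$. This is exactly the point where the bound $k-1$ (rather than $k$) is used, and handling it is the only real step. First I would dispose of the case where $S$ meets both parts: then set $S_1=S\cap V_1$ and $S_2=S\cap V_2$; both are non-empty, and $|S_1|+|S_2|=|S|\le k-1<k$, so $S$ is a face by bineighborliness. Next, suppose $S\subseteq V_1$ (the case $S\subseteq V_2$ is symmetric). Since $V_2=V\sm V_1$ is non-empty by Definition~\ref{def:bn}, pick any vertex $w\in V_2$ and set $S_1=S$, $S_2=\{w\}$. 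Then $|S_1|+|S_2|=|S|+1\le (k-1)+1=k$, both parts are non-empty, so by bineighborliness $S\cup\{w\}$ is a face of $\cC$ of dimension $|S|$. But any subset of a face of a polytopal complex is again a face of that complex (a polytopal complex contains all faces of its members, and a face of $\cC$ is a face of some polytope in $\cC$), so $S$ itself is a face of $\cC$, necessarily of dimension $|S|-1$ since $S$ is affinely independent as a subset of the vertices of the simplex $S\cup\{w\}$.

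I expect no genuine obstacle here; the ``hard part,'' such as it is, is merely being careful that the bineighborly definition forbids empty $S_j$, which forces the auxiliary-vertex trick and explains why we only obtain $(k-1)$-neighborliness and not $k$-neighborliness. I would then conclude by noting that $S$ was an arbitrary vertex subset of size at most $k-1$, so every such subset spans a face, which is precisely the statement that $\cC$ is $(k-1)$-neighborly. The hypothesis $k\ge 2$ is used only to make the statement non-vacuous (for $k=2$ it says $\cC$ is $1$-neighborly, i.e.\ every pair of vertices forms an edge).
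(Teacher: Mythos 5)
Your proposal is correct and follows essentially the same route as the paper: when $S$ lies entirely in one part you adjoin a vertex of the other part, invoke the bineighborly condition to get a simplex face of size $k$, and then pass to the subface $S$. Your direct treatment of the mixed case (applying the definition to $S_1=S\cap V_1$, $S_2=S\cap V_2$ with $|S|\le k-1\le k$) is in fact slightly cleaner than the paper's case (i), which asks for a vertex in $V\sm(V_1\cup V_2)$ even though $\{V_1,V_2\}$ partitions $V$.
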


\begin{proof}
  Let $S$ be a non-empty subset of $V$ of size $k-1$. Consider
  the following, mutually exclusive cases:
  \begin{enumerate}
  \item $S$ consists of vertices of both $V_1$ and $V_2$. In this case
    choose a vertex $v\in{}V\sm(V_1\cup{}V_2)$.
  \item $S$ consists of vertices of $V_1$ only. In this case choose
    a vertex $v\in{}V_2$.
  \item $S$ consists of vertices of $V_2$ only. In this case choose
    a vertex $v\in{}V_1$.
  \end{enumerate}
  Consider the vertex set $S'=S\cup\{v\}$, where $v$ is defined as
  above. $S'$ has size $k$, and has at least one vertex from $V_1$ and
  at least one vertex from $V_2$. Since $\cC$ is \bn{k}, the vertex
  set $S'$ defines a $(k-1)$-face $F_{S'}$ of $\cC$, which is, in
  fact, a $(k-1)$-simplex. This implies that $S$ is a $(k-2)$-face of
  $F_{S'}$, and thus a $(k-2)$-face of $\cC$. In other words, for
  every vertex subset $S$ of $\cC$ of size $k-1$, $S$ defines a
  $(k-2)$-face of $\cC$, \ie $\cC$ is $(k-1)$-neighborly.
\end{proof}

The following lemma is in some sense the reverse of Lemma
\ref{lem:bn2n}.

\begin{lemma}\label{lem:bn+n2n}
  Let $\cC$ be a \bn{k} $d$-complex, and let its two subcomplexes
  $\cC_1$ and $\cC_2$ be $k$-neighborly.
  Then $\cC$ is also $k$-neighborly.
\end{lemma}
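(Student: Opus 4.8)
The plan is to verify the defining property of $k$-neighborliness directly: take an arbitrary vertex subset $S\subseteq V$ with $|S|\le k$ and show that $S$ spans a face of $\cC$. I would split into three cases according to how $S$ meets the partition $\{V_1,V_2\}$. If $S\subseteq V_1$, then since $\cC_1$ is $k$-neighborly and $|S|\le k$, the set $S$ is a face of $\cC_1$, hence a face of $\cC$; symmetrically if $S\subseteq V_2$. The only remaining case is $S_1:=S\cap V_1\ne\emptyset$ and $S_2:=S\cap V_2\ne\emptyset$; here $\emptyset\subset S_j\subseteq V_j$ for $j=1,2$ and $|S_1|+|S_2|=|S|\le k$, so the \bn{k} hypothesis applies verbatim and tells us that $S_1\cup S_2=S$ defines a face of $\cC$ (of dimension $|S|-1$). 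In all three cases $S$ is a face of $\cC$, so $\cC$ is $k$-neighborly.

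The one point that deserves a sentence of care is the boundary behaviour of the cases: the \bn{k} definition only directly handles subsets that have at least one vertex in each of $V_1$ and $V_2$, which is exactly why the hypotheses on $\cC_1$ and $\cC_2$ are needed to cover the ``one-sided'' subsets. Note also that the partition witnessing $k$-neighborliness of $\cC$ need not involve any third block: we are using $\{V_1,V_2\}$ as a genuine partition of $V$ (as in the setup preceding Definition~\ref{def:bn}), so every vertex of $\cC$ lies in $V_1$ or in $V_2$ and the three cases are exhaustive. There is essentially no obstacle here — the lemma is a bookkeeping statement that simply patches the two one-sided neighborliness guarantees onto the mixed-subset guarantee provided by bineighborliness — and I would expect the proof to be only a few lines long.
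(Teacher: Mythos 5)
Your proposal is correct and follows essentially the same route as the paper: a case split on whether $S$ lies entirely in $V_1$, entirely in $V_2$, or meets both, using $k$-neighborliness of $\cC_1$, $\cC_2$ (together with the subcomplex property) for the one-sided cases and the bineighborliness hypothesis for the mixed case. The only cosmetic difference is that the paper checks subsets of size exactly $k$ while you allow $|S|\le k$, which changes nothing of substance.
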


\begin{proof}
  Let $S$ be a non-empty subset of $V$ of size $k$. Consider
  the following, mutually exclusive cases:
  \begin{enumerate}
  \item $S$ consists of vertices of both $V_1$ and $V_2$. Then, since
    $\cC$ is \bn{k}, $S$ defines a $(k-1)$-face of $\cC$.
  \item $S$ consists of vertices of $V_j$ only, $j=1,2$. Since $\cC_j$
    is $k$-neighborly, $S$ defines a $(k-1)$-face of $\cC_j$. However,
    $\cC_j$ is a subcomplex of $\cC$, which further implies that $S$
    is also a face of $\cC$.
  \end{enumerate}
  Hence, for every vertex subset $S$ of $V$ of size $k$, $S$ defines a
  $(k-1)$-face of $\cC$, \ie $\cC$ is $k$-neighborly.
\end{proof}

Consider again a $d$-complex $\cC$ with vertex set $V$. As
above, partition $V$ into two subsets $V_1$ and $V_2$, and let $\cC_1$
and $\cC_2$ be the corresponding subcomplexes of $\cC$. Finally,
let $\bB$ be the set of faces of $\cC$ that are not faces of either
$\cC_1$ or $\cC_2$.
We end this section with the following lemma that gives tight upper
bounds for the number of faces in $\bB$. In what follows, we denote by
$n_j$ the cardinality of $V_j$, $j=1,2$.

\begin{lemma}\label{lem:fbBbound}
  The number of $(k-1)$-faces of $\bB$ is bounded from above as
  follows:
  \begin{equation}\label{equ:fbBbound}
    f_{k-1}(\bB)\le
    \sum_{j=1}^{k-1}
    \binom{n_1}{j}\binom{n_2}{k-j}
    =\binom{n_1+n_2}{k}-\binom{n_1}{k}-\binom{n_2}{k},
    \quad 1\le{}k\le{}d,
  \end{equation}
  where equality holds if and only if $\cC$ is
  \bn{k}.
\end{lemma}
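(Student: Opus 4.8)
The plan is to count faces in $\bB$ by organizing them according to how many vertices they draw from each side of the partition $\{V_1,V_2\}$. Every face $F\in\bB$ has at least one vertex in $V_1$ and at least one in $V_2$, since otherwise $F$ would be a face of $\cC_1$ or of $\cC_2$. Thus if $F$ is a $(k-1)$-face, write $S_1=F\cap V_1$ and $S_2=F\cap V_2$ with $|S_1|=j\ge 1$, $|S_2|=k-j\ge 1$; by the remark following Definition~\ref{def:bn}, $F$ is a $(k-1)$-simplex, so $F$ is completely determined by the pair $(S_1,S_2)$, \ie by its vertex set. Hence the map $F\mapsto(S_1,S_2)$ is injective from the $(k-1)$-faces of $\bB$ into the set of pairs $(S_1,S_2)$ with $\emptyset\subset S_1\subseteq V_1$, $\emptyset\subset S_2\subseteq V_2$, $|S_1|+|S_2|=k$. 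The number of such pairs is exactly $\sum_{j=1}^{k-1}\binom{n_1}{j}\binom{n_2}{k-j}$, which gives the upper bound; the closed-form identity $\sum_{j=0}^{k}\binom{n_1}{j}\binom{n_2}{k-j}=\binom{n_1+n_2}{k}$ (Vandermonde) then yields the stated right-hand side after subtracting the $j=0$ and $j=k$ terms.

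For the equality characterization I would argue both directions. If $\cC$ is \bn{k}, then by definition \emph{every} admissible pair $(S_1,S_2)$ with $|S_1|+|S_2|\le k$ spans a face of $\cC$; restricting to $|S_1|+|S_2|=k$ with $|S_1|,|S_2|\ge 1$, each such pair spans a $(k-1)$-face having vertices on both sides, hence a face lying in $\bB$, and these are distinct for distinct pairs. So the injection above is also a surjection onto the pair set, and $f_{k-1}(\bB)$ equals the bound. Conversely, suppose $f_{k-1}(\bB)$ attains the bound. Then the injection $F\mapsto(S_1,S_2)$ must hit every admissible pair with total size exactly $k$; in particular every such pair spans a face of $\cC$. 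To upgrade this to the full bineighborliness condition (pairs with $|S_1|+|S_2|\le k$, not just $=k$), I would observe that any face of $\cC$ with vertices on both sides is a simplex, so whenever $(S_1,S_2)$ with $|S_1|+|S_2|=k$ spans a $(k-1)$-face $F$, every subset of $F$'s vertex set — in particular every pair $(S_1',S_2')$ with $S_1'\subseteq S_1$, $S_2'\subseteq S_2$, not both empty — spans a face of $\cC$; and any pair $(T_1,T_2)$ with $1\le |T_1|+|T_2|<k$ can be enlarged to a pair of total size $k$ inside $V_1\cup V_2$ (using $n_1,n_2$ large enough, which is guaranteed since $\bB$ attaining the bound forces $\binom{n_j}{\cdot}$ to be nonzero in the relevant range), so $(T_1,T_2)$ spans a face too. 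This establishes the bineighborly property, completing the equivalence.

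The main obstacle I anticipate is the ``only if'' direction, specifically making the passage from ``all pairs of total size exactly $k$ span faces'' to ``all pairs of total size $\le k$ span faces'' fully rigorous, including the edge cases where $n_1$ or $n_2$ is small (\eg $n_1=1$, where $V_1$ contributes only singletons). One must check that attaining the bound is consistent with, and in fact forces, enough vertices on each side for the enlargement argument to go through, or else handle the degenerate partitions separately by direct inspection. A secondary point requiring care is confirming that a pair $(S_1,S_2)$ of total size $k$ spanning a face necessarily spans a face \emph{in $\bB$} and not accidentally a face of $\cC_1$ or $\cC_2$ — but this is immediate since it has vertices in both $V_1$ and $V_2$. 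Once these bookkeeping issues are dispatched, the combinatorial identity and the injectivity/surjectivity argument are routine.
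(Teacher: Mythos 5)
There is a genuine gap, and it sits at the very first step of your counting argument: you assert that every $(k-1)$-face $F\in\bB$ is a $(k-1)$-simplex, ``by the remark following Definition~\ref{def:bn}''. But that remark applies only when $\cC$ is already known to be $(l,V_1)$-bineighborly; here $\cC$ is an arbitrary polytopal $d$-complex (not assumed simplicial), and bineighborliness is precisely the property the lemma is meant to characterize, so invoking it in the inequality direction is circular. Without it, a $(k-1)$-face of $\bB$ may have more than $k$ vertices: for instance, take $\cC$ with a quadrilateral $2$-face having two vertices in $V_1$ and two in $V_2$; this face lies in $\bB$, is a $(k-1)$-face for $k=3$, but the pair $\left(F\cap V_1,\,F\cap V_2\right)$ has total size $4$, so your map does not even land in the set of pairs of total size $k$, and the bound $\sum_{j=1}^{k-1}\binom{n_1}{j}\binom{n_2}{k-j}$ does not follow. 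The same defect propagates to your ``only if'' argument, which again presumes the map into size-$k$ pairs is defined. The issues you flagged as the main obstacles (upgrading from pairs of total size exactly $k$ to size at most $k$, and small $n_j$) are comparatively minor; the substantive missing content is the treatment of non-simplicial faces.

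The paper's proof supplies exactly this missing ingredient: to each $(k-1)$-face $F\in\bB$ it associates, by induction on $k$, a subset $V_F'$ of the vertex set of $F$ with $|V_F'|=k$, which is $(k-1)$-dimensional (affinely spanning) and meets both $V_1$ and $V_2$ --- the inductive step uses that some $(k-1)$-face of a non-simplicial $k$-face of $\bB$ must itself lie in $\bB$, and then adds one more vertex to restore full dimension. It then proves that $F\mapsto V_F'$ is injective: if two distinct $(k-1)$-faces shared the same spanning $k$-set, their intersection would be a common $(k-1)$-face, forcing the two faces to coincide. This injection into the mixed $k$-subsets of $V$ yields the bound for arbitrary $\cC$, and the bijectivity analysis gives the equality characterization. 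Your argument is essentially the special case of this in which all faces of $\bB$ happen to be simplices; to repair it you would need to add precisely the construction and injectivity of $\varphi_{k-1}$ (or an equivalent device) for faces with more than $k$ vertices.
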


\newcommand{\subf}[1]{{#1}'}

\begin{proof}
  The case $k=1$ is trivial. We have
  $f_0(\bB)=0=\tbinom{n_1+n_2}{1}-\tbinom{n_1}{1}-\tbinom{n_2}{1}$,
  since $\bB$ does not contain any $0$-faces of $\cC$: the $0$-faces
  of $\cC$, \ie the vertices of $\cC$, are either vertices of $\cC_1$
  or $\cC_2$.

  Let $k\ge{}2$, and denote by $V_F$ the subset of $V$ defining a face
  $F\in\bB$. Define $\varphi_{k-1}:\bB\to{}2^V$ to be the mapping
  that maps a $(k-1)$-face $F\in\bB$ to a subset $V_F'$ of $V_F$, of
  size $k$, such that:
  \begin{enumerate}
  \item[(1)] $V_F'$ is $(k-1)$-dimensional, and
  \item[(2)] $V_F'\cap{}V_j\ne\emptyset$, $j=1,2$.
  \end{enumerate}
  The mapping $\varphi_{k-1}$ is well defined in the sense that
  such a subset $V_F'$ always exists. We are going to show this by
  induction on $k$. For $k=2$, simply choose $V_F'=\{v_1,v_2\}$, where
  $v_1\in{}V_F\cap{}V_1$ and $v_2\in{}V_F\cap{}V_2$.
  Suppose that our claim holds for $k\ge{}2$, \ie for
  any $(k-1)$-face $F$ of $\bB$, there exists a subset $V_F'$ of
  $V_F$ of size $k$, such that $V_F'$ is $(k-1)$-dimensional, and
  $V_F'\cap{}V_j\ne\emptyset$, $j=1,2$. We wish to show that this is
  also true for $k+1$. Indeed, let $F$ be a $k$-face of $\bB$. If $F$ is
  defined by $k+1$ vertices (\ie $F$ is simplicial), $V_F'$ is simply
  $V_F$. Clearly, $V_F$ is $k$-dimensional, and
  $V_F\cap{}V_j\ne\emptyset$, $j=1,2$, since $F$ is a $k$-face of $\bB$.
  Otherwise, suppose $F$ is defined by more than $k+1$ vertices,
  \ie $|V_F|>k+1$. Consider the $(k-1)$-faces of $F$: at least
  one of these faces has to be a face in $\bB$ (since, otherwise, $F$
  would not have been a face of $\bB$, but rather a face of either
  $\cC_1$ or $\cC_2$), and let $\subf{F}$ be such a $(k-1)$-face of $F$.
  By the induction hypothesis there exists a subset $V_{\subf{F}}'$ of
  $V_{\subf{F}}$ of size $k$, such that $V_{\subf{F}}'$ is
  $(k-1)$-dimensional and $V_{\subf{F}}'\cap{}V_j\ne\emptyset$,
  $j=1,2$. But then there exists a vertex $v\in{}V_F\sm{}V_{\subf{F}}'$,
  such that the set $V_F'=V_{\subf{F}}'\cup\{v\}$ is $k$-dimensional
  (if this is not the case, then $F$ would have been $(k-1)$-dimensional,
  which contradicts the fact that $F$ is a $k$-face of $\bB$). The set
  $V_F'$ is the set we were looking for: $V_F'$ has size $k+1$ (since
  $|V_{\subf{F}}'|=k$), $V_F'$ is $k$-dimensional (we just argued
  that), and $V_F'\cap{}V_j\ne\emptyset$, $j=1,2$ (this holds for
  $V_{\subf{F}}'$, and, thus, it holds for $V_F'$ as well).

  We argue that the mapping $\varphi_{k-1}$ is an injection
  from the faces of $\bB$ to the subsets of size $k$ of $V$ which
  contain elements from both $V_1$ and $V_2$. To this end, consider two
  $(k-1)$-faces $F_1$ and $F_2$ of $\bB$, such that $F_1\ne{}F_2$, and
  assume that $\varphi_{k-1}(F_1)=\varphi_{k-1}(F_2)$.
  Since $\varphi_{k-1}(F_1)=\varphi_{k-1}(F_2)$, we have that
  $V_{F_1}'=V_{F_2}'$ and both $V_{F_1}'$ and $V_{F_2}'$ are
  $(k-1)$-dimensional. Therefore, the intersection $F_1\cap{}F_2$ is
  not only a face $F$ of both $F_1$ and $F_2$, but also contains all
  vertices in $V_{F_1}'=V_{F_2}'$. Since $V_{F_1}'$, or $V_{F_2}'$, is
  $(k-1)$-dimensional, $F$ is a $(k-1)$-face of both $F_1$ and
  $F_2$. On the other hand, the only $(k-1)$-face of either $F_1$, or
  $F_2$, is $F_1$, or $F_2$, respectively. Hence $F=F_1$ and
  $F=F_2$, that is $F_1=F_2$, which contradicts our assumption that
  $F_1\ne{}F_2$. Summarizing, we have that if $F_1\ne{}F_2$, then
  $\varphi_{k-1}(F_1)\ne\varphi_{k-1}(F_2)$, \ie the mapping
  $\varphi_{k-1}$ is an injection.

  Having established that $\varphi_{k-1}:\bB\to{}2^V$ is an injection,
  we proceed with the upper bound and equality claim of the lemma.
  The number of the subsets of $V$ of size $k$, that have at least one
  vertex from both $V_1$ and $V_2$ is precisely
  $\sum_{1\le{}j\le{}k-1}\binom{n_1}{j}\binom{n_2}{k-j}$, which is
  equal to $\binom{n_1+n_2}{k}-\binom{n_1}{k}-\binom{n_2}{k}$,
  according to Vandermonde's convolution identity. This gives the
  upper bound. Furthermore, notice that the injection $\varphi_{k-1}$
  becomes a bijection if and only if for every non-empty subset $S_1$
  of $V_1$ and every non-empty subset $S_2$ of $V_2$, where
  $|S_1|+|S_2|=k$, the vertex set $S_1\cup{}S_2$ defines a
  $(k-1)$-face of $\cC$. In other words, equality in
  \eqref{equ:fbBbound} can only hold if and only if $\cC$ is \bn{k}.
\end{proof}

Combining Lemma \ref{lem:bn2n} with Lemma \ref{lem:fbBbound} we deduce
that, if the inequality in Lemma \ref{lem:fbBbound} holds as
equality for some $l$, then we also have
$f_{k-1}(\bB)=\binom{n_1+n_2}{k}-\binom{n_1}{k}-\binom{n_2}{k}$ for
all $k$ with $1\le{}k\le{}l-1$.

\section{Upper bounds}
\label{sec:ub}

\renewcommand{\labelsize}{\footnotesize}
\begin{figure}[!b]
  \begin{center}
    \psfrag{P1}[][]{\labelsize$P_1$}
    \psfrag{P2}[][]{\labelsize$P_2$}
    \psfrag{Pi1}[][]{\labelsize$\Pi_1$}
    \psfrag{Pi2}[][]{\labelsize$\Pi_2$}
    \psfrag{P~}[][]{\labelsize$\tilde{P}$}
    \psfrag{Pi~}[][]{\labelsize$\tilde\Pi$}
    \psfrag{F}[][]{\labelsize$\fF$}
    \includegraphics[width=0.9\textwidth]{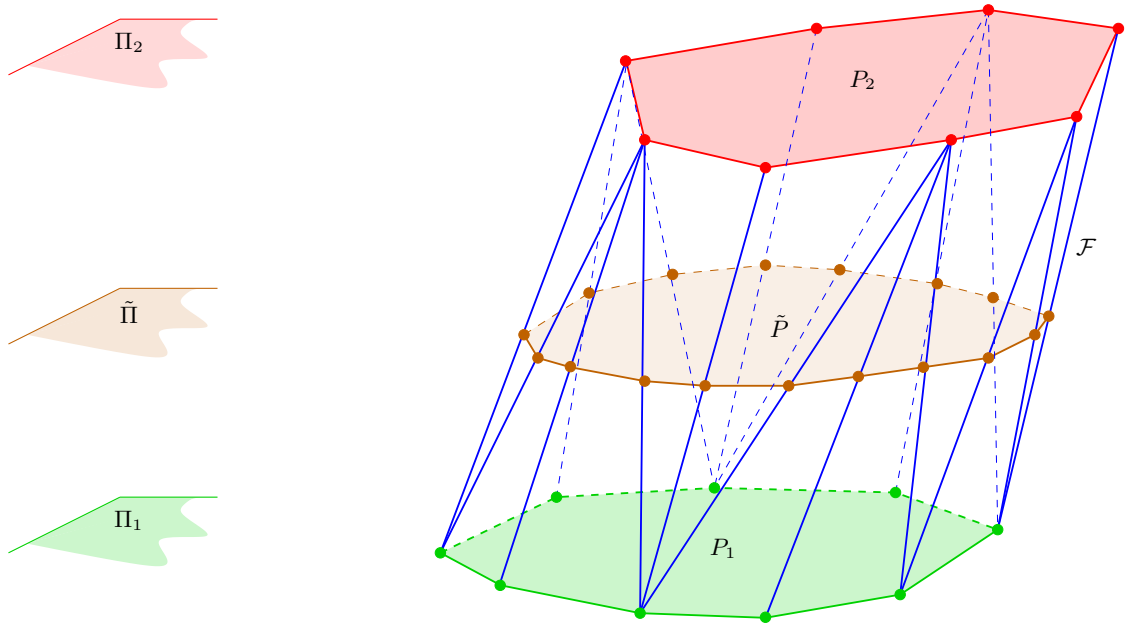}
  \end{center}
  \caption{The $d$-polytopes $P_1$ and $P_2$ are embedded in the
    hyperplanes $\Pi_1=\{x_{d+1}=0\}$ and $\Pi_2=\{x_{d+1}=0\}$ of
    $\euc^{d+1}$. The polytope $\tilde{P}$ is the intersection of
    $CH_{d+1}(\{P_1,P_2\})$ with the hyperplane
    $\tilde\Pi=\{x_{d+1}=\lambda\}$.}\label{fig:fig1}
\end{figure}

Let $P_1$ and $P_2$ be two $d$-polytopes in $\euc^d$, with $n_1$ and
$n_2$ vertices, respectively. The Minkowski sum $P_1\oplus{}P_2$ of
$P_1$ and $P_2$ is the $d$-polytope
$P_1\oplus{}P_2=\{p+q\,|\,p\in{}P_1,q\in{}P_2\}$,
whereas their weighted Minkowski sum is defined as
$(1-\lambda)P_1\oplus\lambda{}P_2=
\{(1-\lambda)p+\lambda{}q\,|\,p\in{}P_1,q\in{}P_2\}$, where
$\lambda\in(0,1)$.
Let us embed $P_1$ (\resp $P_2$) in the hyperplane $\Pi_1$ (\resp
$\Pi_2$) of $\euc^{d+1}$ with equation $\{x_{d+1}=0\}$ (\resp
$\{x_{d+1}=1\}$). Then the weighted Minkowski sum
$(1-\lambda)P_1\oplus\lambda{}P_2$ is the $d$-polytope we
get when intersecting $CH_{d+1}(\{P_1,P_2\})$ with the hyperplane
$\{x_{d+1}=\lambda\}$ (see Fig. \ref{fig:fig1}).
From this reduction it is evident that the weighted Minkowski sum
$(1-\lambda)P_1\oplus\lambda{}P_2$, $\lambda\in(0,1)$, does not really
depend on the specific value of $\lambda$, in the sense that the
weighted Minkowski sums of $P_1$ and $P_2$ for two different $\lambda$
values are combinatorially equivalent. Furthermore, the weighted
Minkowski sum of $P_1$ and $P_2$ is also combinatorially equivalent to
the unweighted Minkowski sum $P_1\oplus{}P_2$, since $P_1\oplus{}P_2$
is nothing but $\frac{1}{2}P_1\oplus\frac{1}{2}P_2$, scaled by a
factor of 2. In view of these observations, in the rest of the paper
we focus on the sum $P_1\oplus{}P_2$, with the understanding that
our results carry over to the weighted Minkowski sum
$(1-\lambda)P_1\oplus\lambda{}P_2$, for any $\lambda\in(0,1)$.

As in the previous paragraph, let $\Pi_1$ and $\Pi_2$ be the
hyperplanes $\{x_{d+1}=0\}$ and $\{x_{d+1}=1\}$, and let $\tilde{\Pi}$
be a hyperplane in $\euc^{d+1}$ parallel and in-between $\Pi_1$ and
$\Pi_2$. Consider two $d$-polytopes $P_1$ and $P_2$ embedded in
$\euc^{d+1}$, and in the hyperplanes $\Pi_1$ and $\Pi_2$,
respectively, and call $P$ the convex hull
$CH_{d+1}(\{P_1,P_2\})$. Karavelas and Tzanaki
\cite[Lemma 2]{kt-chsch-11b} have shown that the vertices of $P_1$ and
$P_2$ can be perturbed in such a way that:
\begin{enumerate}
\item the vertices of $P_1'$ and $P_2'$ remain in $\Pi_1$ and
  $\Pi_2$, respectively, and both $P_1'$ and $P_2'$ are simplicial,
\item $P'=CH_{d+1}(\{P_1',P_2'\})$ is also simplicial, except possibly
  the facets $P_1'$ and $P_2'$, and
\item the number of vertices of $P_1'$ and $P_2'$ is the same as the
  number of vertices of $P_1$ and $P_2$, respectively, whereas
  $f_k(P)\le{}f_k(P')$ for all $k\ge{}1$,
\end{enumerate}
where $P_1'$ and $P_2'$ are the polytopes in $\Pi_1$ and $\Pi_2$ we
get after perturbing the vertices of $P_1$ and $P_2$, respectively.
In view of this result, it suffices to consider the case where both
$P_1$, $P_2$ and their convex hull $P=CH_{d+1}(\{P_1,P_2\})$ are
simplicial complexes (except possibly the facets $P_1$ and $P_2$ of
$P$). In the rest of this section, we consider that this is the
case: $P$ is considered simplicial, with the possible exception of
its two facets $P_1$ and $P_2$.
Let $\fF$ be the set of proper faces of $P$ having non-empty
intersection with $\tilde{\Pi}$. Note that
$\tilde{P}=P\cap\tilde{\Pi}$ is a $d$-polytope, which is, in
general, non-simplicial, and whose proper non-trivial faces
are intersections of the form $F\cap\tilde\Pi$ where
$F\in\fF$. As we have already observed above,
$\tilde{P}$ is combinatorially equivalent to the Minkowski sum
$P_1\oplus{}P_2$. Furthermore,
\begin{equation}\label{equ:facenumbercorresp}
  f_{k-1}(P_1\oplus{}P_2)=f_{k-1}(\tilde{P})=f_k(\fF),
  \qquad 1\le{}k\le{}d.
\end{equation}
The rest of this section is devoted to deriving upper bounds for
$f_k(\fF)$, which, by relation \eqref{equ:facenumbercorresp}, become
upper bounds for $f_{k-1}(P_1\oplus{}P_2)$.

Let $\kK$ be the polytopal complex whose faces are all the faces of $\fF$,
as well as the faces of $P$ that are subfaces of faces in $\fF$.
It is easy to see that the $d$-faces of $\kK$ are exactly the
$d$-faces of $\fF$, and, thus, $\kK$ is a pure simplicial
$d$-complex, with the $d$-faces of $\fF$ being the facets of $\kK$.
Moreover, the set of $k$-faces of $\kK$ is the disjoint union of
the sets of $k$-faces of $\fF$, $\bx{}P_1$ and $\bx{}P_2$. This implies:
\begin{equation}\label{equ:fkK}
  f_k(\kK)=f_k(\fF)+f_k(\bx{}P_1)+f_k(\bx{}P_2),\qquad -1\le{}k\le{}d.
\end{equation}
where $f_d(\bx{}P_j)=0$, $j=1,2$, and conventionally we set
$f_{-1}(\fF)=-1$.

\renewcommand{\labelsize}{\footnotesize}
\begin{figure}[t]
  \begin{center}
    \psfrag{P1}[][]{\labelsize$\bx{}P_1$}
    \psfrag{P2}[][]{\labelsize$\bx{}P_2$}
    \psfrag{Pi1}[][]{\labelsize$\Pi_1$}
    \psfrag{Pi2}[][]{\labelsize$\Pi_2$}
    \psfrag{P~}[][]{\labelsize$\tilde{P}$}
    \psfrag{Pi~}[][]{\labelsize$\tilde\Pi$}
    \psfrag{F}[][]{\labelsize$\fF$}
    \psfrag{y1}[][]{\labelsize$y_1$}
    \psfrag{y2}[][]{\labelsize$y_2$}
    \includegraphics[width=0.9\textwidth]{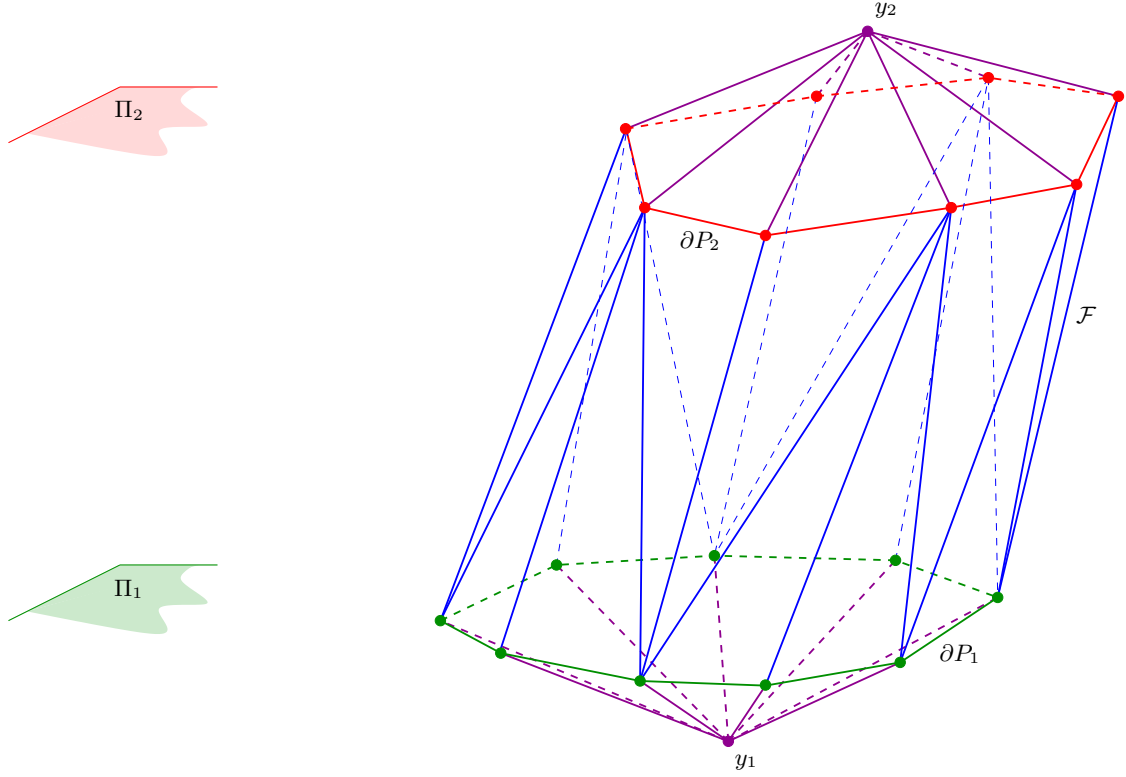}
  \end{center}
  \caption{The polytope $Q$ is created by adding two vertices $y_1$
    and $y_2$. The vertex $y_1$ (\resp $y_2$) is below $P_1$ (\resp
    above $P_2$), and is visible by the vertices of $P_1$ (\resp $P_2$)
    only.}\label{fig:fig2}
\end{figure}

Let $y_1$ (\resp $y_2$) be a point below $\Pi_1$ (\resp above $\Pi_2$),
such that the vertices of $P_1$ (\resp $P_2$) are the only
vertices of $P$ visible from $y_1$ (\resp $y_2$) (see Fig. \ref{fig:fig2}).
To achieve this, we choose $y_1$ (\resp $y_2$) to be a point beyond
the facet $P_1$ (\resp $P_2$) of $P$, and beneath every other facet of
$P$.
Let $Q$ be the $(d+1)$-polytope that is the convex hull of the
vertices of $P_1$, $P_2$, $y_1$ and $y_2$. 
Observe that the faces of $\bx{}P$ (and thus all faces of $\fF$), except
for the facets $P_1$ and $P_2$ of $\bx{}P$, are all faces of the
boundary complex $\bx{}Q$. To see
that, notice that a supporting hyperplane $H_F$ for a facet $F\in{}P$,
with $F\ne{}P_1,P_2$, is also a supporting hyperplane for $Q$. Indeed,
the vertices of $F$ are vertices of $Q$ different from $y_1$ and $y_2$
and thus, every vertex of $P$ that is not a vertex of $F$ strictly
satisfies all hyperplane inequalities for $P$.
Also, by construction, the points $y_1$ and $y_2$ strictly satisfy all
hyperplane inequalities apart from those for $\Pi_1$ and $\Pi_2$,
respectively. Since $H_F$ is a hyperplane other than $\Pi_1$ and
$\Pi_2$ we deduce that all vertices of $P$, as well as $y_1$ and
$y_2$, lie on the same halfspace defined by $H_F$, and therefore $H_F$
supports $Q$.
The faces of $Q$ that are not faces of $\fF$ are the faces in the star
$\sS_1$ of $y_1$ and the star $\sS_2$ of $y_2$. To verify this,
consider a $k$-face $F$ of $P_1$, and let $F'$ be a face in $\fF$ that
contains $F$. Let $H'$ be a supporting hyperplane of $F'$ with respect
to $P$. Tilt $H'$ until it hits the point $y_1$, while keeping $H'$
incident to $F'$, and call $H''$ this tilted hyperplane. $H''$ is
a supporting hyperplane for $y_1$ and the vertex set of $P_1$, and
thus is a supporting hyperplane for $Q$. The same argument can be
applied for $\str[y_2]{Q}$. In fact, the boundary complex
$\bx{}P_1$ of $P_1$ (\resp $\bx{}P_2$ of $P_2$) is nothing
but the link of $y_1$ (\resp $y_2$) in $Q$. 

It is easy to realize that the set of $k$-faces of $\bx{}Q$ is
the disjoint union of the $k$-faces of $\fF$, $\sS_1$ and
$\sS_2$. This implies that:
\begin{equation}\label{equ:fkQ}
  f_k(\bx{}Q)=f_k(\fF)+f_k(\sS_1)+f_k(\sS_2),\quad 0\le{}k\le{}d,
\end{equation}
where $f_0(\fF)=0$.
The $k$-faces of $\bx{}Q$ in $\sS_j$ are either $k$-faces of
$\bx{P_j}$ or $k$-faces defined by $y_j$ and a $(k-1)$-face of
$\bx{P_j}$. In fact, there exists a bijection between the
$(k-1)$-faces of $\bx{}P_j$ and the $k$-faces of $\sS_j$
containing $y_j$. Hence, we have, for $j=1,2$:
\begin{equation}\label{equ:fkS}
  f_k(\sS_j)=f_k(\bx{}P_j)+f_{k-1}(\bx{}P_j),\quad 0\le{}k\le{}d,
\end{equation}
where $f_{-1}(\bx{}P_j)=1$ and $f_d(\bx{}P_j)=0$.
Combining relations \eqref{equ:fkQ} and \eqref{equ:fkS}, we get:
\begin{equation}\label{equ:fkQ1}
  f_k(\bx{}Q)=f_k(\fF)+f_k(\bx{}P_1)+f_{k-1}(\bx{}P_1)+
  f_k(\bx{}P_2)+f_{k-1}(\bx{}P_2),\quad 0\le{}k\le{}d.
\end{equation}

We call $\kK_j$, $j=1,2$, the subcomplex of $\bx{}Q$ consisting
of either faces of $\kK$ or faces of $\sS_j$. $\kK_j$ is a pure
simplicial $d$-complex the facets of which are either facets in the
star $\sS_j$ of $y_j$ or facets of $\kK$. Furthermore, $\kK_j$ is
shellable. To see this first notice that $\bx{}Q$
is shellable ($Q$ is a polytope). Consider a line shelling
$F_1,F_2,\ldots,F_s$ of $\bx{}Q$ that shells $\str[y_2]{\bx{}Q}$
last, and let $F_{\lambda+1},F_{\lambda+2},\ldots,F_s$ be the facets of
$\bx{Q}$ that correspond to $\sS_2$. Trivially, the subcomplex of
$\bx{Q}$, the facets of which are $F_1,F_2,\ldots,F_\lambda$, is
shellable; however, this subcomplex is nothing but $\kK_1$. The
argument for $\kK_2$ is analogous.

Notice that $Q$ is a simplicial $(d+1)$-polytope, while $\kK$, $\kK_1$
and $\kK_2$ are simplicial $d$-complexes; hence their $h$-vectors
are well defined. More precisely:
\begin{equation}\label{equ:hQ}
  h_k(\yY)
  =\sum_{i=0}^{k}(-1)^{k-i}\binom{d+1-i}{d+1-k}f_{i-1}(\yY),
  \quad 0\le{}k\le{}d+1,
\end{equation}
where $\yY$ stands for either $\bx{}Q$, $\kK$, $\kK_1$ or $\kK_2$.
We define the $f$-vector of $\fF$ to be the $(d+2)$-vector
$\mb{f}(\fF)=(f_{-1}(\fF),f_0(\fF),\ldots,f_d(\fF))$, where recall that
$f_{-1}(\fF)=-1$, and from this we can also define the $(d+2)$-vector
$\mb{h}(\fF)=(h_0(\fF),h_1(\fF),\ldots,$ $h_{d+1}(\fF))$, where
\begin{equation}\label{equ:hF}
  h_k(\fF)
  =\sum_{i=0}^{k}(-1)^{k-i}\binom{d+1-i}{d+1-k}f_{i-1}(\fF),
  \quad 0\le{}k\le{}d+1.
\end{equation}
We call this vector the $h$-vector of $\fF$. As for polytopal
complexes and polytopes, the $f$-vector of $\fF$ defines the
$h$-vector of $\fF$ and vice versa. In particular, solving the
defining equations \eqref{equ:hF} of the elements of $\mb{h}(\fF)$ in
terms of the elements of $\mb{f}(\fF)$ we get:
\begin{equation}\label{equ:facesF}
  f_{k-1}(\fF)=\sum_{i=0}^{d+1}\binom{d+1-i}{k-i}h_i(\fF),
  \qquad 0\le{}k\le{}d+1.
\end{equation}

The next lemma
associates the elements of $\mb{h}(\bx{}Q)$, $\mb{h}(\kK)$,
$\mb{h}(\kK_1)$, $\mb{h}(\kK_2)$, $\mb{h}(\fF)$,
$\mb{h}(\bx{}P_1)$ and $\mb{h}(\bx{}P_2)$. The last among the
relations in the lemma can be thought of as the analogue of the
Dehn-Sommerville equations for $\fF$.

\begin{lemma}
  For all $0\le{}k\le{}d+1$ we have:
  \begin{align}
    h_k(\bx{}Q)&=h_k(\fF)+h_k(\bx{P_1})+h_k(\bx{P_2}),\label{equ:hkF}\\
    h_k(\kK)&=h_k(\fF)+g_k(\bx{P_1})+g_k(\bx{P_2}),\label{equ:hkK}\\
    h_k(\kK_j)&=h_k(\kK)+h_{k-1}(\bx{P_j}),\qquad\qquad\qquad j=1,2,
    \label{equ:hkKj}\\
    h_{d+1-k}(\fF)&=h_k(\fF)+g_k(\bx{}P_1)+g_k(\bx{}P_2).\label{equ:hFDS}
  \end{align}
\end{lemma}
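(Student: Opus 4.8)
The plan is to derive the four identities one at a time, moving from the ones that follow directly from counting $k$-faces to the Dehn--Sommerville-like relation, which should be obtained as a corollary of the others together with the Dehn--Sommerville equations for $\bx{}Q$, $\bx{}P_1$ and $\bx{}P_2$. Throughout I will use that $h_k$ is a fixed linear functional of the $f$-vector (via \eqref{equ:hQ}, \eqref{equ:hF}), so every linear relation among $f$-vectors transfers verbatim to the corresponding $h$-vectors.

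For \eqref{equ:hkF}: the set of $(k{-}1)$-faces of $\bx{}Q$ is the disjoint union of the $(k{-}1)$-faces of $\fF$, $\sS_1$ and $\sS_2$ only for the purpose of \eqref{equ:fkQ}; but I actually want to express $f_{i-1}(\bx{}Q)$ differently. Here I would instead combine \eqref{equ:fkQ1} with the combinatorial identity $\sum_{i=0}^k(-1)^{k-i}\binom{d+1-i}{d+1-k}\big(f_{i-1}(\bx{}P_j)+f_{i-2}(\bx{}P_j)\big)$, and observe that shifting the index by one in the $f_{i-2}$ term together with Pascal's rule $\binom{d+1-i}{d+1-k}+\binom{d-i}{d+1-k}=\binom{d+2-i}{d+2-k}$ collapses the sum to exactly the $h$-vector of $\bx{}P_j$ as a $d$-polytope (whose $h$-vector uses binomials $\binom{d-i}{d-k}$). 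Carrying this out for $j=1,2$ and substituting into \eqref{equ:hQ} applied to $\bx{}Q$ yields $h_k(\bx{}Q)=h_k(\fF)+h_k(\bx{P_1})+h_k(\bx{P_2})$. The identity \eqref{equ:hkK} is proved the same way starting from \eqref{equ:fkK}: there the $f$-relation has no index-shifted term, so the Pascal collapse produces $h_k(\fF)+\big(h_k(\bx{P_j})-h_{k-1}(\bx{P_j})\big)=h_k(\fF)+g_k(\bx{P_j})$, using the definition $g_k=h_k-h_{k-1}$ extended by $h_{d+1}=0$. For \eqref{equ:hkKj}, $\kK_j$ is obtained from $\kK$ by adjoining the faces of $\sS_j$ that are not already in $\kK$, i.e. exactly the faces of $\sS_j$ containing $y_j$; by the bijection noted before \eqref{equ:fkS} these number $f_{k-1}(\bx{}P_j)$, so $f_{k-1}(\kK_j)=f_{k-1}(\kK)+f_{k-2}(\bx{}P_j)$, and the same Pascal-shift argument (now the shift is by one inside the $h$ operator, which is precisely what turns $f_{k-2}(\bx{}P_j)$ into $h_{k-1}(\bx{}P_j)$) gives $h_k(\kK_j)=h_k(\kK)+h_{k-1}(\bx{}P_j)$.

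Finally, for the Dehn--Sommerville-like equation \eqref{equ:hFDS}: since $\kK_j$ is a shellable pure simplicial $d$-complex whose underlying space is a $d$-ball (it is a proper subcomplex of $\bx{}Q\cong S^d$ obtained by removing the star of a single vertex, hence a shellable ball), I can invoke the Dehn--Sommerville relations for the $h$-vector of a simplicial $d$-ball, or more elementarily compare $h_k(\bx{}Q)$ with $h_k$ of the two complementary shelling-pieces. Concretely, $\bx{}Q$ is partitioned, as a shelled complex, into $\kK_1$ and $\sS_2$; since $\sS_2=\str[y_2]{\bx{}Q}$ is a cone, $h_{d+1}(\sS_2)=0$ and more generally its $h$-vector matches $h_{k-1}(\bx{}P_2)$ after accounting for the apex. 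I would phrase this as: apply \eqref{equ:hkF} to $\bx{}Q$, apply \eqref{equ:hkK}, subtract, and use the ordinary Dehn--Sommerville equations $h_k(\bx{}Q)=h_{d+1-k}(\bx{}Q)$ for the simplicial $(d{+}1)$-polytope $Q$, $h_k(\bx{}P_j)=h_{d-k}(\bx{}P_j)$, together with $g_k=h_k-h_{k-1}$; the terms $h_k(\bx{}P_j)$ on the two sides will recombine into $g_k(\bx{}P_j)$ and $g_{d+1-k}(\bx{}P_j)=-g_k(\bx{}P_j)$, and what survives is exactly $h_{d+1-k}(\fF)=h_k(\fF)+g_k(\bx{}P_1)+g_k(\bx{}P_2)$.

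The main obstacle I anticipate is bookkeeping at the two ``ends'' of the index range, $k=0$ and $k=d+1$, where the conventions $f_{-1}(\fF)=-1$ (rather than $+1$), $f_d(\bx{}P_j)=0$, $f_{-1}(\bx{}P_j)=1$ and $h_{d+1}(\bx{}P_j)=0$ all interact; one has to check that the Pascal-rule collapses and the index shifts still produce the clean right-hand sides there, and in particular that the nonstandard value $f_{-1}(\fF)=-1$ is precisely what makes $\mb{h}(\fF)$ a genuine $h$-vector compatible with \eqref{equ:hkF}--\eqref{equ:hkK}. A secondary subtlety is justifying that $\kK_1$ (equivalently $\kK_2$) really is a shelled $d$-ball so that its $h$-vector obeys the ball Dehn--Sommerville relations — but the excerpt has already established shellability of $\kK_1$ by exhibiting the appropriate line shelling of $\bx{}Q$, so I can simply cite that construction.
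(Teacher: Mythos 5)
Your proposal is correct and follows essentially the same route as the paper: the index-shift/Pascal collapses you describe for \eqref{equ:hkF}--\eqref{equ:hkKj} are exactly what the paper packages into its summation operator $\SSS_k(\cdot;\delta,\nu)$ (evaluated in the appendix, with the convention $f_{-1}(\fF)=-1$ making the boundary terms cancel, as you anticipate), and \eqref{equ:hFDS} is obtained precisely as in your final phrasing, from \eqref{equ:hkF} together with the classical Dehn--Sommerville equations for $\bx{}Q$ and $\bx{}P_j$ --- the detour through \eqref{equ:hkK} or ball Dehn--Sommerville for $\kK_j$ is not needed. One small correction: the Pascal identity you quote should be $\binom{d+1-i}{d+1-k}=\binom{d-i}{d+1-k}+\binom{d-i}{d-k}$, applied after the sign flip coming from the index shift (the identity as you wrote it, $\binom{d+1-i}{d+1-k}+\binom{d-i}{d+1-k}=\binom{d+2-i}{d+2-k}$, is false), but the outcome you state for each collapse is the right one.
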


\begin{proof}
  Let $\yY$ denote either $\fF$ or a pure simplicial subcomplex of
  $\bx{}Q$. We define the operator $\SSS_{k}(\cdot;\delta,\nu)$ whose
  action on $\yY$ is as follows:
  \begin{equation}
    \SSS_{k}(\yY;\delta,\nu)
    =\sum_{i=1}^{\delta}(-1)^{k-i}\binom{\delta-i}{\delta-k}f_{i-\nu}(\yY).
  \end{equation}
  It is easy to verify\footnote{See Section \ref{app:sumop} of the Appendix
    for detailed derivations.} that if $\yY$ is $\delta$-dimensional (this
  includes the case $\yY\equiv\fF$), then
  \begin{equation}\label{equ:sum-d-1-a}
    \SSS_{k}(\yY;\delta,1)=h_k(\yY)-(-1)^k\binom{\delta}{\delta-k}f_{-1}(\yY).
  \end{equation}
  while if $\yY$ is $(\delta-1)$-dimensional, then
  \begin{align}
    \SSS_{k}(\yY;\delta,1)
    &=h_k(\yY)-h_{k-1}(\yY)-(-1)^k\binom{\delta}{\delta-k}f_{-1}(\yY),
    \text{ and}\label{equ:sum-d-1-b}\\
    \SSS_{k}(\yY;\delta,2)&=h_{k-1}(\yY).\label{equ:sum-d-2}
  \end{align}
  Applying the operator $\SSS_k(\cdot;d+1,1)$ to $\bx{}Q$ and using
  relation \eqref{equ:fkQ1} we get:
  \begin{equation}\label{equ:sums}
    \begin{aligned}
      \SSS_{k}(\bx{}Q;d+1,1)=\ &\SSS_k(\fF;d+1,1)
      +\SSS_k(\bx{}P_1;d+1,1)+\SSS_k(\bx{}P_1;d+1,2)\\
      &+\SSS_k(\bx{}P_2;d+1,1)+\SSS_k(\bx{}P_2;d+1,2).
    \end{aligned}
  \end{equation}
  Substituting in \eqref{equ:sums}, using relations
  \eqref{equ:sum-d-1-a}-\eqref{equ:sum-d-2}, we get:
  \begin{equation*}
    \begin{aligned}
      h_k(\bx{}Q)-(-1)^k&\binom{d+1}{d+1-k}f_{-1}(\bx{}Q)
      =\left[h_k(\fF)-(-1)^k\binom{d+1}{d+1-k}f_{-1}(\fF)\right]\\
      &+\left[h_k(\bx{}P_1)-h_{k-1}(\bx{}P_1)
        -(-1)^k\binom{d+1}{d+1-k}f_{-1}(\bx{}P_1)\right]
      +h_{k-1}(\bx{}P_1)\\
      &+\left[h_k(\bx{}P_2)-h_{k-1}(\bx{}P_2)
        -(-1)^k\binom{d+1}{d+1-k}f_{-1}(\bx{}P_2)\right]
      +h_{k-1}(\bx{}P_2).
    \end{aligned}
  \end{equation*}
  Given that $f_{-1}(\bx{}Q)=f_{-1}(\bx{}P_1)=f_{-1}(\bx{}P_2)=1$, and
  $f_{-1}(\fF)=-1$, the above equality simplifies to relation
  \eqref{equ:hkF}.

  Recall that the set of $k$-faces of $\kK$ is the disjoint union
  of the $k$-faces of $\fF$, the $k$-faces of $\bx{}P_1$, and the
  $k$-faces of $\bx{}P_2$. Applying the operator
  $\SSS_{k}(\cdot;d+1,1)$ to $\kK$, and using relation \eqref{equ:fkK}
  we get:
  \begin{equation*}
    h_k(\kK)=h_k(\fF)+h_k(\bx{P_1})-h_{k-1}(\bx{P_1})
    +h_k(\bx{P_2})-h_{k-1}(\bx{P_2}),\qquad 0\le{}k\le{}d+1,
  \end{equation*}
  which reduces to relation \eqref{equ:hkK} if we replace the
  difference $h_k(\cdot)-h_{k-1}(\cdot)$ by the corresponding element
  of $\mb{g}(P_j)$.

  The $k$-faces of $\kK_j$, $j=1,2$, are either $k$-faces of
  $\kK$ or $k$-faces of the star $\sS_j$ of $y_j$ that contain
  $y_j$. The latter faces are in one-to-one correspondence with the
  $(k-1)$-faces of $\bx{}P_j$, i.e., we get:
  \begin{equation}\label{equ:fkKj}
    f_k(\kK_j)=f_k(\kK)+f_{k-1}(\bx{}P_j),\qquad 0\le{}k\le{}d.
  \end{equation}
  Once again, applying the operator $\SSS_{k}(\cdot;d+1,1)$ to
  $\kK_j$, and using relation \eqref{equ:fkKj} we get relation
  \eqref{equ:hkKj}.

  We end the proof of this lemma by proving relations \eqref{equ:hFDS}.
  Since $Q$ is a simplicial $(d+1)$-polytope, and $P_1$, $P_2$ are
  simplicial $d$-polytopes, the Dehn-Sommerville equations for these
  polytopes hold. More precisely:
  \begin{equation}\label{equ:DSF1}
    \begin{aligned}
      h_{d+1-k}(\bx{}Q)&=h_k(\bx{}Q), &\quad{}0\le{}k\le{}d+1,\\
      h_{d-k}(\bx{}P_j)&=h_k(\bx{}P_j),
      & 0\le{}k\le{}d,&\qquad j=1,2.
    \end{aligned}
  \end{equation}
  Combining the above relations with \eqref{equ:hkF} we get, for all
  $0\le{}k\le{}d+1$:
  \begin{equation}
    h_{d+1-k}(\fF)+h_{d+1-k}(\bx{}P_1)+h_{d+1-k}(\bx{}P_2)
    =h_k(\fF)+h_k(\bx{}P_1)+h_k(\bx{}P_2),
  \end{equation}
  or, equivalently:
  \begin{equation}
    h_{d+1-k}(\fF)+h_{k-1}(\bx{}P_1)+h_{k-1}(\bx{}P_2)
    =h_k(\fF)+h_k(\bx{}P_1)+h_k(\bx{}P_2),
  \end{equation}
  which finally gives:
  \begin{equation*}
    h_{d+1-k}(\fF)=h_k(\fF)+g_k(\bx{}P_1)+g_k(\bx{}P_2).
  \end{equation*}
  In the equations above, 
  $g_{0}(\bx{}P_j)=-g_{d+1}(\bx{}P_j)=1$, $j=1,2$.
\end{proof}

Recall that the main goal in this section is to derive upper bounds
for the elements of $\mb{h}(\fF)$. The most critical step toward
this goal is the recurrence inequality for the elements of
$\mb{h}(\fF)$ described in the following lemma.

\begin{lemma}\label{lem:hFrecur}
  For all $0\le{}k\le{}d$,
  \begin{equation}\label{equ:hkFrecur}
    h_{k+1}(\fF)\le\frac{n_1+n_2-d-1+k}{k+1}\,h_k(\fF)
    +\frac{n_1}{k+1}\,g_k(\bx{P_2})+\frac{n_2}{k+1}\,g_k(\bx{P_1}).
  \end{equation}
\end{lemma}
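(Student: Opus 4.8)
The plan is to count, for the complex $\kK_j$, the number of facets whose restriction set has size $k+1$, using the fact that this number equals $h_{k+1}(\kK_j)$ independently of the shelling chosen. Pick a shelling of $\kK_j$ that begins with the subcomplex $\kK$ (recall $\kK_j$ is shellable, and in fact one may shell the star $\sS_j$ last, so that $\kK$ is shelled first); then the facets of $\kK_j$ split into those of $\kK$ and those of $\sS_j$ containing $y_j$, the latter being in bijection with the $d$-faces of $\bx{}P_j$, i.e.\ with the facets of $P_j$. Counting facets of restriction size $k+1$ over these two families and using \eqref{equ:hkKj}, $h_{k+1}(\kK_j)=h_{k+1}(\kK)+h_k(\bx{}P_j)$, reduces the problem to a bound on $h_{k+1}(\kK)$, which by \eqref{equ:hkK} is $h_{k+1}(\fF)+g_{k+1}(\bx{}P_1)+g_{k+1}(\bx{}P_2)$.

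To bound $h_{k+1}(\kK)$ I would adapt McMullen's argument from the Upper Bound Theorem. The key combinatorial inequality there relates $h_{k+1}$ of a simplicial complex to $h_k$ via the link structure: a facet $F$ of $\kK$ with restriction of size $k+1$ contains at least $k+1$ vertices $v$ such that, in the vertex figure / link of $v$, the facet $F\sm\{v\}$ has restriction of size $k$ in the induced shelling of $\link[v]{\kK}$. Double counting pairs $(F,v)$ with $|R(F)|=k+1$, $v\in R(F)$ on one hand, and $(F',v)$ with $F'$ a facet of $\link[v]{\kK}$ of restriction size $k$ on the other, gives
\begin{equation*}
  (k+1)\,h_{k+1}(\kK)\le\sum_{v}h_k(\link[v]{\kK}).
\end{equation*}
For $v$ a vertex of $\kK$ not on the ``boundary'' created by $\bx{}P_1$, $\bx{}P_2$, the link $\link[v]{\kK}$ is a shellable $(d-1)$-complex of the same type as $\fF$/$\kK$ but one dimension down, so one bounds $h_k(\link[v]{\kK})\le h_k(\kK_v)$ for an appropriate auxiliary complex, or more directly uses that $h_k$ of a shellable $(d-1)$-ball is at most $h_k$ of the corresponding closed complex; the point is that each such $h_k$ is controlled by $h_k(\fF)$ together with $g$-vector terms of the links of $v$ inside $\bx{}P_1$ or $\bx{}P_2$. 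Summing the link contributions: the vertices fall into those of $P_1$ (there are $n_1$ of them) and those of $P_2$ ($n_2$ of them); the ``bulk'' term $\sum_v h_k(\cdot)$ contributes $(n_1+n_2-d-1+k)\,h_k(\fF)$ after collecting the $h_k(\fF)$ pieces — the shift $-d-1+k$ arising exactly as in McMullen's bound $\sum_v h_k(\text{link}) = (d+1-\text{something})h_k + \dots$ for $d$-complexes — while the vertices of $P_j$ contribute the $g$-vector correction $n_i\,g_k(\bx{}P_j)$ for $\{i,j\}=\{1,2\}$, since $\sum_{v\in V(P_j)}g_k(\link[v]{\bx{}P_j})$ telescopes through the standard identity $\sum_{v}g_k(\link(v,\partial P_j)) = (\text{linear in }n_j,k)\,g_k(\bx{}P_j)$, and it is the \emph{other} polytope's vertex count $n_i$ that multiplies it because those are the vertices of $\kK$ whose link in $\kK$ ``sees'' the copy of $\bx{}P_j$.

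After assembling these pieces one obtains
\begin{equation*}
  (k+1)\,h_{k+1}(\kK)\le (n_1+n_2-d-1+k)\,h_k(\fF)+n_1\,g_k(\bx{}P_2)+n_2\,g_k(\bx{}P_1)+(\text{$g_{k+1}$ terms}),
\end{equation*}
and subtracting $g_{k+1}(\bx{}P_1)+g_{k+1}(\bx{}P_2)$ via \eqref{equ:hkK} (these cancel against the analogous terms produced on the right, using $g_{k+1}=h_{k+1}-h_k$ for the $P_j$) yields precisely \eqref{equ:hkFrecur}. The main obstacle I anticipate is the bookkeeping in the link-summation step: correctly identifying $h_k(\link[v]{\kK})$ for the two kinds of vertices, making sure the shelling of $\kK$ used to count restrictions induces a genuine shelling on each vertex link, and verifying that the $g$-vector telescoping over $\partial P_j$ produces exactly the coefficients $n_i/(k+1)$ rather than some other linear combination — this is where the asymmetry (the $n_1$ pairing with $\bx{}P_2$ and vice versa) has to be pinned down carefully, presumably by a direct count of how many facets of $\kK$ through a vertex $v\in V(P_1)$ meet the copy of $\bx{}P_2$.
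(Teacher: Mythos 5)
Your proposal has the right ingredients in outline (McMullen-style link counting, plus the identities \eqref{equ:hkK} and \eqref{equ:hkKj}), but the two steps that carry all the difficulty are missing, and as written they do not go through. First, the inequality $(k+1)\,h_{k+1}(\kK)\le\sum_{v}h_k(\link{\kK})$ is not justified: McMullen's relation \eqref{equ:hkrelP} is an exact identity for \emph{polytope boundaries} and contains the extra term $(d-k)h_k$; the complex $\kK$ is a ball, not a sphere, and a McMullen-type relation for it would need boundary corrections you never derive. (This also undercuts your later claim that the shift $-d-1+k$ ``arises exactly as in McMullen's bound'': in the paper that shift comes from the term $(d+1-k)h_k(\fF)$ sitting on the \emph{left} of an identity obtained by applying \eqref{equ:hkrelP} to the genuine polytopes $\bx{}Q$, $\bx{}P_1$, $\bx{}P_2$ and subtracting, which yields \eqref{equ:hkFrecur1}; if you drop that term from the left you cannot recover it on the right.) The paper's device of adding the apexes $y_1,y_2$ and working with $\bx{}Q$ exists precisely so that only honest polytope boundaries ever enter McMullen's identity.

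Second, and more seriously, the summation-over-links step is exactly the content of the lemma and you leave it as an assertion. After reducing to \eqref{equ:hkFrecur1}, what must be proved is the \emph{difference} bound
$h_k(\link{\kK_j})-h_k(\link{\bx{}P_j})\le h_k(\kK_j)-h_k(\bx{}P_j)$
for every $v\in V_j$; bounding the two terms separately fails because the second appears with a negative sign. The paper establishes this by a delicate argument with a single compatible line shelling of $\bx{}Q$ that shells $\str{\bx{}Q}$ first and $\str[y_2]{\bx{}Q}$ last, tracking simultaneously the induced shellings of $\kK_1$, $\link{\kK_1}$, $\bx{}P_1$ and $\link{\bx{}P_1}$ and comparing restriction sets facet by facet. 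Your proposal replaces this with ``$h_k$ of a shellable ball is at most $h_k$ of the corresponding closed complex'' and a ``telescoping'' identity $\sum_v g_k(\link(v,\partial P_j))=(\cdots)\,g_k(\bx{}P_j)$ that is neither stated precisely nor proved, and you yourself flag the pairing of $n_1$ with $g_k(\bx{}P_2)$ as something to be ``pinned down presumably by a direct count.'' In the paper that pairing is not a telescoping phenomenon at all: it drops out of the exact identities $h_k(\kK_1)-h_k(\bx{}P_1)=h_k(\fF)+g_k(\bx{P_2})$ and $h_k(\kK_2)-h_k(\bx{}P_2)=h_k(\fF)+g_k(\bx{P_1})$ (i.e.\ \eqref{equ:hkK} combined with \eqref{equ:hkKj}), once the difference bound above is available. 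So the skeleton is compatible with the paper's proof, but the argument as proposed has a genuine gap at its core.
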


\begin{proof}
  Let us denote by $V$ the vertex set of $\bx{Q}$, and by $V_j$ the
  vertex set of $\bx{P_j}$, $j=1,2$. Let $\link{\yY}$ be a shorthand
  for $\linkfull{\yY}$, where $v$ is a vertex of $\yY$, and $\yY$
  stands for either $\kK_1$ or $\kK_2$, or the boundary complex
  of a simplicial polytope.

  McMullen \cite{m-mnfcp-70} in his original proof of the Upper Bound
  Theorem for polytopes proved that for any $d$-polytope $P$ the
  following relation holds:
  \begin{equation}\label{equ:hkrelP}
    (k+1)h_{k+1}(\bx{P})+(d-k)h_k(\bx{P})
    =\sum_{v\in\text{vert}(\bx{P})}h_k(\link{\bx{P}}),\qquad 0\le{}k\le{}d-1.
  \end{equation}
  Furthermore, we have $h_k(\link{\bx{}P})\le{}h_k(\bx{}P)$. To see this
  consider a shelling of $\bx{}P$ that shells $\str{\bx{}P}$ first. The
  contributions to $h_k(\bx{}P)$ coincide with the contributions to
  $h_k(\link{\bx{}P})$ during the shelling of $\str{\bx{}P}$. After
  the shelling has left $\str{\bx{}P}$ we get no more contributions to
  $h_k(\link{\bx{}P})$, whereas we may get contributions to
  $h_k(\bx{}P)$. Therefore:
  \begin{equation}\label{equ:sumlinkbound}
    \sum_{v\in\text{vert}(\bx{}P)}h_k(\link{\bx{}P})\le{}f_0(\bx{}P)h_k(\bx{}P),
    \qquad 0\le{}k\le{}d-1.
  \end{equation}

  Applying relation \eqref{equ:hkrelP} to $Q$, $P_1$ and
  $P_2$ we get the following relations:
  \begin{align}
    \label{equ:hkQrel}
    (k+1)h_{k+1}(\bx{Q})+(d+1-k)h_k(\bx{Q})
    &=\sum_{v\in{}V}h_k(\link{\bx{Q}}), &0\le{}k\le{}d.\\
    \label{equ:hkP1rel}
    (k+1)h_{k+1}(\bx{P_1})+(d-k)h_k(\bx{P_1})
    &=\sum_{v\in{}V_1}h_k(\link{\bx{P_1}}), &0\le{}k\le{}d-1.\\
    \label{equ:hkP2rel}
    (k+1)h_{k+1}(\bx{P_2})+(d-k)h_k(\bx{P_2})
    &=\sum_{v\in{}V_2}h_k(\link{\bx{P_2}}), &0\le{}k\le{}d-1.
  \end{align}
  Recall that the link of $y_j$ in $\bx{Q}$ is $\bx{P_j}$, $j=1,2$,
  and observe that the link  of $v\in{}V_j$ in $\bx{}Q$ coincides with
  $\link{\kK_j}$.
  Expanding relation \eqref{equ:hkQrel} by means of relation
  \eqref{equ:hkF}, 
  we deduce:
  \begin{equation}
    \begin{aligned}
      (k+1)&[h_{k+1}(\fF)+h_{k+1}(\bx{P_1})+h_{k+1}(\bx{P_2})]
      +(d+1-k)[h_k(\fF)+h_k(\bx{P_1})+h_k(\bx{P_2})]=\\
      &=(k+1)h_{k+1}(\fF)+(d+1-k)h_k(\fF)
      +(k+1)h_{k+1}(\bx{P_1})+(d-k)h_k(\bx{P_1})\\
      &\qquad
      +(k+1)h_{k+1}(\bx{P_2})+(d-k)h_k(\bx{P_2})
      +h_k(\bx{P_1})+h_k(\bx{P_2})\\
      &=\sum_{v\in{}V}h_k(\link{\bx{Q}})
      =h_k(\link[y_1]{\bx{Q}})+h_k(\link[y_2]{\bx{Q}})
      +\sum_{v\in{}V_1\cup{}V_2}h_k(\link{\bx{Q}})\\
      &=h_k(\bx{P_1})+h_k(\bx{P_2})
      +\sum_{v\in{}V_1}h_k(\link{\kK_1})+\sum_{v\in{}V_2}h_k(\link{\kK_2}).
    \end{aligned}
  \end{equation}
  Utilizing relations \eqref{equ:hkP1rel} and \eqref{equ:hkP2rel}, the
  above equation is equivalent to:
  \begin{equation}\label{equ:hkFrecur1}
    (k+1)h_{k+1}(\fF)+(d+1-k)h_k(\fF)
    =\sum_{v\in{}V_1}[h_k(\link{\kK_1})-h_k(\link{\bx{}P_1})]
    +\sum_{v\in{}V_2}[h_k(\link{\kK_2})-h_k(\link{\bx{}P_2})].
  \end{equation}

  Let us now consider a vertex $v\in{}V_1$, and a shelling
  $\shell{\bx{}Q}$ of $\bx{}Q$ that shells $\str{\bx{}Q}$ first and
  $\str[y_2]{\bx{}Q}$ last. Such a shelling does exit: consider a
  point $v'$ (\resp $y_2'$) beyond $v$ (\resp $y_2$) such that the
  line $\ell$ defined by $v'$ and $y_2'$ does not pass through $v$ and
  $y_2$. Call $v''$ and $y_2''$ the points of intersection of $\ell$
  with $\bx{}Q$, and notice that, since $v$ and $y_2$ are not visible
  to each other, the only points of intersection of $\ell$ with
  $\bx{}Q$ are the points $v''$ and $y_2''$. 
  The shelling $\shell{\bx{}Q}$ is the line shelling of $\bx{}Q$
  induced by $\ell$ when we move from $v''$ away from $\bx{}Q$ towards
  $+\infty$, and then from $-\infty$ to $y_2''$.
  Notice that $\shell{\bx{}Q}$ induces a shelling $\shell{\kK_1}$ for
  $\kK_1$ that shells $\str{\kK_1}$ first (any shelling of $\bx{}Q$,
  that shells $\str[y_2]{\bx{}Q}$ last, induces a shelling for
  $\kK_1$, where the order of the facets of $\kK_1$ in this shelling
  is the same as their order in the shelling of $\bx{}Q$).
  On the other hand, $\shell{\kK_1}$ also induces
  (cf. \cite[Lemma 8.7]{z-lp-95}):
  \begin{enumerate}
  \item
    a shelling $\shell{\link{\kK_1}}$ for $\link{\kK_1}$, and
  \item
    a shelling $\shell{\bx{}P_1}$ for $\bx{}P_1$ that shells
    $\str{\bx{}P_1}$ first (recall that
    $\bx{}P_1\equiv\link[y_1]{\bx{}Q}\equiv\link[y_1]{\kK_1}$),
  \end{enumerate}
  while $\shell{\bx{}P_1}$ induces a shelling
  $\shell{\link{\bx{}P_1}}$ for $\link{\bx{}P_1}$ (again,
  cf. \cite[Lemma 8.7]{z-lp-95}).
  The interested reader may refer to
  Figs. \ref{fig:K1shelling1}--\ref{fig:K1+P1-shellings2},
  where we show a shelling $\shell{\kK_1}$ of $\kK_1$ that shells
  $\str{\kK_1}$ first, along with the induced shellings
  $\shell{\link{\kK_1}}$ and $\shell{\bx{}P_1}$.
  In particular, Figs. \ref{fig:K1shelling1}--\ref{fig:K1shelling3}
  show the step-by-step construction of $\kK_1$ from
  $\shell{\kK_1}$.
  Fig. \ref{fig:K1+linkv-shellings} shows the
  step-by-step construction of $\str{\kK_1}$ from $\shell{\kK_1}$,
  as well as the corresponding induced construction of
  $\link{\kK_1}$ from the induced shelling $\shell{\link{\kK_1}}$.
  Finally, Figs. \ref{fig:K1+P1-shellings1} and
  \ref{fig:K1+P1-shellings2} show the step-by-step construction
  of $\bx{}P_1$ from the shelling $\shell{\bx{}P_1}$ induced by
  $\shell{\kK_1}$, along with the corresponding steps of the
  construction of $\kK_1$ from $\shell{\kK_1}$, \ie we only depict the
  steps of $\shell{\kK_1}$ that induce facets of $\shell{\bx{}P_1}$.

  \ShellingFigs

  Let $F$ be a facet in $\shell{\kK_1}$. If $F$ induces a facet for
  $\shell{\link{\kK_1}}$, denote by $\link{F}$ this facet of
  $\link{\kK_1}$. Similarly, if $F$ induces a facet for
  $\shell{\bx{}P_1}$, call $F_1$ this facet of $\bx{}P_1$. Finally, if
  $F_1$ induces a facet for $\shell{\link{\bx{}P_1}}$, let
  $\link{F_1}$ be this facet of $\link{\bx{}P_1}$.
  Let $G\subseteq{}F$, $\link{G}\subseteq{}\link{F}$,
  $G_1\subseteq{}F_1$ and $\link{G_1}\subseteq\link{F_1}$ be
  the minimal new faces associated with $F$, $\link{F}$, $F_1$ and
  $\link{F_1}$ in the corresponding shellings, let $\lambda$ be the
  cardinality of $G$, and observe that $F_1=F\cap\bx{}P_1$, 
  $\link{F_1}=(\link{F})\cap\bx{}P_1$, $G_1=G\cap\bx{}P_1$ and
  $\link{G_1}=(\link{G})\cap\bx{}P_1$. As long as we shell
  $\str{\kK_1}$, $G$ induces $\link{G}$, and, in fact, the faces $G$
  and $\link{G}$ coincide (see also
  Fig. \ref{fig:K1+linkv-shellings}): if $F$ is the first facet in
  $\shell{\kK_1}$, then $G\equiv\link{G}\equiv\emptyset$; otherwise,
  $v$ cannot be a vertex in $G$ or $\link{G}$ (the minimal new
  faces are faces of $\link{\kK_1}$).
  Similarly, as long as we shell $\str{\bx{}P_1}$, $G_1$ induces
  $\link{G_1}$, and, in fact, the faces $G_1$ and $\link{G_1}$
  coincide: if $F_1$ is the first facet in
  $\shell{\bx{}P_1}$, then $G_1\equiv\link{G_1}\equiv\emptyset$;
  otherwise, $v$ cannot be a vertex in $G_1$ or $\link{G_1}$ (the
  minimal new faces are faces of $\link{\bx{}P_1}$).
  Hence, as long as we shell $\str{\kK_1}$ (\ie as long as $v\in{}F$),
  we have $h_k(\link{\kK_1})=h_k(\kK_1)$ and
  $h_k(\link{\bx{P_1}})=h_k(\bx{P_1})$, for all $k\ge{}0$, and, thus,
  $h_k(\link{\kK_1})-h_k(\link{\bx{P_1}})=h_k(\kK_1)-h_k(\bx{P_1})$,
  for all $k\ge{}0$.
  After the shelling $\shell{\kK_1}$ has left $\str{\kK_1}$,
  there are no more facets in $\shell{\link{\kK_1}}$.
  This implies that, after $\shell{\kK_1}$ has left $\str{\kK_1}$ (\ie
  $v$ is not a vertex of $F$ anymore), the values of
  $h_k(\link{\kK_1})$ and $h_k(\link{\bx{P_1}})$ remain unchanged for
  all $k\ge{}0$. However,
  the values of $h_k(\kK_1)$ and $h_k(\bx{P_1})$ may increase for some
  $k$.
  More precisely, if $F$ does not induce any facet for
  $\shell{\bx{}P_1}$, then $h_\lambda(\kK_1)$ is increased by one,
  $h_k(\kK_1)$ does not change for $k\ne\lambda$, while
  $h_k(\bx{P_1})$ remains unchanged for all $k\ge{}0$.
  Thus, $h_\lambda(\link{\kK_1})-h_\lambda(\link{\bx{P_1}})<
  h_\lambda(\kK_1)-h_\lambda(\bx{P_1})$, while
  $h_k(\link{\kK_1})-h_k(\link{\bx{P_1}})\le{}h_k(\kK_1)-h_k(\bx{P_1})$,
  for all $k\ne\lambda$.
  If, however, $F$ induces $F_1$, then the minimal new face $G_1$ in
  $\shell{\bx{}P_1}$ due to $F_1$ coincides with $G$ (see also
  Figs. \ref{fig:K1+P1-shellings1} and \ref{fig:K1+P1-shellings2}).
  To vefiry this, suppose $G_1\subset{}G$; since
  $G$ is the minimal new face in $\shell{\kK_1}$, $G_1$ would have
  been a face already ``discovered'' at a previous step of
  $\shell{\kK_1}$, and thus also at a previous step of
  $\shell{\bx{}P_1}$, which contradicts the fact that $G_1$ is the
  minimal new face for $\shell{\bx{}P_1}$.
  Therefore, in this case, both $h_\lambda(\kK_1)$ and
  $h_\lambda(\bx{P_1})$ are increased by one, while $h_k(\kK_1)$
  and $h_k(\bx{P_1})$ remain unchanged for all $k\ne\lambda$.
  This implies
  $h_k(\link{\kK_1})-h_k(\link{\bx{P_1}})\le{}h_k(\kK_1)-h_k(\bx{P_1})$,
  for all $k\ge{}0$.
  Summarizing the analysis above, we deduce that for all $v\in{}V_1$,
  and for all $0\le{}k\le{}d$, we have:
  \begin{equation}\label{equ:linkbound1}
    h_k(\link{\kK_1})-h_k(\link{\bx{P_1}})\le{}h_k(\kK_1)-h_k(\bx{P_1}).
  \end{equation}
  Using the analogous argument for all vertices of $V_2$, we also get
  that, for all $v\in{}V_2$, and for all $0\le{}k\le{}d$:
  \begin{equation}\label{equ:linkbound2}
    h_k(\link{\kK_2})-h_k(\link{\bx{P_2}})\le{}h_k(\kK_2)-h_k(\bx{P_2}).
  \end{equation}
  Now, combining relation \eqref{equ:hkK} with relations
  \eqref{equ:hkKj}, yields
  \begin{align}
    h_k(\kK_1)-h_k(\bx{}P_1)=h_k(\fF)+g_k(\bx{P_2}),\label{equ:hkK1}\\
    h_k(\kK_2)-h_k(\bx{}P_2)=h_k(\fF)+g_k(\bx{P_1}),\label{equ:hkK2}
  \end{align}
  Thus, by applying relation \eqref{equ:linkbound1}, and using
  relation \eqref{equ:hkK1}, we get for every vertex $v\in{}V_1$:
  \begin{equation}
    \sum_{v\in{}V_1}[h_k(\link{\kK_1})-h_k(\link{\bx{P_1}})]
    \le\sum_{v\in{}V_1}[h_k(\kK_1)-h_k(\bx{P_1})]
    =n_1[h_k(\fF)+g_k(\bx{P_2})],
  \end{equation}
  Similarly, applying relation \eqref{equ:linkbound2}, and using
  relation \eqref{equ:hkK2}, we get for every vertex $v\in{}V_2$:
  \begin{equation}
    \sum_{v\in{}V_2}[h_k(\link{\kK_2})-h_k(\link{\bx{P_2}})]
    \le\sum_{v\in{}V_2}[h_k(\kK_2)-h_k(\bx{P_2})]
    =n_2[h_k(\fF)+g_k(\bx{P_1})].
  \end{equation}
  We thus arrive at the following inequality, for $0\le{}k\le{}d$:
  \begin{equation}\label{equ:hkFrecur0}
    (k+1)h_{k+1}(\fF)+(d+1-k)h_k(\fF)
    \le(n_1+n_2)h_k(\fF)+n_1 g_k(\bx{P_2})+n_2 g_k(\bx{P_1}),
  \end{equation}
  which gives the recurrence inequality in the statement of the lemma.
\end{proof}

Using the recurrence relation from Lemma \ref{lem:hFrecur} we get the
following bounds on the elements of $\mb{h}(\fF)$.

\begin{lemma}\label{lem:hFbound}
  For all $0\le{}k\le{}d+1$,
  \begin{equation}\label{equ:hkFbound}
    h_k(\fF)\le{}\binom{n_1+n_2-d-2+k}{k}-\binom{n_1-d-2+k}{k}
    -\binom{n_2-d-2+k}{k}.
  \end{equation}
Equality holds for all $k$ with $0\le{}k\le{}l$ if and only if
$l\le\lexp{d+1}$ and $P$ is \bn{l}.
\end{lemma}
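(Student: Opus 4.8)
The plan is to prove the inequality \eqref{equ:hkFbound} by induction on $k$, feeding the recurrence of Lemma~\ref{lem:hFrecur} with the Upper Bound Theorem (Corollary~\ref{cor:UBT}) for the summands, and then to settle the equality statement by translating it into a statement about the face numbers of $\fF$, to which Lemma~\ref{lem:fbBbound} applies. Write $H_k$ for the right-hand side of \eqref{equ:hkFbound}. Directly from \eqref{equ:hF} one gets $h_0(\fF)=f_{-1}(\fF)=-1=H_0$ and $h_1(\fF)=d+1=H_1$, which serve as base cases. For the inductive step, assume $h_k(\fF)\le H_k$ for some $0\le k\le d$; by Corollary~\ref{cor:UBT} we have $g_k(\bx{P_j})\le\binom{n_j-d-2+k}{k}$ for $j=1,2$ (for $k>\lexp d$ the left-hand side is non-positive, so this is immediate), and every coefficient in \eqref{equ:hkFrecur} is positive in the range $0\le k\le d$ (the first because $n_1+n_2-d-1+k\ge d+1$). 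Substituting these bounds into \eqref{equ:hkFrecur} reduces the claim $h_{k+1}(\fF)\le H_{k+1}$ to a binomial identity: absorbing $\tfrac{n_1+n_2-d-1+k}{k+1}$ into $\binom{n_1+n_2-d-2+k}{k}$ produces $\binom{n_1+n_2-d-1+k}{k+1}$, and for each $j$ the two remaining terms carrying $\binom{n_j-d-2+k}{k}$ combine, via $n_1+n_2-d-1+k=(n_j-d-1+k)+n_{3-j}$, into $-\binom{n_j-d-1+k}{k+1}$; the sum of the three pieces is exactly $H_{k+1}$. This proves \eqref{equ:hkFbound} for all $0\le k\le d+1$.

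For the equality clause, observe that for $0\le k\le d+1$ relation \eqref{equ:facesF} truncates to $f_{k-1}(\fF)=\sum_{i=0}^{k}\binom{d+1-i}{k-i}h_i(\fF)$, while \eqref{equ:hF} expresses $h_k(\fF)$ as $f_{k-1}(\fF)$ plus a combination of the $f_{i-1}(\fF)$ with $i<k$. Hence the change of basis between $(h_0(\fF),\ldots,h_l(\fF))$ and $(f_{-1}(\fF),\ldots,f_{l-1}(\fF))$ is triangular with unit diagonal, so $h_k(\fF)=H_k$ for all $k\le l$ \emph{if and only if} $f_{k-1}(\fF)=\sum_{i=0}^{k}\binom{d+1-i}{k-i}H_i$ for all $k\le l$. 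A short computation using the Vandermonde-type identity $\sum_{i}\binom{d+1-i}{k-i}\binom{m+i}{i}=\binom{m+d+2}{k}$ identifies $\sum_{i=0}^{k}\binom{d+1-i}{k-i}H_i$ with $\binom{n_1+n_2}{k}-\binom{n_1}{k}-\binom{n_2}{k}$, i.e.\ with the right-hand side of \eqref{equ:fbBbound}. Applying Lemma~\ref{lem:fbBbound} to $\cC=\bx{P}$ — for which the set $\bB$ there agrees, in every non-negative dimension, with $\fF$, and $\cC_1,\cC_2$ are the boundary complexes of $P_1,P_2$ — we conclude that $f_{k-1}(\fF)$ attains the bound of that lemma for every $k\le l$ precisely when $\bx{P}$, equivalently $P$, is \bn{l}. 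Thus $h_k(\fF)=H_k$ for all $k\le l$ if and only if $P$ is \bn{l}.

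It remains to show that equality for $k\le l$ forces $l\le\lexp{d+1}$. Suppose $h_k(\fF)=H_k$ for all $k\le l$ and, for contradiction, $l\ge\lexp{d+1}+1$; set $k'=d+1-l$, so that $0\le k'\le\lexp d$ and $k'<l$ (since $l\ge\lexp{d+1}+1>(d+1)/2$). Then $h_{k'}(\fF)=H_{k'}$ by hypothesis, and \eqref{equ:hFDS} together with Corollary~\ref{cor:UBT} gives
\[
H_l=h_l(\fF)=h_{k'}(\fF)+g_{k'}(\bx{P_1})+g_{k'}(\bx{P_2})\le H_{k'}+\binom{n_1-d-2+k'}{k'}+\binom{n_2-d-2+k'}{k'}=\binom{n_1+n_2-1-l}{d+1-l},
\]
the last equality being the definition of $H_{k'}$. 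But a direct comparison of binomial coefficients shows $\binom{n_1+n_2-1-l}{d+1-l}<H_l$ whenever $n_1,n_2\ge d+1$ and $l>\lexp{d+1}$ — already in the extremal case $n_1=n_2=d+1$ this reads $\binom{2d+1-l}{d}<\binom{d+l}{d}$, which holds because $d+l>2d+1-l\ge d$ — a contradiction. Since $P$ being \bn{l} implies (by the previous paragraph) equality for $k\le l$, it implies $l\le\lexp{d+1}$ as well; this yields the stated equivalence.

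The routine-but-delicate ingredients are the two binomial identities — the one propagating the bound through \eqref{equ:hkFrecur}, and its companion identifying $\sum_i\binom{d+1-i}{k-i}H_i$ with $\binom{n_1+n_2}{k}-\binom{n_1}{k}-\binom{n_2}{k}$ — together with the strict comparison $\binom{n_1+n_2-1-l}{d+1-l}<H_l$. The main conceptual step, and the one I expect to require the most care, is the passage from a statement about $\mb{h}(\fF)$ to one about $\mb{f}(\fF)$: one has to exploit that, in the relevant range, the $f$-to-$h$ transform is triangular, so that maximality of the left-most half of $\mb{h}(\fF)$ is equivalent to maximality of the corresponding face numbers of $\fF$ — which Lemma~\ref{lem:fbBbound} characterizes by bineighborliness of $P$ — while the bound $l\le\lexp{d+1}$ drops out of the Dehn--Sommerville-type relation \eqref{equ:hFDS}.
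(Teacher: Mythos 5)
Your first two paragraphs coincide with the paper's own proof: the same induction on $k$ driven by the recurrence \eqref{equ:hkFrecur} with Corollary \ref{cor:UBT} supplying the $g$-terms, and the same two binomial-identity computations that translate ``$h_k(\fF)$ attains \eqref{equ:hkFbound} for all $k\le l$'' into ``$f_{k-1}(\fF)=\binom{n_1+n_2}{k}-\binom{n_1}{k}-\binom{n_2}{k}$ for all $k\le l$'', hence into $P$ being \bn{l} via Lemma \ref{lem:fbBbound}. Your ``triangular transform'' framing is just a repackaging of the paper's two substitutions (into \eqref{equ:hF} in one direction, into \eqref{equ:facesF} in the other), and your identification of the set $\bB$ of Lemma \ref{lem:fbBbound} with $\fF$ when $\cC=\bx{}P$ is exactly what the paper uses implicitly. (Your parenthetical that $g_k(\bx{P_j})\le 0$ for $k>\lexp{d}$ tacitly invokes nonnegativity of the $g$-vector; the paper is equally silent on that range, so this is not a point of divergence.)

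Where you go beyond the paper is the third paragraph: the paper's proof of this lemma never argues that equality up to $l$ forces $l\le\lexp{d+1}$ (it only establishes the equivalence with $P$ being \bn{l}), so your attempt to close this is welcome, and the mechanism is the right one --- combining \eqref{equ:hFDS} at $k'=d+1-l$ with Corollary \ref{cor:UBT} caps $h_l(\fF)$ by $\binom{n_1+n_2-1-l}{d+1-l}$, which is in substance the bound of Lemma \ref{lem:hFboundhigh}. The gap is the strict inequality $\binom{n_1+n_2-1-l}{d+1-l}<H_l$: you verify it only at $n_1=n_2=d+1$ and assert that this case is extremal, but no such extremality is proved and none is obvious, since both sides increase with $n_1$ and $n_2$ and the difference is not visibly monotone (the asymmetric regime $n_1$ large, $n_2=d+1$ is not controlled by the symmetric minimal case). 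The inequality does appear to be true --- it is precisely the unproved assertion, made in the paper just before Lemma \ref{lem:hFboundhigh}, that \eqref{equ:hFboundhigh} refines \eqref{equ:hkFbound} for $k>\lexp{d+1}$ --- but as written this step is a claim, not a proof; you need either an actual argument for it (for instance a Vandermonde expansion of $\binom{n_1+n_2-d-2+l}{l}$ exhibiting enough mixed terms to dominate $\binom{n_1+n_2-1-l}{d+1-l}$ after the two subtractions) or a different route to the impossibility of equality beyond $\lexp{d+1}$.
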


\begin{proof}
  We show the desired bound by induction on $k$. 
  Clearly, the bound holds (as equality) for $k=0$, since
  \begin{equation}\label{equ:h0F-equality}
    h_0(\fF)=-1=1-1-1=
    \binom{n_1+n_2-d-2+0}{0}-\binom{n_1-d-2+0}{0}-\binom{n_2-d-2+0}{0}.
  \end{equation}
  Suppose now that the bound holds for $h_k(\fF)$, where
  $k\ge{}0$. 
  Using the recurrence relation \eqref{equ:hkFrecur}, in conjunction
  with the upper bounds for the elements of the $g$-vector of a
  polytope from Corollary \ref{cor:UBT},
  and since for $k\ge{}0$, $n_1+n_2-d-1+k\ge{}d+1>0$, we have
  \begin{equation}\label{equ:hkderivation}
    \begin{aligned}
      h_{k+1}(\fF)&\le\tfrac{n_1+n_2-d-1+k}{k+1}\,h_k(\fF)
      +\tfrac{n_1}{k+1}\,g_k(\bx{P_2})
      +\tfrac{n_2}{k+1}\,g_k(\bx{P_1})\\
      &\le\tfrac{n_1+n_2-d-1+k}{k+1}\,\left[\tbinom{n_1+n_2-d-2+k}{k}
        -\tbinom{n_1-d-2+k}{k}-\tbinom{n_2-d-2+k}{k}\right]\\
      &\qquad+\tfrac{n_1}{k+1}\tbinom{n_2-d-2+k}{k}
      +\tfrac{n_2}{k+1}\tbinom{n_1-d-2+k}{k}\\
      &=\tfrac{n_1+n_2-d-1+k}{k+1}\tbinom{n_1+n_2-d-2+k}{k}
      -\tfrac{n_1-d-1+k}{k+1}\tbinom{n_1-d-2+k}{k}
      -\tfrac{n_2-d-1+k}{k+1}\tbinom{n_2-d-2+k}{k}\\
      &=\tbinom{n_1+n_2-d-1+k}{k+1}-\tbinom{n_1-d-1+k}{k+1}
      -\tbinom{n_2-d-1+k}{k+1}.
    \end{aligned}
  \end{equation}

  Let us now turn to our equality claim. The claim for $l=0$ is
  obvious (cf. \eqref{equ:h0F-equality}), so we assume below that
  $l\ge{}1$.
  Suppose first that $P$ is \bn{l}. Then, we have:
  \begin{equation}\label{equ:fkFmax}
    f_{i-1}(\fF)=\binom{n_1+n_2}{i}-\binom{n_1}{i}-\binom{n_2}{i},
    \qquad 0\le{}i\le{}l.
  \end{equation}
  Substituting $f_{i-1}(\fF)$ from \eqref{equ:fkFmax} in the
  defining equations \eqref{equ:hF} for $\mb{h}(\fF)$, we get, for
  all $0\le{}k\le{}l$:
  \begin{align*}
    h_k(\fF)&=\sum_{i=0}^{k}(-1)^{k-i}\tbinom{d+1-i}{d+1-k}f_{i-1}(\fF)\\
    &=\sum_{i=0}^{k}(-1)^{k-i}\tbinom{d+1-i}{d+1-k}
    \left(\tbinom{n_1+n_2}{i}-\tbinom{n_1}{i}-\tbinom{n_2}{i}\right)\\
    &=\sum_{i=0}^{k}(-1)^{k-i}\tbinom{d+1-i}{d+1-k}\tbinom{n_1+n_2}{i}
    -\sum_{i=0}^{k}(-1)^{k-i}\tbinom{d+1-i}{d+1-k}\tbinom{n_1}{i}
    -\sum_{i=0}^{k}(-1)^{k-i}\tbinom{d+1-i}{d+1-k}\tbinom{n_2}{i}\\
    &=\tbinom{n_1+n_2-d-2+k}{k}-\tbinom{n_2-d-2+k}{k}
    -\tbinom{n_2-d-2+k}{k},
  \end{align*}
  where for the last equality we used the fact that
  $\binom{d+1-i}{d+1-k}=0$ for $i>k$, in conjunction with the
  following combinatorial identity
  (cf. \cite[eq.~(5.25)]{gkp-cm-89}, \cite[Exercise 8.20]{z-lp-95}):
  \begin{equation*}
    \sum_{0\le{}k\le{}l}\binom{l-k}{m}\binom{s}{k-n}(-1)^k
    =(-1)^{l+m}\binom{s-m-1}{l-m-n}.
  \end{equation*}
  In the equation above we set $k\sub{}i$, $l\sub{}d+1$,
  $m\sub{}d+1-k$, $n\sub{}0$, while $s$ stands for either
  $n_1+n_2$, $n_1$ or $n_2$. We thus conclude that
  \eqref{equ:hkFbound} holds as equality for all $0\le{}k\le{}l$.

  Suppose now that inequality \eqref{equ:hkFbound} holds as equality
  for all $0\le{}k\le{}l$. Substituting $h_{i}(\fF)$, $0\le{}i\le{}l$,
  from \eqref{equ:hkFbound} in \eqref{equ:facesF} we get:
  \begin{align}
    f_{l-1}(\fF)&=\sum_{i=0}^{d+1}\tbinom{d+1-i}{l-i}h_i(\fF)\notag\\
    &=\sum_{i=0}^{d+1}\tbinom{d+1-i}{l-i}
    \left(\tbinom{n_1+n_2-d-2+i}{i}
      -\tbinom{n_1-d-2+i}{i}-\tbinom{n_2-d-2+i}{i}\right)\notag\\
    &=\sum_{i=0}^{d+1}\tbinom{d+1-i}{l-i}\tbinom{n_1+n_2-d-2+i}{i}
    -\sum_{i=0}^{d+1}\tbinom{d+1-i}{l-i}\tbinom{n_1-d-2+i}{i}
    -\sum_{i=0}^{d+1}\tbinom{d+1-i}{l-i}\tbinom{n_2-d-2+i}{i}\notag\\
    &=\sum_{i=0}^{d+1}\tbinom{d+1-i}{d+1-l}\tbinom{n_1+n_2-d-2+i}{n_1+n_2-d-2}
    -\sum_{i=0}^{d+1}\tbinom{d+1-i}{d+1-l}\tbinom{n_1-d-2+i}{n_1-d-2}
    -\sum_{i=0}^{d+1}\tbinom{d+1-i}{d+1-l}\tbinom{n_2-d-2+i}{n_2-d-2}
    \label{equ:r1}\\
    &=\tbinom{(d+1)+(n_1+n_2-d-2)+1}{(d+1-l)+(n_1+n_2-d-2)+1}
    -\tbinom{(d+1)+(n_1-d-2)+1}{(d+1-l)+(n_1-d-2)+1}
    -\tbinom{(d+1)+(n_2-d-2)+1}{(d+1-l)+(n_2-d-2)+1}\label{equ:r2}\\
    &=\tbinom{n_1+n_2}{n_1+n_2-l}-\tbinom{n_1}{n_1-l}-\tbinom{n_2}{n_2-l}\notag\\
    &=\tbinom{n_1+n_2}{l}-\tbinom{n_1}{l}-\tbinom{n_2}{l},\notag
  \end{align}
  where, in order to get from \eqref{equ:r1} to \eqref{equ:r2}, we
  used the combinatorial identity (cf. \cite[eq.~(5.26)]{gkp-cm-89}):
  \begin{equation*}
    \sum_{0\le{}k\le{}l}\binom{l-k}{m}\binom{q+k}{n}=\binom{l+q+1}{m+n+1},
  \end{equation*}
  with $k\sub{}i$, $l\sub{}d+1$, $m\sub{}d+1-k$,
  $q\sub{}s-d-2$, $n\sub{}s-d-2$, and $s$ stands for either $n_1+n_2$,
  $n_1$ or $n_2$. Hence, $P$ is \bn{l}.
\end{proof}

Using the Dehn-Sommerville-like relations \eqref{equ:hFDS}, in
conjunction with the bounds from the previous lemma, we derive
alternative bounds for $h_k(\fF)$, which are of interest since they
refine the bounds for $h_k(\fF)$ from Lemma \ref{lem:hFbound} for
large values of $k$, namely for $k>\lexp{d+1}$. More precisely:

\begin{lemma}\label{lem:hFboundhigh}
  For all $0\le{}k\le{}d+1$,
  \begin{equation}\label{equ:hFboundhigh}
    h_{d+1-k}(\fF)\le\binom{n_1+n_2-d-2+k}{k}.
  \end{equation}
  Equality holds for all $k$ with $0\le{}k\le{}l$ if and only if
  $l\le\lexp{d}$ and $P$ is $l$-neighborly.
\end{lemma}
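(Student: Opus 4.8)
\textit{Proof proposal.} The plan is to read \eqref{equ:hFboundhigh} off the Dehn--Sommerville-like identity \eqref{equ:hFDS}, namely $h_{d+1-k}(\fF)=h_k(\fF)+g_k(\bx{}P_1)+g_k(\bx{}P_2)$, by feeding in the two estimates already available. For the range $0\le{}k\le\lexp{d}$ I would bound $h_k(\fF)$ using Lemma~\ref{lem:hFbound}, i.e.\ by $\binom{n_1+n_2-d-2+k}{k}-\binom{n_1-d-2+k}{k}-\binom{n_2-d-2+k}{k}$, and bound each remaining summand using the Upper Bound Theorem in $g$-vector form (Corollary~\ref{cor:UBT}), i.e.\ $g_k(\bx{}P_j)\le\binom{n_j-d-2+k}{k}$ for $j=1,2$. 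Adding the three estimates, the two subtracted binomials in the $h_k(\fF)$ bound cancel exactly against the two $g$-vector bounds, leaving $h_{d+1-k}(\fF)\le\binom{n_1+n_2-d-2+k}{k}$, which is \eqref{equ:hFboundhigh}.

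It then remains to treat $\lexp{d}<k\le{}d+1$, where the index $m:=d+1-k$ satisfies $0\le{}m\le\lexp{d+1}$, equivalently $2m\le{}d+1$. Here I would apply Lemma~\ref{lem:hFbound} to $h_m(\fF)$ directly and discard the two subtracted binomials, which are non-negative since $n_1,n_2\ge{}d+1$, obtaining $h_{d+1-k}(\fF)\le\binom{n_1+n_2-d-2+m}{m}$; since $2m\le{}d+1$, this quantity is $\le\binom{n_1+n_2-d-2+(d+1-m)}{d+1-m}=\binom{n_1+n_2-d-2+k}{k}$ by monotonicity of $\binom{\cdot}{n_1+n_2-d-2}$ in its upper argument. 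Together with the first paragraph this establishes the inequality for all $0\le{}k\le{}d+1$; the endpoint $k=0$ is the trivial identity $h_{d+1}(\fF)=-1+1+1=1$.

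For the equality characterisation the relevant range is $0\le{}k\le\lexp{d}$. In the forward direction, equality in \eqref{equ:hFboundhigh} for all $0\le{}k\le{}l$ forces, for each such $k$, simultaneous equality in the $h_k(\fF)$ bound of Lemma~\ref{lem:hFbound} and in the two $g$-vector bounds of Corollary~\ref{cor:UBT}: the former gives $l\le\lexp{d+1}$ and that $P$ is \bn{l}, while the latter gives that the initial segments $g_0(P_j),\ldots,g_l(P_j)$ are maximal for $j=1,2$, hence (by the standard equivalence between maximality of the initial segment of the $g$-vector of a simplicial polytope and neighborliness) that $P_1$ and $P_2$ are $l$-neighborly; Lemma~\ref{lem:bn+n2n} then yields that $P$ is $l$-neighborly. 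To see $l\le\lexp{d}$, note that in the regime $\lexp{d}<k\le{}d+1$ the estimate of the second paragraph is attained with equality only if both discarded binomials vanish (forcing $n_1=n_2=d+1$) and the monotonicity step is tight (forcing $d$ odd and $k=\lexp{d}+1$), so outside this degenerate configuration equality cannot propagate past $k=\lexp{d}$. Conversely, if $l\le\lexp{d}$ and $P$ is $l$-neighborly, then $P$ is \bn{l} (a $k$-neighborly complex is bineighborly with respect to any bipartition of its vertex set), its subcomplexes $P_1$ and $P_2$ are $l$-neighborly, so Lemma~\ref{lem:hFbound} and Corollary~\ref{cor:UBT} both hold with equality for $0\le{}k\le{}l$, and substituting into \eqref{equ:hFDS} gives equality in \eqref{equ:hFboundhigh} throughout that range.

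The inequality itself is a short binomial manipulation once Lemma~\ref{lem:hFbound}, Corollary~\ref{cor:UBT} and \eqref{equ:hFDS} are in hand; the part that needs genuine care, and where I expect the main obstacle to lie, is the equality analysis — precisely formulating and invoking the equivalence ``$g_0(P_j),\ldots,g_l(P_j)$ maximal $\iff$ $P_j$ is $l$-neighborly'', threading it through \eqref{equ:hFDS} together with Lemma~\ref{lem:bn+n2n}, and justifying the clause $l\le\lexp{d}$ in full generality, in particular ruling out (or otherwise accounting for) the degenerate case $n_1=n_2=d+1$ in odd dimension, where the binomials that make the second-paragraph estimate lossy happen to vanish.
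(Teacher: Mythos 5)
Your argument is essentially the paper's: the inequality is read off the Dehn--Sommerville-like relation \eqref{equ:hFDS} together with Lemma~\ref{lem:hFbound} and Corollary~\ref{cor:UBT}, and the equality claim is obtained by forcing simultaneous tightness in those two bounds and then invoking Lemma~\ref{lem:bn+n2n}, with your separate monotonicity step for $\lexp{d}<k\le{}d+1$ merely making explicit a range that the paper's one-line justification glosses over. The obstacle you flag at the end is not resolved in the paper either --- its equality analysis simply assumes $l\le\lexp{d}$ from the outset --- and your caution is in fact warranted rather than a defect of your argument: for $d$ odd and $n_1=n_2=d+1$ (two simplices whose Cayley polytope is \bn{\lexp{d+1}}, as the paper's own odd-dimensional construction provides) one checks via \eqref{equ:hFDS} that equality holds for all $0\le{}k\le\lexp{d+1}>\lexp{d}$, so the necessity of the clause $l\le\lexp{d}$ genuinely breaks down in exactly the degenerate configuration you identified.
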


\begin{proof}
  The upper bound claim in \eqref{equ:hFboundhigh} is a direct
  consequence of the Dehn-Sommerville-like relations \eqref{equ:hFDS}
  for $\mb{h}(\fF)$, the upper bounds from Lemma
  \ref{lem:hFbound}, and the Upper Bound Theorem for polytopes as
  stated in Corollary \ref{cor:UBT}.

  The rest of the proof deals with the equality claim.
  Inequality \eqref{equ:hFboundhigh} holds as equality for all
  $0\le{}k\le{}l$, where $l\le\lexp{d}$, if and only if the following
  two conditions hold:
  \begin{enumerate}
  \item Inequalities \eqref{equ:hkFbound} hold as equalities for all
    $0\le{}k\le{}l\le\lexp{d}$.
  \item For $j=1,2$, and for all $0\le{}k\le{}l\le\lexp{d}$, we have
    $g_k(\bx{}P_j)=\binom{n_j-d-2+k}{k}$.
  \end{enumerate}
  The first condition holds true if and only if $P$ is \bn{l},
  while the second condition holds true if and only if $P_j$,
  $j=1,2$, is $l$-neighborly.
  Therefore, inequality \eqref{equ:hFboundhigh} holds as equality for
  all $0\le{}k\le{}l$ if and only if $l\le\lexp{d}$, $P$ is \bn{l}
  and both $P_1$, $P_2$ are $l$-neighborly. In view of Lemma
  \ref{lem:bn+n2n}, we conclude that equality in
  \eqref{equ:hFboundhigh} holds for all $0\le{}k\le{}l$ if and only
  if $l\le\lexp{d}$ and $P$ is $l$-neighborly.
\end{proof}

We are now ready to compute upper bounds for the face numbers of
$\fF$. Using relation \eqref{equ:facesF}, in conjunction with 
the bounds on the elements of $\mb{h}(\fF)$ from Lemma
\ref{lem:hFbound} and Lemma \ref{lem:hFboundhigh},
we get, for $0\le{}k\le{}d+1$:
\begin{align}
  f_{k-1}(\fF)&=
  \sum_{i=0}^{\lexp{d+1}}\tbinom{d+1-i}{k-i}h_i(\fF)
  +\sum_{i=\lexp{d+1}+1}^{d+1}\tbinom{d+1-i}{k-i}h_i(\fF)\notag\\
  &=\sum_{i=0}^{\lexp{d+1}}\tbinom{d+1-i}{k-i}h_i(\fF)
  +\sum_{i=0}^{\lexp{d}}\tbinom{i}{k-d-1+i}h_{d+1-i}(\fF)\notag\\
  &\le\sum_{i=0}^{\lexp{d+1}}
  \tbinom{d+1-i}{k-i}
  \Big(\tbinom{n_1+n_2-d-2+i}{i}-\sum_{j=1}^2\tbinom{n_j-d-2+i}{i}\Big)
  +\sum_{i=0}^{\lexp{d}}\tbinom{i}{k-d-1+i}
  \tbinom{n_1+n_2-d-2+i}{i}\notag\\
  &=\sum_{i=0}^{\lexp{d+1}}\tbinom{d+1-i}{k-i}\tbinom{n_1+n_2-d-2+i}{i}
  +\sum_{i=0}^{\lexp{d}}\tbinom{i}{k-d-1+i}\tbinom{n_1+n_2-d-2+i}{i}
  -\sum_{i=0}^{\lexp{d+1}}\tbinom{d+1-i}{k-i}
  \sum_{j=1}^2\tbinom{n_j-d-2+i}{i}\label{equ:e1}\\
  &=\sideset{}{^{\,*}}{\sum}_{i=0}^{\frac{d+1}{2}}
  \left(\tbinom{d+1-i}{k-i}+\tbinom{i}{k-d-1+i}\right)
  \tbinom{n_1+n_2-d-2+i}{i}
  -\sum_{i=0}^{\lexp{d+1}}\tbinom{d+1-i}{k-i}
  \sum_{j=1}^2\tbinom{n_j-d-2+i}{i}\label{equ:e2}\\
  &=f_{k-1}(C_{d+1}(n_1+n_2))-\sum_{i=0}^{\lexp{d+1}}\tbinom{d+1-i}{k-i}
  \sum_{j=1}^2\tbinom{n_j-d-2+i}{i}\notag
\end{align}
where $C_d(n)$ stands for the cyclic $d$-polytope with $n$ vertices,
$\displaystyle\sideset{}{^{\,*}}{\textstyle\sum}_{^{i=0}}^{_{\frac{\delta}{2}}}T_i$
denotes the sum of the elements $T_0, T_1,\ldots,T_{\lexp{\delta}}$
where the last term is halved if $\delta$ is even, while in order to
get from \eqref{equ:e1} to \eqref{equ:e2} we used an identity proved
in Section \ref{app:identity} of the Appendix. The following lemma
summarizes our results.

\begin{lemma}\label{lem:fkF_ub}
  For all $0\le{}k\le{}d+1$:
  \begin{equation*}
    f_{k-1}(\fF)\le{}f_{k-1}(C_{d+1}(n_1+n_2))
    -\sum_{i=0}^{\lexp{d+1}}\binom{d+1-i}{k-i}
    \left(\binom{n_1-d-2+i}{i}+\binom{n_2-d-2+i}{i}\right),
  \end{equation*}
  where $C_d(n)$ stands for the cyclic $d$-polytope with $n$ vertices.
  Furthermore:
  \begin{enumerate}
  \item
    Equality holds for all $0\le{}k\le{}l$ if and only if
    $l\le\lexp{d+1}$ and $P$ is \bn{l}.
  \item
    For $d\ge{}2$ even, equality holds for all
    $0\le{}k\le{}d+1$ if and only if $P$ is $\lexp{d}$-neighborly.
  \item
    For $d\ge{}3$ odd, equality holds for all
    $0\le{}k\le{}d+1$ if and only if $P$ is \bn{\lexp{d+1}}.
  \end{enumerate}
\end{lemma}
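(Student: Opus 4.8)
My plan is to dispatch the stated inequality essentially for free and then put all the effort into the three equality characterisations. For the inequality I would simply retrace the chain of (in)equalities already displayed just before the lemma: start from the expansion \eqref{equ:facesF} of $f_{k-1}(\fF)$ in terms of $h_0(\fF),\dots,h_{d+1}(\fF)$, split the sum at the index $\lexp{d+1}$, bound $h_0(\fF),\dots,h_{\lexp{d+1}}(\fF)$ via Lemma~\ref{lem:hFbound} and the remaining terms (rewritten as $h_{d+1-j}(\fF)$, $0\le j\le\lexp{d}$) via Lemma~\ref{lem:hFboundhigh}, and then recognise the principal term as $f_{k-1}(C_{d+1}(n_1+n_2))$ using the combinatorial identity of Section~\ref{app:identity}, exactly as in the passage from \eqref{equ:e1} to \eqref{equ:e2}. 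The structural fact I would isolate at the outset, since it underlies every equality claim, is that the two index ranges $\{0,\dots,\lexp{d+1}\}$ and $\{d+1-j:0\le j\le\lexp{d}\}=\{\lexp{d+1}+1,\dots,d+1\}$ partition $\{0,\dots,d+1\}$ --- so each $h_i(\fF)$ is bounded by exactly one of the two lemmas --- and that in \eqref{equ:facesF} every coefficient $\binom{d+1-i}{k-i}$ with $0\le i\le k\le d+1$ is strictly positive. Consequently, for any prescribed set of values of $k$, equality in the lemma is equivalent to the $h_i(\fF)$'s that occur all attaining their bounds.

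For part~(i): when $0\le k\le l\le\lexp{d+1}$, \eqref{equ:facesF} expresses $f_{k-1}(\fF)$ through $h_0(\fF),\dots,h_k(\fF)$ only, all with positive coefficients, so equality for every $0\le k\le l$ is equivalent to $h_i(\fF)$ attaining the bound of Lemma~\ref{lem:hFbound} for every $0\le i\le l$; the equality clause of Lemma~\ref{lem:hFbound} then rewrites this as precisely ``$l\le\lexp{d+1}$ and $P$ is \bn{l}''.

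For parts~(ii) and~(iii): let $k$ range up to $d+1$. Equality at $k=d+1$ reads $\sum_{i=0}^{d+1}h_i(\fF)=\sum_{i=0}^{d+1}b_i$, where $b_i$ is the bound used for $h_i(\fF)$; since $h_i(\fF)\le b_i$ for each $i$, this forces $h_i(\fF)=b_i$ for all $i$, while conversely, if all $h_i(\fF)$ attain their bounds then \eqref{equ:facesF} yields equality for every $0\le k\le d+1$ (the coefficients are nonnegative and the terms with $i>k$ vanish). Thus equality throughout $0\le k\le d+1$ is equivalent to ``$h_i(\fF)$ attains the Lemma~\ref{lem:hFbound} bound for $0\le i\le\lexp{d+1}$'' together with ``$h_{d+1-j}(\fF)$ attains the Lemma~\ref{lem:hFboundhigh} bound for $0\le j\le\lexp{d}$''. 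By the equality clauses of those two lemmas this conjunction becomes ``$P$ is \bn{\lexp{d+1}}'' together with ``$P$ is $\lexp{d}$-neighborly''. I would finish by collapsing these using the two implications from Section~\ref{sec:bn}: a $k$-neighborly complex is $(k,V')$-bineighborly (the corollary preceding Lemma~\ref{lem:bn2n}), and a \bn{k} complex is $(k-1)$-neighborly (Lemma~\ref{lem:bn2n}). If $d$ is even then $\lexp{d}=\lexp{d+1}$, so $\lexp{d}$-neighborliness already implies $\lexp{d+1}$-bineighborliness, and the conjunction collapses to ``$P$ is $\lexp{d}$-neighborly'' --- this is~(ii); if $d$ is odd then $\lexp{d+1}=\lexp{d}+1$, so \bn{\lexp{d+1}} already implies $\lexp{d}$-neighborliness, and the conjunction collapses to ``$P$ is \bn{\lexp{d+1}}'' --- this is~(iii).

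The routine part is the bookkeeping: verifying that the two index ranges partition $\{0,\dots,d+1\}$ in both parities (for $d$ even the cut falls between $i=\lexp{d+1}$ and $i=\lexp{d+1}+1$; for $d$ odd the middle index $\lexp{d+1}=(d+1)/2$ lies in the Lemma~\ref{lem:hFbound} block while its mirror $j=(d+1)/2$ overshoots $\lexp{d}$, so nothing is double-counted), and verifying that the coefficients in \eqref{equ:facesF} are strictly positive on the indices that actually occur. I do not anticipate a real obstacle here: all of the genuine analysis --- the $h$-vector recurrence of Lemma~\ref{lem:hFrecur}, the Dehn--Sommerville-type relations for $\fF$, and the Upper Bound Theorem input --- has already been absorbed into Lemmas~\ref{lem:hFbound} and~\ref{lem:hFboundhigh}, so this last lemma is mostly an exercise in assembling those statements with the bineighborly/neighborly dictionary of Section~\ref{sec:bn}.
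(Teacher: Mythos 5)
Your proposal is correct and takes essentially the same route as the paper: the inequality is exactly the displayed chain combining \eqref{equ:facesF}, Lemmas \ref{lem:hFbound} and \ref{lem:hFboundhigh}, and the identity of Section \ref{app:identity}, while the equality claims follow, as intended, from the strict positivity of the coefficients (forcing each $h_i(\fF)$ to meet its bound) together with the equality clauses of those two lemmas, collapsed via Lemma \ref{lem:bn2n} and the corollary that a $k$-neighborly complex is bineighborly. The paper leaves this assembly implicit (``the following lemma summarizes our results''), and your write-up simply makes the same bookkeeping explicit.
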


Since for all $1\le{}k\le{}d$, $f_{k-1}(P_1\oplus{}P_2)=f_k(\fF)$, we
arrive at the central theorem of this section, stating upper bounds
for the face numbers of the Minkowski sum of two $d$-polytopes.

\begin{theorem}\label{thm:minksum_ub}
  Let $P_1$ and $P_2$ be two $d$-polytopes in $\euc^d$, $d\ge{}2$,
  with $n_1\ge{}d+1$ and $n_2\ge{}d+1$ vertices, respectively.
  Let also $P$ be the convex hull in $\euc^{d+1}$ of
  $P_1$ and $P_2$ embedded in the hyperplanes $\{x_{d+1}=0\}$ and
  $\{x_{d+1}=1\}$ of $\euc^{d+1}$, respectively.
  Then, for $1\le{}k\le{}d$, we have:
  \begin{equation*}
    f_{k-1}(P_1\oplus{}P_2)\le{}f_k(C_{d+1}(n_1+n_2))
    -\sum_{i=0}^{\lexp{d+1}}\binom{d+1-i}{k+1-i}
    \left(\binom{n_1-d-2+i}{i}+\binom{n_2-d-2+i}{i}\right),
  \end{equation*}
  Furthermore:
  \begin{enumerate}
  \item
    Equality holds for all $1\le{}k\le{}l$ if an only if
    $l\le{}\lexp{d-1}$ and $P$ is \bn{l+1}.
  \item
    For $d\ge{}2$ even, equality holds for all $1\le{}k\le{}d$ if an
    only if $P$ is $\lexp{d}$-neighborly.
  \item
    For $d\ge{}3$ odd, equality holds for all $1\le{}k\le{}d$ if an
    only if $P$ is \bn{\lexp{d+1}}.
  \end{enumerate}
\end{theorem}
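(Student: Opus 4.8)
The plan is to obtain Theorem~\ref{thm:minksum_ub} as a direct corollary of Lemma~\ref{lem:fkF_ub} together with the face correspondence~\eqref{equ:facenumbercorresp}. Since $f_{k-1}(P_1\oplus{}P_2)=f_k(\fF)$ for $1\le{}k\le{}d$, one simply replaces $k$ by $k+1$ in Lemma~\ref{lem:fkF_ub}: the inequality $f_{k-1}(\fF)\le{}f_{k-1}(C_{d+1}(n_1+n_2))-\sum_{i=0}^{\lexp{d+1}}\binom{d+1-i}{k-i}\bigl(\binom{n_1-d-2+i}{i}+\binom{n_2-d-2+i}{i}\bigr)$ becomes exactly the bound asserted in the theorem. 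For pairs $P_1,P_2$ whose convex hull $P$ is not of the reduced form assumed throughout Section~\ref{sec:ub} (simplicial apart from the facets $P_1,P_2$), the upper bound still follows from the perturbation reduction recalled at the start of that section: passing to the perturbed polytope leaves $n_1,n_2$ unchanged and does not decrease the face numbers of the Minkowski sum, so for the inequality it suffices to treat the reduced case, which is precisely what Lemma~\ref{lem:fkF_ub} does.

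For the three equality claims I would carry the same substitution $k\to{}k+1$ through the three items of Lemma~\ref{lem:fkF_ub}, the only point requiring an explicit check being that equality in Lemma~\ref{lem:fkF_ub} is automatic at its two smallest indices. At $k=0$: $f_{-1}(\fF)=-1$, and only the $i=0$ summand survives, so the right-hand side equals $f_{-1}(C_{d+1}(n_1+n_2))-2=1-2=-1$. At $k=1$: $f_0(\fF)=0$, only the $i=0,1$ summands contribute, and a one-line computation gives $f_0(C_{d+1}(n_1+n_2))-(n_1+n_2)=0$. Hence ``equality in the theorem for all $1\le{}k\le{}l$'' is the same as ``equality in Lemma~\ref{lem:fkF_ub} for all $2\le{}k\le{}l+1$'', which (the indices $0,1$ being free) is ``equality in Lemma~\ref{lem:fkF_ub} for all $0\le{}k\le{}l+1$'', i.e.\ by Lemma~\ref{lem:fkF_ub}(i), $l+1\le\lexp{d+1}$ and $P$ is \bn{l+1}; and $l+1\le\lexp{d+1}$ is equivalent to $l\le\lexp{d-1}$, verified separately for $d$ even and $d$ odd, where $\lexp{d+1}-1=\lexp{d-1}$ in both cases. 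This yields item~(i). For items~(ii) and~(iii): equality in the theorem for all $1\le{}k\le{}d$ translates to equality in Lemma~\ref{lem:fkF_ub} for all $0\le{}k\le{}d+1$, which by Lemma~\ref{lem:fkF_ub}(ii) (for $d$ even) holds iff $P$ is $\lexp{d}$-neighborly, and by Lemma~\ref{lem:fkF_ub}(iii) (for $d$ odd) holds iff $P$ is \bn{\lexp{d+1}}.

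Thus the theorem is essentially a bookkeeping consequence of Lemma~\ref{lem:fkF_ub}, and the substantive work lies entirely upstream --- in Lemma~\ref{lem:fkF_ub} and, beneath it, in the recurrence of Lemma~\ref{lem:hFrecur} and the two $h$-vector bounds of Lemmas~\ref{lem:hFbound} and~\ref{lem:hFboundhigh}. The one genuinely delicate point in the present proof is organisational rather than computational: one must keep in force the standing assumption that $P$ is simplicial apart from $P_1,P_2$, so that all the $h$-vectors invoked are defined, and remember that for an arbitrary pair $P_1,P_2$ the equality statements describe this reduced $P$ (the existence of pairs actually realising equality being deferred to the lower-bound constructions of Section~\ref{sec:lb}). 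Once that is fixed, the index arithmetic above is the whole proof.
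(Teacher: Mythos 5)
Your proposal is correct and follows exactly the paper's route: the paper states Theorem~\ref{thm:minksum_ub} as an immediate consequence of Lemma~\ref{lem:fkF_ub} via the correspondence $f_{k-1}(P_1\oplus P_2)=f_k(\fF)$ of \eqref{equ:facenumbercorresp}, under the standing simpliciality reduction, with no further argument. Your explicit verification that equality in Lemma~\ref{lem:fkF_ub} is automatic at indices $k=0,1$ and that $l+1\le\lexp{d+1}$ translates to $l\le\lexp{d-1}$ just makes precise the index bookkeeping the paper leaves implicit.
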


\section{Lower bounds}
\label{sec:lb}

In the previous section we proved upper bounds on the face numbers of
the Minkowski sum $P_1\oplus{}P_2$ of two polytopes $P_1$ and
$P_2$, and we provided necessary and sufficient conditions for these
bounds to hold. However, there is one remaining important question:
Are these bounds tight? In this section give a positive answer to
this question. 

We recall, from the introductory section, the already known results,
and discuss how they are related to the results in this paper.
It is already known (\eg cf. \cite{bkos-cgaa-00}) that the maximum
number of vertices/edges of the Minkowski sum of two
polygons (\ie 2-polytopes) is the sum of the vertices/edges of
the summands. These match our expressions for $d=2$ in Theorem
\ref{thm:minksum_ub}.
Fukuda and Weibel \cite{fw-fmacp-07} have shown tight expressions for
the number of $k$-faces, $0\le{}k\le{}2$, of the Minkowski sum of two
3-polytopes $P_1$ and $P_2$, as a function of the number of
vertices of $P_1$ and $P_2$. These maximal values are given in
relations \eqref{equ:ub3}, and match our expressions for $d=3$ in
Theorem \ref{thm:minksum_ub}.
In the same paper, Fukuda and Weibel have shown that given $r$
$d$-polytopes $P_1,P_2,\ldots,P_r$, the number of $k$-faces of
$P_1\oplus{}P_2\oplus{}\ldots\oplus{}P_r$ is bounded from above as per
relation \eqref{equ:ub-trivial}. These bounds have been shown to be
tight for $d\ge{}4$, $r\le\lexp{d}$, and for all $k$ with
$0\le{}k\le\lexp{d}-r$. For $r=2$, the upper bounds in
\eqref{equ:ub-trivial} reduce to 
\begin{equation}\label{equ:WeibelBound2}
  f_k(P_1\oplus{}P_2)\le
  \sum_{j=1}^{k+1}\binom{n_1}{j}\binom{n_2}{k+2-j},
  \qquad 0\le{}k\le{}d-1,
\end{equation}
and are tight for all $k$, with $0\le{}k\le{}\lexp{d}-2$.
According to Fukuda and Weibel \cite{fw-fmacp-07}, these upper bounds
are attained when considering two cyclic $d$-polytopes $P_1$ and
$P_2$, with $n_1$ and $n_2$ vertices, respectively, with disjoint
vertex sets. As we show below, this construction gives, in fact,
tight bounds on the number of $k$-faces of the Minkowski sum for all
$0\le{}k\le{}d-1$, when the dimension $d$ is even.

\begin{theorem}\label{thm:lb-even}
  Let $d\ge{}2$ and $d$ is even. Consider two cyclic $d$-polytopes $P_1$
  and $P_2$ with disjoint vertex sets on the $d$-dimensional moment
  curve, and let $n_j$ be the number of vertices of $P_j$, $j=1,2$. 
  Then, for all $1\le{}k\le{}d$:
  \begin{equation*}
    f_{k-1}(P_1\oplus{}P_2)=f_k(C_{d+1}(n_1+n_2))
    -\sum_{i=0}^{\lexp{d+1}}\binom{d+1-i}{k+1-i}
    \left(\binom{n_1-d-2+i}{i}+\binom{n_2-d-2+i}{i}\right),
  \end{equation*}
  where $C_d(n)$ stands for the cyclic $d$-polytope with $n$ vertices.
\end{theorem}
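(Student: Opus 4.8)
The strategy is to verify, via the characterization in Lemma \ref{lem:fkF_ub} (or equivalently Theorem \ref{thm:minksum_ub}), that the convex hull $P = CH_{d+1}(\{P_1,P_2\})$ obtained from the two cyclic $d$-polytopes is $\lexp{d}$-neighborly; since $d$ is even, item (ii) of Theorem \ref{thm:minksum_ub} then yields equality for all $1\le k\le d$, which is exactly the claimed formula. So the entire content of the theorem reduces to one geometric/combinatorial fact: \emph{the $(d+1)$-polytope $P$, built by placing a cyclic $d$-polytope on $\{x_{d+1}=0\}$ and another cyclic $d$-polytope with a disjoint parameter set on $\{x_{d+1}=1\}$, is $\lexp{d}$-neighborly.}

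\textbf{Key steps.} First I would set up coordinates: put the vertices of $P_1$ at $(\gamma_d(t_i),0)$ for parameters $t_1<\cdots<t_{n_1}$ on the $d$-dimensional moment curve $\gamma_d$, and the vertices of $P_2$ at $(\gamma_d(s_j),1)$ for parameters $s_1<\cdots<s_{n_2}$, with $\{t_i\}\cap\{s_j\}=\emptyset$; by genericity of the moment curve we may further assume all $n_1+n_2$ parameters are distinct and, if convenient, interleaved in any prescribed way. The natural idea is to recognize $P$ itself as living on a single moment-like curve in $\euc^{d+1}$. Indeed, consider the curve $t\mapsto(\gamma_d(t), \phi(t))$ in $\euc^{d+1}$ for a suitable real function $\phi$; I would choose $\phi$ so that $\phi(t_i)=0$ and $\phi(s_j)=1$ for the finitely many parameters in use — e.g. a polynomial interpolant, or more robustly argue that because the $x_{d+1}$-coordinate only takes two values, the relevant sign/order conditions reduce to conditions on the curve $\gamma_d$ alone. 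The crucial computation is the standard determinant/Gale-type criterion: a subset $W$ of at most $\lexp{d+1}$ vertices of $P$ forms a face iff the affine functional vanishing on $W$ can be made nonnegative on all other vertices, which for points on a moment-type curve is governed by the sign pattern of a Vandermonde-like determinant. Because every vertex lies on $\{x_{d+1}=0\}$ or $\{x_{d+1}=1\}$, I would show that the "even gap / Gale evenness"–type condition for $\gamma_d$ in $\euc^d$ (which holds since $P_1,P_2$ are cyclic, hence $\lexp{d}$-neighborly) lifts to the required condition in $\euc^{d+1}$ for all subsets of size $\le\lexp{d}$; the extra $x_{d+1}$ coordinate costs exactly one in the neighborliness budget, which is why one gets $\lexp{d}$-neighborly in $\euc^{d+1}$ rather than $\lexp{d+1}$. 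Finally, since $d$ is even, $\lexp{d}=\lexp{d}$ and $\lexp{d+1}=\lexp{d}$ as well, so $\lexp{d}$-neighborliness of $P$ is precisely the hypothesis needed in Theorem \ref{thm:minksum_ub}(ii), and the face-count equality follows immediately.

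\textbf{Alternative route and the main obstacle.} An alternative, perhaps cleaner, path avoids analyzing $P$ directly: one can instead verify the equality condition of Lemma \ref{lem:fbBbound} / Lemma \ref{lem:hFbound}, namely that $P$ is \bn{\lexp{d}}, which means every choice of $S_1\subseteq V_1$, $S_2\subseteq V_2$ with $|S_1|+|S_2|\le\lexp{d}$ spans a face of $P$. Concretely this is the statement that for disjoint parameter sets on the moment curve, any $\le\lexp{d}$ vertices split between the two "levels" are the vertex set of a face of the Cayley polytope — a fact that one can check by exhibiting a supporting hyperplane, i.e. an affine functional on $\euc^{d+1}$ that vanishes on $S_1\cup S_2$ and is strictly positive on the remaining vertices, using the explicit structure of the moment curve and a perturbation argument. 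The main obstacle, in either route, is the bookkeeping of the Gale-evenness condition across the two levels: one must show that the disjointness (and, if needed, a suitable separation) of the two parameter sets guarantees the sign condition for \emph{every} admissible split $S_1\cup S_2$, including the extreme cases where $S_2=\emptyset$ (reducing to neighborliness of $P_1$ as a facet of $P$) or $|S_1|=|S_2|$. I expect this verification — essentially a careful case analysis on how the parameters of $S_1\cup S_2$ interleave with the remaining parameters — to be the technical heart of the argument, while everything after "$P$ is $\lexp{d}$-neighborly" is a direct invocation of the machinery already built in Section \ref{sec:ub}.
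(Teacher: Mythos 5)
Your reduction is the same as the paper's: for even $d$ everything hinges on showing that $P=CH_{d+1}(\{P_1,P_2\})$ is $\lexp{d}$-neighborly, after which Theorem \ref{thm:minksum_ub}(ii) gives equality for all $1\le k\le d$. But that neighborliness claim is exactly the part you do not prove; you describe it yourself as the ``technical heart'' and leave it as an expected case analysis. Moreover, the main device you float for it is unsound as stated: placing all vertices on a curve $t\mapsto(\gamma_d(t),\phi(t))$ with $\phi$ an interpolating polynomial does not help, because such a curve is not a moment curve of order $d+1$ (the last coordinate is a high-degree polynomial in $t$), so neighborliness of its convex hull is not automatic and the ``the extra coordinate costs exactly one in the budget'' heuristic is not an argument. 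A direct Gale/determinant verification for every split $S_1\cup S_2$ can be made to work, but it is genuinely delicate — it is precisely the kind of computation the paper only needs in the odd case, where it requires carefully separated parameter sets and the sign analysis of Lemma \ref{lem:sign-generic-det} — and nothing in your sketch addresses why \emph{arbitrary} disjoint parameter sets suffice when $d$ is even.

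The idea you are missing is that the neighborliness of $P$ can be read off from already-known face counts instead of being re-derived geometrically. Fukuda and Weibel \cite{fw-fmacp-07} proved that for cyclic summands with disjoint vertex sets the bound \eqref{equ:WeibelBound2} is attained for all $k\le\lexp{d}-2$; taking $k=\lexp{d}-2$ gives $f_{\lexp{d}-1}(\fF)=\binom{n_1+n_2}{\lexp{d}}-\binom{n_1}{\lexp{d}}-\binom{n_2}{\lexp{d}}$, and the equality case of Lemma \ref{lem:fbBbound} then says $P$ is $(\lexp{d},V_1)$-bineighborly. Since $P_1$ and $P_2$ are cyclic, hence $\lexp{d}$-neighborly, Lemma \ref{lem:bn+n2n} upgrades this to: $P$ is $\lexp{d}$-neighborly, and Theorem \ref{thm:minksum_ub}(ii) finishes the proof. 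Note also that in your ``alternative route'' bineighborliness alone is not enough for even $d$: by Lemma \ref{lem:fkF_ub}(i) it only forces equality for $k\le\lexp{d+1}$, so the upgrade through Lemma \ref{lem:bn+n2n} (using the neighborliness of the two cyclic facets) is an essential step, not an optional one.
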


\begin{proof}
  Let $V_1$ and $V_2$ be two disjoint sets of points on the
  $d$-dimensional moment curve of cardinalities $n_1$ and $n_2$,
  respectively. Let $P_1$ and $P_2$ be the corresponding cyclic
  $d$-polytopes, and embed them, as in the previous section, in the
  hyperplanes $\{x_{d+1}=0\}$ and $\{x_{d+1}=1\}$ of $\euc^{d+1}$. Let
  $P=CH_{d+1}(\{P_1,P_2\})$ and, again as in the previous section,
  define the set of faces $\fF$ as the set of proper faces of $P$
  intersected by the hyperplane $\tilde\Pi$ with equation
  $\{x_{d+1}=\lambda\}$, $\lambda\in(0,1)$. We then get:
  \begin{equation*}
    f_{\lexp{d}-1}(\fF)=f_{\lexp{d}-2}(P_1\oplus{}P_2)
    =\sum_{j=1}^{\lexp{d}-1}\binom{n_1}{j}\binom{n_2}{\lexp{d}-j}
    =\binom{n_1+n_2}{\lexp{d}}-\binom{n_1}{\lexp{d}}-\binom{n_2}{\lexp{d}},
  \end{equation*}
  which, by Lemma \ref{lem:fbBbound}, implies that $P$ is
  \bn{\lexp{d}}.
  Using Lemma \ref{lem:bn+n2n}, in conjunction with the fact that both
  $P_1$ and $P_2$ are $\lexp{d}$-neighborly, we further conclude that
  $P$ is $\lexp{d}$-neighborly. Hence, by Theorem
  \ref{thm:minksum_ub}, our upper bounds in Theorem
  \ref{thm:minksum_ub} are attained for all face numbers of
  $P_1\oplus{}P_2$.
\end{proof}

If $d\ge{}5$ and $d$ is odd, however, the construction in
\cite{fw-fmacp-07} gives tight bounds for $f_k(P_1\oplus{}P_2)$ for all
$0\le{}k\le{}\lexp{d}-2$, which according to Theorem
\ref{thm:minksum_ub} are not sufficient to establish that the bounds
are tight for the face numbers of all dimensions. To establish the
tightness of the bounds in Theorem \ref{thm:minksum_ub} for all the
face numbers of all dimensions, we need to construct two $d$-polytopes
$P_1$ and $P_2$, with $n_1$ and $n_2$ vertices, respectively, such that
\begin{equation*}
  f_{\lexp{d}}(\fF)=f_{\lexp{d}-1}(P_1\oplus{}P_2)=
  \binom{n_1+n_2}{\lexp{d+1}}-\binom{n_1}{\lexp{d+1}}-\binom{n_2}{\lexp{d+1}},
\end{equation*}
or, equivalently, construct two $d$-polytopes $P_1$ and $P_2$, such
that $P$ is \bn{\lexp{d+1}}.

The rest of this section is devoted to this construction. Before
getting into the technical details we first outline our approach.
In what follows $d\ge{}3$ and $d$ is odd. We denote by
$\mc(t)$, $t>0$, the $(d-1)$-dimensional moment curve, \ie
$\mc(t)=(t,t^2,\ldots,t^{d-1})$, and we define two
additional curves $\mc_1(t;\zeta)$ and $\mc_2(t;\zeta)$ in
$\euc^{d+1}$, as follows:
\begin{equation}\label{equ:mc-def}
  \begin{aligned}
    \mc_1(t;\zeta)&=(t,\zeta{}t^d,t^2,t^3,\ldots,t^{d-1},0),\\
    \mc_2(t;\zeta)&=(\zeta{}t^d,t,t^2,t^3,\ldots,t^{d-1},1),
  \end{aligned}
  \qquad t>0,\qquad \zeta\ge{}0.
\end{equation}
Notice that $\mc_1(t;\zeta)$ and $\mc_2(t;\zeta)$, with $\zeta>0$, are
$d$-dimensional moment-like curves, embedded in the hyperplanes
$\{x_{d+1}=0\}$ and $\{x_{d+1}=1\}$, respectively.
%
Choose $n_1+n_2$ real numbers $\alpha_i$, $i=1,\ldots,n_1$, and
$\beta_i$, $i=1,\ldots,n_2$, such that
$0<\alpha_1<\alpha_2<\ldots<\alpha_{n_1}$ and
$0<\beta_1<\beta_2<\ldots<\beta_{n_2}$. Let $\tau$ be a strictly 
positive parameter determined below, and let $U_1$ and
$U_2$ be the $(d-1)$-dimensional point sets:
\begin{equation}\label{equ:U1U2}
  \begin{aligned}
    U_1&=\{\mc_1(\alpha_1\tau),\mc_1(\alpha_2\tau),\ldots,
    \mc_1(\alpha_{n_1}\tau)\},\\
    U_2&=\{\mc_2(\beta_1),\mc_2(\beta_2),\ldots,\mc_2(\beta_{n_2})\}.
  \end{aligned}
\end{equation}
where $\mc_j(\cdot)$ is used to denote $\mc_j(\cdot;0)$, for simplicity.
Notice that $U_1$ and $U_2$ consist of points on the moment curve
$\mc(t)$, embedded in the $(d-1)$-subspaces $\{x_1=0,x_{d+1}=0\}$ and
$\{x_2=0,x_{d+1}=1\}$ of $\euc^{d+1}$, respectively.
Call $Q_j$ the cyclic $(d-1)$-polytope defined as
the convex hull of the points in $U_j$, $j=1,2$.
We first show that, for sufficiently small $\tau$, any subset $U$
of $\lexp{d+1}$ vertices of $U_1\cup{}U_2$, such that
$U\cap{}U_j\ne\emptyset$, $j=1,2$, defines a $\lexp{d}$-face of
$Q=CH_{d+1}(\{Q_1,Q_2\})$; in other words, we show that, for
sufficiently small $\tau$, the $(d+1)$-polytope $Q$ is
\bn[U_1]{\lexp{d+1}}.
We then appropriately perturb $U_1$ and $U_2$ (by considering a
positive value for $\zeta$) so that they
become $d$-dimensional. Let $V_1$, $V_2$ be the perturbed vertex sets,
and $P_1$, $P_2$ be the resulting $d$-polytopes ($V_j$ is the vertex
set of $P_j$).
The final step of our construction amounts to considering the
$(d+1)$-polytope $P=CH_{d+1}(\{P_1,P_2\})$, and arguing that, if the 
perturbation parameter $\zeta$ is sufficiently small, then $P$ is
\bn[V_1]{\lexp{d+1}}. In view of Theorem \ref{thm:minksum_ub}, this
establishes the tightness of our bounds for all face numbers of
$P_1\oplus{}P_2$.

We start off with a technical lemma. Its proof may be found in Section
\ref{app:signdet} of the Appendix.

\newcommand{\technicallemma}[1]{
  \begin{lemma}{#1}
    Fix two integers $k\ge{}2$ and $\ell\ge{}2$, such that
    $k+\ell$ is odd. Let $D_{k,\ell}(\tau)$ be the
    $(k+\ell)\times{}(k+\ell)$ determinant:
    \begin{equation*}
      D_{k,\ell}(\tau)=
      \begin{vmatrix}
        1 &1&\cdots& 1& 0&0&\cdots& 0\\[2pt]
        x_1\tau&x_2\tau&\cdots&x_k\tau&0&0&\cdots& 0\\[2pt]
        0 &0&\cdots& 0& 1&1&\cdots& 1\\[2pt]
        0&0&\cdots&0&y_1&y_2&\cdots&y_\ell\\[2pt]
        x_1^2\tau^2&x_2^2\tau^2&\cdots&x_k^2\tau^2&
        y_1^2&y_2^2&\cdots&y_\ell^2\\[2pt]
        x_1^3\tau^3&x_2^3\tau^3&\cdots&x_k^3\tau^3&
        y_1^3&y_2^3&\cdots&y_\ell^3\\[2pt]
        \vdots&\vdots&&\vdots&\vdots&\vdots&&\vdots\\[2pt]
        x_1^m\tau^m&x_2^m\tau^m&\cdots&x_k^m\tau^m&
        y_1^m&y_2^m&\cdots&y_\ell^m
      \end{vmatrix},\qquad m=k+\ell-3,
    \end{equation*}
    where $0<x_1<x_2<\ldots<x_k$, $0<y_1<y_2<\ldots<y_\ell$, and
    $\tau>0$. Then, there exists some $\tau_0>0$ (that depends on
    the $x_i$'s, the $y_i$'s, $k$, and $\ell$) such that for all
    $\tau\in(0,\tau_0)$, the determinant $D_{k,\ell}(\tau)$ is
    strictly positive.
  \end{lemma}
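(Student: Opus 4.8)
The plan is to show that $D_{k,\ell}(\tau)$, viewed as a polynomial in $\tau$, has a \emph{lowest-order term} in $\tau$ that is strictly positive; then for all sufficiently small $\tau>0$ the sign of $D_{k,\ell}(\tau)$ agrees with the sign of this leading coefficient, so picking $\tau_0$ below the smallest positive root of $D_{k,\ell}$ finishes the argument. To find the lowest-order term, I would expand the determinant along its $(k+\ell)!$ permutation terms and track the power of $\tau$ contributed by each term. The first $k$ columns carry the $\tau$-scaled powers $x_j^p\tau^p$ (with the convention $\tau^0=1$ for the top two rows of those columns, one of which is a constant row and one of which is $x_j\tau$), while the last $\ell$ columns are $\tau$-free. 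So the $\tau$-exponent of a permutation term is the sum of the row-exponents that get paired with the first $k$ columns. The row-exponents available are $0,1,2,3,\ldots,m$ but with the crucial structural feature that rows $1,2$ (the block $\begin{smallmatrix}1&\cdots&1\\ x_1\tau&\cdots\end{smallmatrix}$) are supported only on the first $k$ columns, and rows $3,4$ (the block $\begin{smallmatrix}1&\cdots&1\\ y_1&\cdots\end{smallmatrix}$) are supported only on the last $\ell$ columns. Hence in any nonzero permutation term rows $1,2$ \emph{must} hit two of the first $k$ columns and rows $3,4$ must hit two of the last $\ell$ columns; the remaining $k-2$ of the first columns must be filled from among the power-rows $x_i^2\tau^2,\ldots,x_i^m\tau^m$ (exponents $2,\ldots,m$), and the remaining $\ell-2$ last columns from the same power-rows.

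Therefore every nonzero term has $\tau$-exponent equal to $0+1$ (from the top block) plus the sum of $k-2$ distinct integers chosen from $\{2,3,\ldots,m\}$ (the power-rows routed to the first columns). This sum is minimized by choosing the $k-2$ smallest available exponents $2,3,\ldots,k-1$, giving minimal total exponent $E_{\min}=1+(2+3+\cdots+(k-1))=1+\binom{k}{2}-1=\binom{k}{2}$. Now comes the key simplification: the coefficient of $\tau^{E_{\min}}$ is obtained by keeping only those permutation terms that route exactly the row-exponents $\{0,1,2,\ldots,k-1\}$ to the first $k$ columns and the row-exponents $\{0,1,k,k+1,\ldots,m\}$ (i.e. the two rows of the $y$-block plus the remaining high power-rows) to the last $\ell$ columns. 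Summing those terms \emph{factors as a product of two determinants}: a $k\times k$ matrix whose rows are $1,\,x_j\tau,\,x_j^2\tau^2,\ldots,x_j^{k-1}\tau^{k-1}$ (columns indexed by $j=1,\ldots,k$) and an $\ell\times\ell$ matrix whose rows are $1,\,y_i,\,y_i^{k},y_i^{k+1},\ldots,y_i^{m}$ (columns $i=1,\ldots,\ell$), up to the sign of the shuffle permutation that interleaves the two column-blocks into the original row order. The first determinant is $\tau^{\binom{k}{2}}$ times a Vandermonde in the $x_j$'s, hence strictly positive; the second is a ``generalized Vandermonde'' / Schur-polynomial-type determinant in the $y_i$'s with all exponents distinct and increasing, hence also strictly positive (it equals $\prod_{i<i'}(y_{i'}-y_i)$ times a Schur polynomial with nonnegative coefficients evaluated at positive arguments). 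The remaining work is purely bookkeeping: compute the sign of the shuffle permutation and check it is $+1$, which is exactly where the hypothesis that $k+\ell$ is odd (equivalently $m=k+\ell-3$ is even, making the row-index pattern work out) must be used.

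The main obstacle I anticipate is precisely this sign computation: one has to pin down the permutation that takes the row order $(\text{row}_1,\text{row}_2,\text{row}_3,\text{row}_4,\text{power}_2,\ldots,\text{power}_m)$ into the block-ordered form ``$x$-block rows, then $y$-block rows'' and verify its sign is $+1$ under the parity assumption on $k+\ell$; a parity slip here would flip the conclusion. A secondary technical point is making rigorous the passage from ``lowest-order coefficient positive'' to ``$D_{k,\ell}(\tau)>0$ on an interval $(0,\tau_0)$'': since $D_{k,\ell}(\tau)=c\,\tau^{\binom{k}{2}}(1+O(\tau))$ with $c>0$, continuity gives a $\tau_0>0$ depending on $c$ and on the other coefficients (hence on the $x_i,y_i,k,\ell$) below which the bracket stays positive, which is all that is claimed. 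I would present the determinant factorization via the Cauchy–Binet-style grouping of permutation terms, cite the standard fact that generalized Vandermonde determinants with strictly increasing exponents and positive nodes are positive (Schur polynomials have nonnegative coefficients), and relegate the shuffle-sign count to the appendix.
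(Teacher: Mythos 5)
Your proposal is correct and follows essentially the same route as the paper's proof, which performs a Laplace expansion along the first $k$ columns, observes that only row-index sets containing rows $1,2$ and avoiding rows $3,4$ survive, isolates the minimal-power term $\tau^{k(k-1)/2}$ times a Vandermonde determinant in the $x_i$'s times a generalized Vandermonde determinant in the $y_i$'s (exponents $0,1,k,\ldots,m$), and then checks the Laplace sign is $+1$ before letting $\tau\to 0^+$. One correction to your anticipated obstacle: that sign is $(-1)^{\frac{k(k+1)}{2}+|\mb{r}|}$ with $|\mb{r}|=\frac{k(k+1)}{2}+2k-4$, hence $+1$ for all $k$ and $\ell$, so the hypothesis that $k+\ell$ is odd is not needed anywhere in this lemma --- it matters only in Lemma \ref{lem:kKQbn}, where reducing $H_U(\mb{u})$ to the form $D_{k,\ell}(\tau)$ must be done with an even number of row and column swaps, and that is where the parity of $d$ enters.
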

}

\technicallemma{\label{lem:sign-generic-det}}

\newcounter{thmcopy}
\setcounter{thmcopy}{\thetheorem}

We now formally proceed with our construction. As described
above, consider the vertex sets $U_1$ and $U_2$
(cf. \eqref{equ:U1U2}), and call $Q_j$ the cyclic
$(d-1)$-polytope with vertex set $U_j$, $j=1,2$. Notice that
$Q_1$ (\resp $Q_2$) is embedded in the $(d-1)$-subspace
$\{x_2=0,x_{d+1}=0\}$ (\resp $\{x_1=0,x_{d+1}=1\}$) of
$\euc^{d+1}$.
As in the previous section, call $\tilde{\Pi}$ the hyperplane of
$\euc^{d+1}$ with equation $\{x_{d+1}=\lambda\}$, $\lambda\in(0,1)$.
Let $Q=CH_{d+1}(\{Q_1,Q_2\})$, and let
$\fF_{Q}$ be the set of proper faces of $Q$ with non-empty
intersection with $\tilde{\Pi}$, i.e., $\fF_{Q}$ consists of all
the proper faces of $Q$, the vertex set of which has non-empty
intersection with both $U_1$ and $U_2$.
The following lemma establishes the first step towards our
construction.

\begin{lemma}\label{lem:kKQbn}
  There exists a sufficiently small positive value $\tau^\star$ for
  $\tau$, such that the $(d+1)$-polytope $Q$ is \bn[U_1]{\lexp{d+1}}.
\end{lemma}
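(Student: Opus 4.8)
The plan is to verify Definition~\ref{def:bn} by hand: for every pair of non-empty subsets $S_1\subseteq U_1$ and $S_2\subseteq U_2$ with $|S_1|+|S_2|\le\lexp{d+1}$, I must produce a supporting hyperplane $H$ of $Q$ with $H\cap Q=\conv(S_1\cup S_2)$, which then exhibits $S_1\cup S_2$ as a (necessarily simplicial) face of dimension $|S_1|+|S_2|-1$. Since every subset of a face is again a face, it suffices to treat the extremal case $|S_1|+|S_2|=\lexp{d+1}=\frac{d+1}{2}$, after first enlarging $S_1,S_2$ inside $U_1,U_2$ if necessary (possible because $n_1,n_2\ge d+1$); along the way one checks that $S_1\cup S_2$ is affinely independent, using that $S_1$ spans a flat inside $\{x_2=0,x_{d+1}=0\}$ and $S_2$ spans a flat inside $\{x_1=0,x_{d+1}=1\}$.

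Writing a candidate functional as $\phi(x)=\sum_{m=1}^{d+1}a_m x_m$ and restricting it to the two curves, the sign pattern of $\phi$ on the vertices of $Q$ is controlled by the two univariate polynomials $p_1(t)=\phi(\mc_1(t))=a_1t+\sum_{m=3}^{d}a_m t^{m-1}$ and $p_2(t)=\phi(\mc_2(t))=a_{d+1}+a_2 t+\sum_{m=3}^{d}a_m t^{m-1}$; note that $\deg p_1,\deg p_2\le d-1$, that $p_1(0)=0$ always, that $p_1$ and $p_2$ share the coefficients of $t^2,\dots,t^{d-1}$, and that $p_2-p_1$ is an arbitrary affine polynomial. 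Normalising so that $Q$ should lie in $\{\phi\ge c_0\}$, I set $p_1(t)-c_0=c\prod_{i:\,\mc_1(\alpha_i\tau)\in S_1}(t-\alpha_i\tau)^2\,U(t)$ with $c>0$ and $U$ monic of degree $d-1-2|S_1|$ (so that $p_1-c_0$ has degree exactly $d-1$), the constraint $p_1(0)=0$ fixing $c_0$; this already forces $p_1\ge c_0$ on the whole $U_1$-parameter range with equality precisely at the parameters of $S_1$, for any choice of $U$ that stays positive there. The remaining freedom --- the $d-1-2|S_1|$ subleading coefficients of $U$ together with the two coefficients of $p_2-p_1$ --- consists of $d+1-2|S_1|=2|S_2|$ parameters, which I use to impose that $p_2-c_0$ vanishes to second order at each of the $|S_2|$ parameters of $S_2$; generically this square linear system determines $\phi$ up to the scalar $c$, and the resulting factorisations $p_1-c_0=c\prod(t-\alpha_i\tau)^2U(t)$ and $p_2-c_0=c'\prod(t-\beta_j)^2W(t)$ give the required face provided $U$ (resp.\ $W$) is positive on the $U_1$- (resp.\ $U_2$-) parameter range.

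The heart of the matter is exactly that last positivity, and this is where the hypothesis that $\tau$ be small is used: the parameters $\alpha_i\tau$ of $U_1$ cluster near $0$, far below the order-one parameters $\beta_j$ of $U_2$, and as $\tau\to0$ the polynomials above degenerate into recognisable shapes (e.g.\ $\prod_i(t-\alpha_i\tau)^2\to t^{2|S_1|}$ on the $\beta$-range, so that the limit of $p_2-c_0$ is a polynomial of even degree $d-1$ with a single global minimum, of value $0$, at the appropriate $\beta$-parameter). To turn this into a proof for each fixed datum I would encode the statement ``the remaining vertex $w\in(U_1\cup U_2)\setminus(S_1\cup S_2)$ lies strictly on the correct side of the hyperplane through $S_1\cup S_2$'' as the sign of a $(d+2)\times(d+2)$ determinant in homogeneous coordinates, Laplace-expand along the rows that the special embeddings of $\mc_1,\mc_2$ force to be zero, and thereby reduce it --- up to a manifestly positive Vandermonde-type cofactor coming from the double-contact parameters --- to a determinant of exactly the form $D_{k,\ell}(\tau)$ of Lemma~\ref{lem:sign-generic-det}, with $k+\ell$ odd (here $k$ and $\ell$ are $|S_1|$ and $|S_2|$ shifted by one or two according to whether $w\in U_1$ or $w\in U_2$). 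Lemma~\ref{lem:sign-generic-det} then supplies a threshold $\tau_0>0$, depending on the $\alpha_i$'s, the $\beta_i$'s and on $S_1,S_2,w$, below which this determinant is positive.

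Finally, since there are only finitely many triples $(S_1,S_2,w)$, I take $\tau^\star$ to be the minimum of the corresponding finitely many thresholds $\tau_0$: for every $\tau\in(0,\tau^\star)$ the construction simultaneously yields a valid supporting hyperplane for each admissible $S_1\cup S_2$, so $Q$ is \bn[U_1]{\lexp{d+1}}. The step I expect to be the main obstacle is the third one, namely carrying out the reduction of the side-of-hyperplane determinant to the canonical determinant $D_{k,\ell}(\tau)$ uniformly over all cases: one has to keep careful track of the magnitude bookkeeping (the sub-cases $w\in U_1\setminus S_1$ and $w\in U_2\setminus S_2$ behave differently because a $U_1$-vertex has a tiny parameter while a $U_2$-vertex does not) and to verify that the degree count always closes, which is precisely where the assumption that $d$ is odd --- equivalently $d+1=2\lexp{d+1}$ and $d-1$ even --- enters.
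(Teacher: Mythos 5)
Your overall strategy --- for each admissible pair $(S_1,S_2)$ build a supporting hyperplane touching $Q$ exactly at $\conv(S_1\cup S_2)$, verify the strict side condition for each remaining vertex via the sign of a determinant whose positivity for small $\tau$ comes from Lemma \ref{lem:sign-generic-det}, and finish by taking the minimum of finitely many thresholds --- is the paper's strategy, and your reduction to the extremal size $|S_1|+|S_2|=\lexp{d+1}$ is fine. The gap is exactly at the step you flag as the main obstacle. Your hyperplane is the tangency (double-root) hyperplane: you impose that $p_1-c_0$ and $p_2-c_0$ vanish to second order at the parameters of $S_1$ and $S_2$. Its very existence rests on a square interpolation system you only solve ``generically'', and the positivity of the cofactors $U,W$ on the two parameter ranges is argued only heuristically as $\tau\to 0$. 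More importantly, when you then encode ``$w$ lies strictly on the correct side'' as a $(d+2)\times(d+2)$ determinant, the tangency conditions force derivative (confluent/Hermite) columns into that determinant; such a determinant is \emph{not} of the form $D_{k,\ell}(\tau)$ of Lemma \ref{lem:sign-generic-det}, whose columns are evaluations at \emph{distinct} nodes, and your claimed reduction ``up to a manifestly positive Vandermonde-type cofactor coming from the double-contact parameters'' to a $D_{k,\ell}$ with $k,\ell$ equal to $|S_1|,|S_2|$ shifted by one or two is not established; it is not even consistent, since $D_{k,\ell}$ has size $k+\ell$, the side-of-hyperplane determinant has size $d+2$, and with your values $k+\ell=\lexp{d+1}+O(1)$ is not odd uniformly in $d$ (its parity depends on $d \bmod 4$), so Lemma \ref{lem:sign-generic-det} would not even be applicable as stated.

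The missing idea --- which is precisely the paper's device --- is to abandon true tangency and pass the hyperplane through pairs of \emph{distinct} nearby points: for each chosen vertex with parameter $\alpha_i\tau$ (resp.\ $\beta_j$) one also takes the companion parameter $(\alpha_i+\epsilon)\tau$ (resp.\ $\beta_j+\epsilon$), with $\epsilon$ small enough that companions do not interleave with other chosen parameters. This produces $2\mu+2\nu=d+1$ genuine points on the two curves, so the hyperplane is defined directly as $H_U(\mb{x})=0$ with $H_U$ a $(d+2)\times(d+2)$ determinant --- no interpolation system, no genericity, no positivity of cofactors to check --- and for any other vertex $\mb{u}$ the quantity $H_U(\mb{u})$ is turned, by an even number of column swaps and $d-1$ row swaps (this is where $d$ odd and the choice of $\epsilon$ enter), into exactly $D_{k,\ell}(\tau)$ with $(k,\ell)=(2\mu+1,2\nu)$ or $(2\mu,2\nu+1)$, so that $k+\ell=d+2$ is odd and Lemma \ref{lem:sign-generic-det} applies verbatim; your final ``take the minimum over finitely many $(S_1,S_2,w)$'' step then works as you describe. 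If you insist on keeping genuine double roots, you would instead have to state and prove a confluent analogue of Lemma \ref{lem:sign-generic-det} and prove nonsingularity of your linear system for small $\tau$; as written, neither is done, so the argument does not close.
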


\begin{proof}
  Let $t_i=\alpha_i\tau$, $t_i^\epsilon=(\alpha_i+\epsilon)\tau$,
  $1\le{}i\le{}n_1$, and $s_i=\beta_i$,
  $s_i^\epsilon=\beta_i+\epsilon$, $1\le{}i\le{}n_2$, where $\epsilon>0$
  is chosen such that $\alpha_i+\epsilon<\alpha_{i+1}$, for all
  $1\le{}i<n_1$, and $\beta_i+\epsilon<\beta_{i+1}$, for all
  $1\le{}i<n_2$.

  Choose a subset $U$ of $U_1\cup{}U_2$ of size
  $\lexp{d+1}$, such that $U\cap{}U_j\ne\emptyset$, $j=1,2$.
  We denote by $\mu$ (\resp $\nu$) the cardinality of
  $U\cap{}U_1$ (\resp $U\cap{}U_2$), and, clearly,
  $\mu+\nu=\lexp{d+1}$.
  Let $\mc_1(t_{i_1}),\mc_1(t_{i_2}),\ldots,\mc_1(t_{i_\mu})$ be the
  vertices in $U\cap{}U_1$, where $i_1<i_2<\ldots<i_\mu$, and
  analogously,
  let $\mc_2(s_{j_1}),\mc_2(s_{j_2}),\ldots,\mc_2(s_{j_\nu})$ be the
  vertices in $U\cap{}U_2$, where $j_1<j_2<\ldots<j_\nu$.
  Let $\mb{x}=(x_1,x_2,\ldots,x_{d+1})$ and define the
  $(d+2)\times(d+2)$ determinant $H_U(\mb{x})$ as follows:
  \begin{equation}\label{equ:H-def}
    H_U(\mb{x})=\left|
      \begin{array}{c@{\,\,\,\,}c@{\,\,\,\,}c@{\,\,\,\,}c@{\,\,\,\,}c@{\,\,\,\,}c@{\,\,\,\,}c@{\,\,\,\,}c@{\,\,\,\,}c@{\,\,\,\,}c@{\,\,\,\,}c}
        1&1&1&\cdots&1&1&1&1&\cdots&1&1\\
        \mb{x}&\mc_1(t_{i_1})&\mc_1(t_{i_1}^\epsilon)&\cdots
        &\mc_1(t_{i_\mu})&\mc_1(t_{i_\mu}^\epsilon)
        &\mc_2(s_{j_1})&\mc_2(s_{j_1}^\epsilon)&\cdots
        &\mc_2(s_{j_\nu})&\mc_2(s_{j_\nu}^\epsilon)
      \end{array}
    \right|.
  \end{equation}
  The equation $H_U(\mb{x})=0$ is the equation of a
  hyperplane in $\euc^{d+1}$ that passes through the points in
  $U$. We claim that, for any choice of $U$,
  and for all vertices $\mb{u}$ in $(U_1\cup{}U_2)\sm{}U$,
  we have $H_U(\mb{u})>0$ for sufficiently small $\tau$.
  
  Consider first the case $\mb{u}\in{}U_1\sm{}U$. Then,
  $\mb{u}=\mc_1(t)=(t,0,t^2,t^3,\ldots,t^{d-1},0)$, $t=\alpha\tau$,
  for some $\alpha\nin\{\alpha_{i_1},\alpha_{i_2},\ldots,\alpha_{i_\mu}\}$,
  in which case $H_U(\mb{u})$ becomes:
  \begin{align*}
    H_U(\mb{u})&=\left|
      \begin{array}{c@{\,\,\,\,}c@{\,\,\,\,}c@{\,\,\,\,}c@{\,\,\,\,}c@{\,\,\,\,}c@{\,\,\,\,}c@{\,\,\,\,}c@{\,\,\,\,}c@{\,\,\,\,}c@{\,\,\,\,}c}
        1&1&1&\cdots&1&1&1&1&\cdots&1&1\\
        \mc_1(t)&\mc_1(t_{i_1})&\mc_1(t_{i_1}^\epsilon)&\cdots
        &\mc_1(t_{i_\mu})&\mc_1(t_{i_\mu}^\epsilon)
        &\mc_2(s_{j_1})&\mc_2(s_{j_1}^\epsilon)&\cdots
        &\mc_2(s_{j_\nu})&\mc_2(s_{j_\nu}^\epsilon)
      \end{array}
    \right|\\[5pt]
    &=\left|
      \begin{array}{ccccccccccc}
        1&1&1&\cdots&1&1&1&1&\cdots&1&1\\[3pt]
        t&t_{i_1}&t_{i_1}^\epsilon&\cdots&t_{i_\mu}&t_{i_\mu}^\epsilon
        &0&0&\cdots&0&0\\[3pt]
        0&0&0&\cdots&0&0&s_{j_1}&s_{j_1}^\epsilon&\cdots
        &s_{j_\nu}&s_{j_\nu}^\epsilon\\[3pt]
        t^2&t_{i_1}^2&(t_{i_1}^\epsilon)^2&\cdots&t_{i_\mu}^2&(t_{i_\mu}^\epsilon)^2
        &s_{j_1}^2&(s_{j_1}^\epsilon)^2
        &\cdots&s_{j_\nu}^2&(s_{j_\nu}^\epsilon)^2\\[3pt]
        t^3&t_{i_1}^3&(t_{i_1}^\epsilon)^3&\cdots&t_{i_\mu}^3&(t_{i_\mu}^\epsilon)^3
        &s_{j_1}^3&(s_{j_1}^\epsilon)^3
        &\cdots&s_{j_\nu}^3&(s_{j_\nu}^\epsilon)^3\\[3pt]
        \vdots&\vdots&\vdots&&\vdots&\vdots&\vdots&\vdots&&\vdots&\vdots\\[3pt]
        t^{d-1}&t_{i_1}^{d-1}&(t_{i_1}^\epsilon)^{d-1}&\cdots&t_{i_\mu}^{d-1}&
        (t_{i_\mu}^\epsilon)^{d-1}&
        s_{j_1}^{d-1}&(s_{j_1}^\epsilon)^{d-1}&\cdots&s_{j_\nu}^{d-1}
        &(s_{j_\nu}^\epsilon)^{d-1}\\[3pt]
        0&0&0&\cdots&0&0&1&1&\cdots&1&1
      \end{array}\right|.
  \end{align*}
  Observe now that we can transform $H_U(\mb{u})$ in the form
  of the determinant $D_{k,\ell}(\tau)$ of Lemma
  \ref{lem:sign-generic-det}, where $k=2\mu+1$ and $\ell=2\nu$,
  by means of the following determinant transformations:
  \begin{enumerate}
  \item Subtract the last row of $H_U(\mb{u})$ from the
    first.
  \item Shift the first column of $H_U(\mb{u})$ to the right,
    so that the non-zero values of the second row of
    $H_U(\mb{u})$ occupy columns 1 through $2\mu+1$ and are in
    increasing order. This has to be done by an \emph{even} number of
    column swaps, since $t$ cannot be between some $t_{i_k}$ and
    $t_{i_k}^\epsilon$ (due to the way we have chosen $\epsilon$).
  \item
    Shift the last row of $H_U(\mb{u})$ up, so as to become the
    third row of $H_U(\mb{u})$. This can be done by $d-1$
    row swaps, which implies that the sign of the determinant does
    not change (recall that $d$ is odd).
  \end{enumerate}

  Consider now the case $\mb{u}\in{}U_2\sm{}U$. Then,
  $\mb{u}=\mc_2(s)=(0,s,s^2,s^3,\ldots,s^{d-1},1)$,
  for some
  $s\not\in\{s_{j_1},s_{j_2},\ldots,s_{j_\nu}\}$, in
  which case $H_U(\mb{u})$ becomes:
  \begin{align*}
    H_U(\mb{u})&=\left|
      \begin{array}{c@{\,\,\,\,}c@{\,\,\,\,}c@{\,\,\,\,}c@{\,\,\,\,}c@{\,\,\,\,}c@{\,\,\,\,}c@{\,\,\,\,}c@{\,\,\,\,}c@{\,\,\,\,}c@{\,\,\,\,}c}
        1&1&1&\cdots&1&1&1&1&\cdots&1&1\\
        \mc_2(s)&\mc_1(t_{i_1})&\mc_1(t_{i_1}^\epsilon)&\cdots
        &\mc_1(t_{i_\mu})&\mc_1(t_{i_\mu}^\epsilon)
        &\mc_2(s_{j_1})&\mc_2(s_{j_1}^\epsilon)&\cdots
        &\mc_2(s_{j_\nu})&\mc_2(s_{j_\nu}^\epsilon)
      \end{array}
    \right|\\[5pt]
    &=\left|
      \begin{array}{ccccccccccc}
        1&1&1&\cdots&1&1&1&1&\cdots&1&1\\[3pt]
        0&t_{i_1}&t_{i_1}^\epsilon&\cdots&t_{i_\mu}&t_{i_\mu}^\epsilon
        &0&0&\cdots&0&0\\[3pt]
        s&0&0&\cdots&0&0&s_{j_1}&s_{j_1}^\epsilon&\cdots
        &s_{j_\nu}&s_{j_\nu}^\epsilon\\[3pt]
        s^2&t_{i_1}^2&(t_{i_1}^\epsilon)^2&\cdots
        &t_{i_\mu}^2&(t_{i_\mu}^\epsilon)^2
        &s_{j_1}^2&(s_{j_1}^\epsilon)^2&\cdots
        &s_{j_\nu}^2&(s_{j_\nu}^\epsilon)^2\\[3pt]
        s^3&t_{i_1}^3&(t_{i_1}^\epsilon)^3&\cdots
        &t_{i_\mu}^3&(t_{i_\mu}^\epsilon)^3
        &s_{j_1}^3&(s_{j_1}^\epsilon)^3&\cdots
        &s_{j_\nu}^3&(s_{j_\nu}^\epsilon)^3\\[3pt]
        \vdots&\vdots&\vdots&&\vdots&\vdots&\vdots&\vdots&&\vdots&\vdots\\[3pt]
        s^{d-1}&t_{i_1}^{d-1}&(t_{i_1}^\epsilon)^{d-1}&\cdots
        &t_{i_\mu}^{d-1}&(t_{i_\mu}^\epsilon)^{d-1}&
        s_{j_1}^{d-1}&(s_{j_1}^\epsilon)^{d-1}&\cdots
        &s_{j_\nu}^{d-1}&(s_{j_\nu}^\epsilon)^{d-1}\\[3pt]
        1&0&0&\cdots&0&0&1&1&\cdots&1&1
      \end{array}\right|.
  \end{align*}
  As for $\mb{u}\in{}U_1\sm{}U$, observe that we can
  transform $H_U(\mb{u})$ in the form of the determinant
  $D_{k,\ell}(\tau)$ of Lemma \ref{lem:sign-generic-det}, where
  now $k=2\mu$ and $\ell=2\nu+1$, by means of the following
  determinant transformations:
  \begin{enumerate}
  \item Subtract the last row of $H_U(\mb{u})$ from the
    first.
  \item Shift the first column of $H_U(\mb{u})$ to the right,
    so that the non-zero values of the third row of
    $H_U(\mb{u})$ occupy columns $2\mu+1$ through $d+2$ and are in
    increasing order. This has to be done by an \emph{even} number of
    column swaps, since we have to shift through the first
    $2\mu$ columns, and since $s$ cannot be between some $s_{j_k}$
    and $s_{j_k}^\epsilon$ (due to the way we have chosen $\epsilon$).
  \item
    Shift the last row of $H_U(\mb{u})$ up, so as to become the
    third row of $H_U(\mb{u})$. This can be done by $d-1$
    row swaps, which implies that the sign of the determinant does
    not change (recall that $d$ is odd).
  \end{enumerate}
  We thus conclude that, for any specific choice of $U$, and for
  any specific point $\mb{u}\in{}(U_1\cup{}U_2)\sm{}U$,
  there exists some $\tau_0>0$ (cf. Lemma \ref{lem:sign-generic-det})
  that depends on $\mb{u}$ and $U$, such that for all
  $\tau\in(0,\tau_0)$, $H_U(\mb{u})>0$.

  Since the total number of subsets $U$ is
  $\binom{n_1+n_2}{\lexp{d+1}}-\binom{n_1}{\lexp{d+1}}
  -\binom{n_2}{\lexp{d+1}}$, while for each such subset $U$ we need
  to consider the $(n_1+n_2-\lexp{d+1})$ vertices in
  $(U_1\cup{}U_2)\sm{}U$, it suffices to consider a value
  $\tau^\star$ for $\tau$ that is small enough, so that all
  $(n_1+n_2-\lexp{d+1})\left[\binom{n_1+n_2}{\lexp{d+1}}
    -\binom{n_1}{\lexp{d+1}}-\binom{n_2}{\lexp{d+1}}\right]$ possible
  determinants $H_U(\mb{u})$ are strictly positive. 
  Call $U_j^\star$, $j=1,2$, the vertex sets we get for
  $\tau=\tau^\star$, $Q_j^\star$ the corresponding polytopes, and
  $Q^\star$ the resulting convex hull\footnote{In fact $U_2$ is
    independent of $\tau$, but we use a unified notation for
    simplicity.}.
  Our analysis immediately implies that
  \emph{for each}
  $U^\star\subseteq{}U_1^\star\cup{}U_2^\star$, where
  $U^\star\cap{}U_j^\star\ne\emptyset$, $j=1,2$, the equation
  $H_{U^\star}(\mb{x})=0$, $\mb{x}\in\euc^{d+1}$, is the equation of a
  supporting hyperplane for $Q^\star$ passing through the 
  vertices of $U^\star$ (and those only). In other words, every
  subset of $U^\star$ of $U_1^\star\cup{}U_2^\star$, where
  $|U^\star|=\lexp{d+1}$, $U^\star\cap{}U_j^\star\ne\emptyset$, $j=1,2$,
  defines a $\lexp{d}$-face of $Q^\star$, which means that 
  $Q^\star$ is \bn[U_1^\star]{\lexp{d+1}}.
\end{proof}

We are now ready to perform the last step of our construction. 
We assume we have chosen $\tau$ to be equal to $\tau^\star$, and, as in
the proof of Lemma \ref{lem:kKQbn}, call $U_j^\star$, $Q_j^\star$,
$j=1,2$, the corresponding vertex sets and $(d-1)$-polytopes. Finally,
call $Q^\star$ the convex hull of $Q_1^\star$ and $Q_2^\star$, \ie
$Q^\star=CH_{d+1}(\{Q_1^\star,Q_2^\star\})$.
We perturb the vertex sets $U_1^\star$ and $U_2^\star$,
to get the vertex sets $V_1$ and $V_2$ by considering vertices on the
curves $\mc_1(t;\zeta)$ and $\mc_2(t;\zeta)$, with $\zeta>0$ instead
of the curves $\mc_1(t)$ and $\mc_2(t)$ (cf. \eqref{equ:mc-def}).
More precisely, define the sets $V_1$ and $V_2$ as:
\begin{equation}\label{equ:V1V2zeta}
  \begin{aligned}
    V_1&=\{\mc_1(\alpha_1\tau^\star;\zeta),\mc_1(\alpha_2\tau^\star;\zeta),
    \ldots,\mc_1(\alpha_{n_1}\tau^\star;\zeta)\},\quad\text{and}\\
    V_2&=\{\mc_2(\beta_1;\zeta),\mc_2(\beta_2;\zeta),\ldots,
    \mc_2(\beta_{n_2};\zeta)\},
  \end{aligned}
\end{equation}
where $\zeta>0$. 
Let $P_j$ be the convex hull of the vertices in $V_j$, $j=1,2$, and
notice that $P_j$ is a $\lexp{d}$-neighborly $d$-polytope.
Let $P=CH_{d+1}(\{P_1,P_2\})$, and let $\fF_P$ be the set of proper
faces of $P$ with non-empty intersection with $\tilde{\Pi}$, i.e.,
$\fF_P$ consists of all the proper faces of $P$, the vertex set of
which has non-empty intersection with both $V_1$ and $V_2$.
The following lemma establishes the final step of our construction. In
view of Theorem \ref{thm:minksum_ub}, it also establishes the
tightness of our bounds for all face numbers of $P_1\oplus{}P_2$.

\begin{lemma}
  There exists a sufficiently small positive value $\zeta^\star$ for
  $\zeta$, such that the $(d+1)$-polytope $P$ is \bn[V_1]{\lexp{d+1}}.
\end{lemma}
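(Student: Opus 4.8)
The plan is to run the determinant argument of Lemma~\ref{lem:kKQbn} a second time, now with $\zeta$ --- rather than $\tau$ --- playing the role of the perturbation parameter, and to invoke continuity: the relevant determinants depend polynomially on $\zeta$ and, at $\zeta=0$, are exactly the strictly positive determinants already produced in Lemma~\ref{lem:kKQbn}. First I would freeze $\tau=\tau^\star$ and keep the same auxiliary $\epsilon>0$ as in the proof of Lemma~\ref{lem:kKQbn}. Then, for every admissible $U\subseteq{}V_1\cup{}V_2$ --- i.e.\ $|U|=\lexp{d+1}$ and $U\cap{}V_j\ne\emptyset$, $j=1,2$ --- and for every vertex $\mb{u}\in(V_1\cup{}V_2)\sm{}U$, I would form the determinant $H_U^\zeta(\mb{u})$ obtained from \eqref{equ:H-def} by replacing each moment-like point $\mc_j(\cdot)=\mc_j(\cdot;0)$ with its perturbed version $\mc_j(\cdot;\zeta)$ (keeping the parameters $\alpha_i\tau^\star$, $\beta_i$ and the same $\epsilon$-shifts), and by evaluating at the perturbed vertex $\mb{u}=\mc_j(\cdot;\zeta)$. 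Every entry of this determinant is a polynomial in $\zeta$, hence so is $H_U^\zeta(\mb{u})$, and setting $\zeta=0$ recovers \emph{literally} the determinant $H_U(\mb{u})$ that Lemma~\ref{lem:kKQbn} shows to be strictly positive at $\tau=\tau^\star$ (at $\zeta=0$ the perturbed vertex $\mb{u}=\mc_j(\cdot;\zeta)$ reduces to the corresponding vertex of $U_1^\star\cup{}U_2^\star$).

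Next, since a univariate polynomial that is positive at $0$ stays positive on a one-sided neighbourhood of $0$, for each pair $(U,\mb{u})$ there is a threshold $\zeta_{U,\mb{u}}>0$ with $H_U^\zeta(\mb{u})>0$ for all $\zeta\in[0,\zeta_{U,\mb{u}})$. As there are only finitely many such pairs --- precisely $(n_1+n_2-\lexp{d+1})\big[\binom{n_1+n_2}{\lexp{d+1}}-\binom{n_1}{\lexp{d+1}}-\binom{n_2}{\lexp{d+1}}\big]$ of them --- I would take $\zeta^\star$ to be the minimum of the $\zeta_{U,\mb{u}}$'s; this is strictly positive, and for every $\zeta\in(0,\zeta^\star)$ all the determinants $H_U^\zeta(\mb{u})$ are simultaneously strictly positive.

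I would then convert this into a statement about the faces of $P=CH_{d+1}(\{P_1,P_2\})$, arguing exactly as in the closing paragraph of the proof of Lemma~\ref{lem:kKQbn}. Fix $\zeta\in(0,\zeta^\star)$ and an admissible $U$: the equation $H_U^\zeta(\mb{x})=0$ is an affine equation in $\mb{x}\in\euc^{d+1}$ vanishing at the $d+1$ ``defining'' points of the determinant (the $\lexp{d+1}$ vertices of $U$ together with their $\epsilon$-shifts along $\mc_1(\cdot;\zeta)$, resp.\ $\mc_2(\cdot;\zeta)$). Since $H_U^\zeta(\mb{u})\ne0$ for every $\mb{u}\notin{}U$, this affine form is not identically zero, and, vanishing at $d+1$ points, it cannot be a non-zero constant, so $\{H_U^\zeta(\mb{x})=0\}$ is a genuine hyperplane of $\euc^{d+1}$ through the vertices of $U$; in particular the $d+1$ defining points, a fortiori the $\lexp{d+1}$ vertices of $U$, are affinely independent. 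The positivity $H_U^\zeta(\mb{u})>0$ for every vertex $\mb{u}\in(V_1\cup{}V_2)\sm{}U$ then says this hyperplane supports $P$ and meets it exactly in $\conv(U)$, so $\conv(U)$ is a $(\lexp{d+1}-1)=\lexp{d}$-face of $P$ whose vertex set meets both $V_1$ and $V_2$, i.e.\ a face of $\fF_P$. Distinct admissible $U$ give distinct such faces, so $f_{\lexp{d}}(\fF_P)\ge\binom{n_1+n_2}{\lexp{d+1}}-\binom{n_1}{\lexp{d+1}}-\binom{n_2}{\lexp{d+1}}$; combining this with Lemma~\ref{lem:fbBbound} (applied to the $d$-complex $\bx{P}$, the partition $\{V_1,V_2\}$, and $k=\lexp{d+1}$), which supplies the reverse inequality and identifies equality with $P$ being \bn[V_1]{\lexp{d+1}}, I would conclude that $P$ is \bn[V_1]{\lexp{d+1}}; hence, by part~(iii) of Theorem~\ref{thm:minksum_ub}, the bounds of Theorem~\ref{thm:minksum_ub} are attained by $P_1\oplus{}P_2$ for all face numbers.

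I do not expect a genuinely hard step here: Lemma~\ref{lem:sign-generic-det} and Lemma~\ref{lem:kKQbn} already carry all the weight, and what remains is a continuity-plus-finiteness argument. The one point that needs care is the \emph{order} of the two limiting processes --- $\tau$ must be fixed at $\tau^\star$ before $\zeta$ is driven to $0$, precisely so that the finitely many strict inequalities $H_U(\mb{u})>0$ of Lemma~\ref{lem:kKQbn} are all in force and remain valid under a further sufficiently small perturbation --- together with the routine verification that $H_U^\zeta(\mb{u})|_{\zeta=0}$ is the determinant $H_U(\mb{u})$ of Lemma~\ref{lem:kKQbn}. It is also worth recording, though immediate, that for $\zeta>0$ each $P_j$ is indeed a full-dimensional, $\lexp{d}$-neighborly $d$-polytope, since $\mc_j(\cdot;\zeta)$ is the $d$-dimensional moment curve up to a permutation and a rescaling of coordinates.
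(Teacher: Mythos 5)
Your proposal is correct and follows essentially the same route as the paper: freeze $\tau=\tau^\star$, observe that each determinant is polynomial (hence continuous) in $\zeta$ and reduces at $\zeta=0$ to the strictly positive determinants of Lemma \ref{lem:kKQbn}, then take the minimum threshold over the finitely many pairs $(U,\mb{u})$ and read off supporting hyperplanes. The only (harmless) deviation is your final detour through the face count and Lemma \ref{lem:fbBbound}, where the paper concludes bineighborliness directly from the fact that every admissible vertex subset spans a $\lexp{d}$-face.
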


\begin{proof}
  Similarly to what we have done in the proof of Lemma \ref{lem:kKQbn},
  let $t_i=\alpha_i\tau^\star$, $t_i^\epsilon=(\alpha_i+\epsilon)\tau^\star$,
  $1\le{}i\le{}n_1$, and $s_i=\beta_i$,
  $s_i^\epsilon=\beta_i+\epsilon$, $1\le{}i\le{}n_2$, where $\epsilon>0$
  is chosen such that $\alpha_i+\epsilon<\alpha_{i+1}$, for all
  $1\le{}i<n_1$, and $\beta_i+\epsilon<\beta_{i+1}$, for all
  $1\le{}i<n_2$.

  Choose $V$ a subset of $V_1\cup{}V_2$ of size
  $\lexp{d+1}$, such that $V\cap{}V_j\ne\emptyset$, $j=1,2$. Denote
  by $\mu$ (\resp $\nu$) the cardinality of $V\cap{}V_1$ (\resp
  $V\cap{}V_2$).
  Considering $\zeta$ as a small positive parameter, let
  $\mc_1(t_{i_1};\zeta),\mc_1(t_{i_2};\zeta),\ldots,\mc_1(t_{i_\mu};\zeta)$
  be the vertices in $V\cap{}V_1$, where $i_1<i_2<\ldots<i_\mu$, and
  analogously,
  let
  $\mc_2(s_{j_1};\zeta),\mc_2(s_{j_2};\zeta),\ldots,\mc_2(s_{j_\nu};\zeta)$
  be the vertices in $V\cap{}V_2$, where $j_1<j_2<\ldots<j_\nu$.
  Let $\mb{x}=(x_1,x_2,\ldots,x_{d+1})$ and define the
  $(d+2)\times(d+2)$ determinant $F_V(\mb{x};\zeta)$ as:
  \begin{equation}\label{equ:Hz-def}
    F_V(\mb{x};\zeta)=
    \left|
      \begin{array}{c@{\,\,\,\,}c@{\,\,\,\,}c@{\,\,\,\,}c@{\,\,\,\,}c@{\,\,\,\,}c@{\,\,\,\,}c@{\,\,\,\,}c@{\,\,\,\,}c}
        1&1&1&\cdots&1&1&1&\cdots&1\\
        \mb{x}&\mc_1(t_{i_1};\zeta)&\mc_1(t_{i_1}^\epsilon;\zeta)&\cdots
        &\mc_1(t_{i_\mu}^\epsilon;\zeta)
        &\mc_2(s_{j_1};\zeta)&\mc_2(s_{j_1}^\epsilon;\zeta)&\cdots
        &\mc_2(s_{j_\nu}^\epsilon;\zeta)
      \end{array}
    \right|.
  \end{equation}
  The equation $F_V(\mb{x};\zeta)=0$ is the equation of a
  hyperplane in $\euc^{d+1}$ that passes through the points in
  $V$. We claim that for all vertices
  $\mb{v}\in(V_1\cup{}V_2)\sm{}V$, we have $F_V(\mb{v};\zeta)>0$ for
  sufficiently small $\zeta$.

  Indeed, let $U^\star$ denote the set of vertices in
  $U_1^\star\cup{}U_2^\star$ that correspond to the vertices in $V$, \ie
  $U^\star$ contains the projections, on the hyperplanes
  $\{x_2=0\}$ or $\{x_1=0\}$ of $\euc^{d+1}$, of the vertices in $V$,
  depending on whether these vertices belong to $V_1$ or $V_2$,
  respectively.
  Choose some $\mb{v}\in{}(V_1\cup{}V_2)\sm{}V$. If
  $\mb{v}\in{}V_1\sm{}V$, $\mb{v}$ is of the form
  $\mb{v}=\mc_1(t_i;\zeta)$, $\zeta>0$, for some
  $i\nin\{i_1,i_2,\ldots,i_\mu\}$, whereas if $\mb{v}\in{}V_2\sm{}V$,
  $\mb{v}$ is of the form $\mb{v}=\mc_2(s_j;\zeta)$, $\zeta>0$, for
  some $j\nin\{j_1,j_2,\ldots,j_\nu\}$. In the former case, let
  $\mb{u}^\star=\mc_1(t_i)=\mc_1(t_i;0)$, whereas, in the latter case, let
  $\mb{u}^\star=\mc_2(s_j)=\mc_2(s_j;0)$. In more geometric terms, we
  define $\mb{u}^\star$ to be the projection of $\mb{v}$ on the hyperplanes
  $\{x_2=0\}$ or $\{x_1=0\}$ of $\euc^{d+1}$, depending in whether
  $\mb{v}$ belongs to $V_1\sm{}V$ or $V_2\sm{}V$, respectively, or,
  equivalently, $\mb{u}^\star$ is the (unperturbed) vertex in
  $(U_1^\star\cup{}U_2^\star)\sm{}U^\star$ that corresponds to $\mb{v}$.
  Observe that $F_V(\mb{v};\zeta)$ is a polynomial function in
  $\zeta$, and thus it is continuous with respect to $\zeta$ for any
  $\zeta\in\reals$. This implies that
  \begin{equation}\label{equ:limFv}
    \lim_{\zeta\to{}0^+}F_V(\mb{v};\zeta)=F_{U^\star}(\mb{u};0)
    =H_{U^\star}(\mb{u}^\star),
  \end{equation}
  where we used the fact that $\lim_{\zeta\to{}0^+}\mb{v}=\mb{u}^\star$,
  and observed that
  $F_{U^\star}(\mb{u}^\star;0)=H_{U^\star}(\mb{u}^\star)$,
  where $H_{U^\star}(\mb{u}^\star)$ is the determinant in relation
  \eqref{equ:H-def} in the proof of Lemma \ref{lem:kKQbn}.
  Since $H_{U^\star}(\mb{u}^\star)>0$ (recall that we
  have chosen $\tau$ to be equal to $\tau^\star$), we conclude, from
  \eqref{equ:limFv}, that there exists some $\zeta_0>0$ that
  depends on $\mb{v}$ and $V$, such that for all
  $\zeta\in(0,\zeta_0)$, $F_V(\mb{v};\zeta)>0$.

  Since the total number of subsets $V$ is
  $\binom{n_1+n_2}{\lexp{d+1}}-\binom{n_1}{\lexp{d+1}}
  -\binom{n_2}{\lexp{d+1}}$, while for each such subset $V$ we need
  to consider the $(n_1+n_2-\lexp{d+1})$ vertices in
  $(V_1\cup{}V_2)\sm{}V$, it suffices to consider a value
  $\zeta^\star$ for $\zeta$ that is small enough, so that all
  $(n_1+n_2-\lexp{d+1})\left[\binom{n_1+n_2}{\lexp{d+1}}
    -\binom{n_1}{\lexp{d+1}}-\binom{n_2}{\lexp{d+1}}\right]$ possible
  determinants $F_V(\mb{v};\zeta)$ are strictly positive. 
  Call $V_j^\star$, $j=1,2$, the vertex sets we get for
  $\zeta=\zeta^\star$, $P_j^\star$ the corresponding polytopes, and
  $P^\star$ the resulting convex hull.
  Then, \emph{for each} $V^\star\subseteq{}V_1^\star\cup{}V_2^\star$, where
  $V^\star\cap{}V_j^\star\ne\emptyset$, $j=1,2$, the equation
  $F_{V^\star}(\mb{x};\zeta^\star)=0$, $\mb{x}\in\euc^{d+1}$, is the
  equation of a supporting hyperplane for $P^\star$ passing through
  the vertices of $V^\star$ (and those only). In other words, every
  subset of $V^\star$ of $V_1^\star\cup{}V_2^\star$, where
  $|V^\star|=\lexp{d+1}$, $V^\star\cap{}V_j^\star\ne\emptyset$, $j=1,2$,
  defines a $\lexp{d}$-face of $P^\star$, which means that
  $P^\star$ is \bn[V_1^\star]{\lexp{d+1}}.
\end{proof}

We are now ready to state the second main theorem of this section,
that concerns the tightness of our upper bounds on the number of
$k$-faces of the Minkowski sum of two $d$-polytopes for all
$0\le{}k\le{}d-1$ and for all odd dimensions $d\ge{}3$.

\begin{theorem}\label{thm:lb-odd}
  Let $d\ge{}3$ and $d$ is odd. There exist two $\lexp{d}$-neighborly
  $d$-polytopes $P_1$ and $P_2$ with $n_1$ and $n_2$ vertices,
  respectively, such that, for all $1\le{}k\le{}d$:
  \begin{equation*}
    f_{k-1}(P_1\oplus{}P_2)=f_k(C_{d+1}(n_1+n_2))
    -\sum_{i=0}^{\lexp{d+1}}\binom{d+1-i}{k+1-i}
    \left(\binom{n_1-d-2+i}{i}+\binom{n_2-d-2+i}{i}\right),
  \end{equation*}
  where $C_d(n)$ stands for the cyclic $d$-polytope with $n$
  vertices.
\end{theorem}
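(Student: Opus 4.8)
The plan is to assemble the construction built in Lemma~\ref{lem:kKQbn} and in the perturbation lemma (the unlabelled lemma stated just before the present theorem), and then to invoke part~(iii) of Theorem~\ref{thm:minksum_ub}. First I would fix arbitrary reals $0<\alpha_1<\cdots<\alpha_{n_1}$ and $0<\beta_1<\cdots<\beta_{n_2}$, together with the auxiliary quantity $\epsilon>0$, exactly as in the construction. Applying Lemma~\ref{lem:kKQbn} produces a value $\tau^\star>0$ for which $Q^\star=CH_{d+1}(\{Q_1^\star,Q_2^\star\})$ is \bn[U_1^\star]{\lexp{d+1}}; applying the perturbation lemma then produces a value $\zeta^\star>0$ for which, after replacing the curves $\mc_j(t)$ by $\mc_j(t;\zeta^\star)$, the resulting $(d+1)$-polytope $P^\star=CH_{d+1}(\{P_1^\star,P_2^\star\})$ is \bn[V_1^\star]{\lexp{d+1}}. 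Set $P_1:=P_1^\star$ and $P_2:=P_2^\star$; these are the two polytopes claimed in the statement.

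It then remains only to check that these objects satisfy the hypotheses of part~(iii) of Theorem~\ref{thm:minksum_ub} verbatim. By construction $P_j$ is the convex hull of $n_j$ points on the moment-like curve $\mc_j(t;\zeta^\star)$ and, as recorded when $V_1$ and $V_2$ were introduced, is a $\lexp{d}$-neighborly $d$-polytope on $n_j$ vertices. Moreover $P_1$ and $P_2$ are embedded in the hyperplanes $\{x_{d+1}=0\}$ and $\{x_{d+1}=1\}$ of $\euc^{d+1}$, so the polytope $P=CH_{d+1}(\{P_1,P_2\})$ appearing in Theorem~\ref{thm:minksum_ub} is exactly $P^\star$, which is \bn[V_1]{\lexp{d+1}} by the perturbation lemma. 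Since $d\ge{}3$ is odd and $P$ is \bn{\lexp{d+1}}, part~(iii) of Theorem~\ref{thm:minksum_ub} gives that the bound
\[
  f_{k-1}(P_1\oplus{}P_2)\le{}f_k(C_{d+1}(n_1+n_2))
  -\sum_{i=0}^{\lexp{d+1}}\binom{d+1-i}{k+1-i}
  \left(\binom{n_1-d-2+i}{i}+\binom{n_2-d-2+i}{i}\right)
\]
holds with equality for every $1\le{}k\le{}d$, which is exactly the claimed identity.

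In effect this theorem is pure bookkeeping: all the substance has already been spent in Lemma~\ref{lem:sign-generic-det} (positivity of the perturbed generalized Vandermonde determinant $D_{k,\ell}(\tau)$ for small $\tau$), in Lemma~\ref{lem:kKQbn} (the reduction of each hyperplane determinant $H_U(\mb{u})$ to the form $D_{k,\ell}(\tau)$ via an even number of column swaps and $d-1$ row swaps, showing that every admissible $\lexp{d+1}$-subset meeting both $U_1$ and $U_2$ spans a supporting hyperplane of $Q^\star$), and in the continuity argument of the perturbation lemma (transporting the positivity of $H_{U^\star}(\mb{u}^\star)$ to that of $F_V(\mb{v};\zeta)$ for $\zeta$ small). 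The only place where a line of care is genuinely needed is the assertion that the $\zeta$-perturbation neither kills a vertex of $P_j$ nor destroys $\lexp{d}$-neighborliness; here one either quotes the stability of the vertex count and of neighborliness under sufficiently small perturbations, or notes directly that in $\mc_j(t;\zeta)$ the parameter $\zeta$ enters a single coordinate while the remaining $d-1$ coordinates form a bona fide $(d-1)$-dimensional moment curve. I expect this verification --- matching $P_1,P_2$ to every hypothesis of part~(iii) of Theorem~\ref{thm:minksum_ub} --- to be the only mild obstacle.
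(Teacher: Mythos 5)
Your proposal is correct and follows essentially the same route as the paper: Theorem~\ref{thm:lb-odd} is obtained there exactly by combining Lemma~\ref{lem:kKQbn}, the subsequent perturbation lemma (the choice of $\zeta^\star$ making $P$ \bn[V_1]{\lexp{d+1}}), and part~(iii) of Theorem~\ref{thm:minksum_ub}. The only point worth phrasing carefully is the $\lexp{d}$-neighborliness of $P_j$, which holds because $\mc_j(t;\zeta)$ with $\zeta>0$ is, up to a coordinate permutation and scaling, a $d$-dimensional moment curve, so $P_j$ is a cyclic $d$-polytope on its $n_j$ vertices.
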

\section{Summary and  open problems}
\label{sec:concl}

In this paper we have computed the maximum number of
$k$-faces, $f_k(P_1\oplus{}P_2)$, $0\le{}k\le{}d-1$ of the Minkowski sum
of two $d$-polytopes $P_1$ and $P_2$ as a function of the
number of vertices $n_1$ and $n_2$ of these two polytopes.
In even dimensions $d\ge{}2$, these maximal values are attained if
$P_1$ and $P_2$ are cyclic $d$-polytopes with disjoint vertex sets.
In odd dimensions $d\ge{}3$, the lower bound construction is more
intricate. Denoting by $\mc_1(t;\zeta)$ and $\mc_2(t;\zeta)$ the
$d$-dimensional moment-like curves
$(t,\zeta{}t^d,t^2,t^3,\ldots,t^{d-1})$ and
$(\zeta{}t^d,t,t^2,t^3,\ldots,t^{d-1})$,
where $t>0$ and $\zeta>0$, we have shown that these maximum values are
attained if $P_1$ and $P_2$ are the $d$-polytopes with vertex sets
$V_1=\{\mc_1(\alpha_i\tau^\star;\zeta^\star)\,\,|\,\,i=1,\ldots,n_1\}$
and
$V_2=\{\mc_2(\beta_j;\zeta^\star)\,\,|\,\,j=1,\ldots,n_2\}$,
respectively, where $0<\alpha_1<\alpha_2<\ldots<\alpha_{n_1}$,
$0<\beta_1<\beta_2<\ldots<\beta_{n_2}$, and $\tau^\star$,
$\zeta^\star$ are appropriately chosen, sufficiently small, positive
parameters.

The obvious open problem is to extend our results for the Minkowski
sum of $r$ $d$-polytopes in $\euc^d$, for $r\ge{}3$ and $d\ge{}4$.
A related problem is to express the number of $k$-faces of the
Minkowski sum of $r$ $d$-polytopes in terms of the number of facets of
these polytopes. Results in this direction are known for
$d=2$ and $d=3$ only (see the introductory section and
\cite{fhw-emcms-09} for the 3-dimensional case). We would like to
derive such expressions for any $d\ge{}4$ and any number, $r$, of
summands.

\section*{Acknowledgements}
The authors would like to thank Alain Lascoux for useful
discussions related to the factorization of the determinant in Lemma
\ref{lem:sign-generic-det}, and Efi Fogel for suggestions regarding
the improvement of the presentation of the material.
The work in this paper has been partially supported by the
FP7-REGPOT-2009-1 project ``Archimedes Center for Modeling, Analysis
and Computation''.

\phantomsection
\addcontentsline{toc}{section}{References}

\bibliographystyle{plain}
\bibliography{minksum}

\phantomsection
\addcontentsline{toc}{section}{Appendix}

\appendix

\section{The summation operator}
\label{app:sumop}

Let $\yY$ be either $\fF$ or a pure simplicial subcomplex of
$\bx{}Q$. Below we compute the action of the operator
$\SSS_{k}(\cdot;\delta,\nu)$ on $\yY$, for $\nu\in\{1,2\}$ and when
$\yY$ is either $\delta$- or $(\delta-1)$-dimensional.

Recall the action of the operator $\SSS_{k}(\cdot;\delta,\nu)$ on $\yY$:
\begin{equation*}
  \SSS_{k}(\yY;\delta,\nu)
  =\sum_{i=1}^{\delta}(-1)^{k-i}\binom{\delta-i}{\delta-k}f_{i-\nu}(\yY),
\end{equation*}
and consider first the case where $\yY$ is $\delta$-dimensional and
$\nu=1$. In this case we have:
\begin{equation}\label{equ:sum1}
  \begin{aligned}
    \SSS_{k}(\yY;\delta,1)
    &=\sum_{i=1}^{\delta}(-1)^{k-i}\binom{\delta-i}{\delta-k}f_{i-1}(\yY)\\
    &=\sum_{i=0}^{\delta}(-1)^{k-i}\binom{\delta-i}{\delta-k}f_{i-1}(\yY)
    -(-1)^{k}\binom{\delta}{\delta-k}f_{-1}(\yY)\\
    &=h_k(\yY)-(-1)^k\binom{\delta}{\delta-k}f_{-1}(\yY).
  \end{aligned}
\end{equation}
If $\yY$ is $(\delta-1)$-dimensional and $\nu=1$, we have:
\begin{equation}\label{equ:sum2}
  \begin{aligned}
    \SSS_{k}(\yY;\delta,1)&=\sum_{i=1}^{\delta}(-1)^{k-i}
    \binom{\delta-i}{\delta-k}f_{i-1}(\yY)\\
    &=\sum_{i=1}^{\delta}(-1)^{k-i}\left(\binom{\delta-i-1}{\delta-k}
      +\binom{\delta-i-1}{\delta-k-1}\right)f_{i-1}(\yY)\\
    &=-\sum_{i=1}^{\delta}(-1)^{(k-1)-i}
    \binom{\delta-1-i}{\delta-1-(k-1)}f_{i-1}(\yY)
    +\sum_{i=1}^{\delta}(-1)^{k-i}\binom{\delta-1-i}{\delta-1-k}f_{i-1}(\yY)\\
    &=-h_{k-1}(\yY)-(-1)^k\binom{\delta-1}{\delta-k}f_{-1}(\yY)
    +h_k(\yY)-(-1)^k\binom{\delta-1}{\delta-1-k}f_{-1}(\yY)\\
    &=h_k(\yY)-h_{k-1}(\yY)-(-1)^k\binom{\delta}{\delta-k}f_{-1}(\yY).
  \end{aligned}
\end{equation}
Finally, if $\yY$ is $(\delta-1)$-dimensional and $\nu=2$, we have:
\begin{equation}\label{equ:sum3}
  \begin{aligned}
    \SSS_{k}(\yY;\delta,2)&=
    \sum_{i=1}^{\delta}(-1)^{k-i}\binom{\delta-i}{\delta-k}f_{i-2}(\yY)\\
    &=\sum_{i=0}^{\delta-1}(-1)^{(k-1)-i}\binom{\delta-1-i}{\delta-1-(k-1)}
    f_{i-1}(\yY)\\
    &=h_{k-1}(\yY)
  \end{aligned}
\end{equation}


\section{Proof of an identity}
\label{app:identity}

In this section we prove the following identity used in
Section \ref{sec:ub} to prove the upper bound for $f_{k-1}(\fF)$
(see relations \eqref{equ:e1} and \eqref{equ:e2}).

\begin{lemma}
For any $d\ge{}2$, and any sequence of numbers $\alpha_i$, where
$0\le{}i\le{}\lexp{d+1}$, we have:
\begin{equation*}
  \sum_{i=0}^{\lexp{d+1}}\binom{d+1-i}{k-i}\alpha_i
  +\sum_{i=0}^{\lexp{d}}\binom{i}{k-d-1+i}\alpha_i
  =\sideset{}{^{\,*}}{\sum}_{i=0}^{\frac{d+1}{2}}
  \left(\binom{d+1-i}{k-i}+\binom{i}{k-d-1+i}\right)\alpha_i.
\end{equation*}
\end{lemma}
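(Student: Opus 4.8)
The plan is to treat the two cases $d$ even and $d$ odd separately, since the starred summation $\sideset{}{^{\,*}}{\sum}_{i=0}^{\frac{\delta}{2}}$ with $\delta=d+1$ behaves differently according to the parity of $d+1$, i.e.\ of $d$. In both cases the proof is nothing more than a careful comparison of index ranges, using the definition of the starred sum recalled in Section~\ref{sec:ub} (the terms run over $i=0,\ldots,\lexp{\delta}$, with the last term halved when $\delta$ is even).

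First I would dispose of the case $d$ even. Then $d+1$ is odd, so $\lexp{d+1}=\lexp{d}=\frac{d}{2}$, and by the definition of the starred sum no term is halved; hence the right-hand side is literally $\sum_{i=0}^{d/2}\bigl(\binom{d+1-i}{k-i}+\binom{i}{k-d-1+i}\bigr)\alpha_i$. On the left-hand side both sums also range over $0\le i\le\frac{d}{2}$, so adding them termwise yields exactly the right-hand side. This case is immediate.

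Next I would handle the case $d$ odd, which is the only one requiring an argument. Here $d+1$ is even, $\lexp{d+1}=\frac{d+1}{2}$, $\lexp{d}=\frac{d-1}{2}$, and the top term ($i=\frac{d+1}{2}$) of the starred sum is halved. The key observation is that at $i=\frac{d+1}{2}$ one has $d+1-i=i=\frac{d+1}{2}$ and $k-d-1+i=k-i$, so the two binomial coefficients appearing in that term coincide, each being $\binom{(d+1)/2}{k-(d+1)/2}$; consequently halving the $i=\frac{d+1}{2}$ term of the starred sum removes precisely one copy of $\binom{(d+1)/2}{k-(d+1)/2}\,\alpha_{(d+1)/2}$. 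Writing the right-hand side as the full unstarred sum $\sum_{i=0}^{(d+1)/2}\bigl(\binom{d+1-i}{k-i}+\binom{i}{k-d-1+i}\bigr)\alpha_i$ minus this single removed copy and comparing with the left-hand side: the first sums agree (both run up to $\frac{d+1}{2}$), while for the second sum the right-hand side runs to $\frac{d+1}{2}$ and the left-hand side only to $\frac{d-1}{2}$; the discrepancy is exactly the term at $i=\frac{d+1}{2}$, namely $\binom{(d+1)/2}{k-(d+1)/2}\,\alpha_{(d+1)/2}$, which is precisely the copy removed by the halving. Hence both sides are equal.

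There is no genuine obstacle: the statement is a bookkeeping identity, and no nontrivial combinatorial identity is invoked. The only point that requires a moment's care is the coincidence of the two binomial coefficients at the midpoint $i=\frac{d+1}{2}$ when $d$ is odd, combined with the correct interpretation of ``the last term is halved''; once those are pinned down the equality follows by matching the summation ranges on the two sides.
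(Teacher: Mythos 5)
Your proof is correct and follows essentially the same route as the paper's: both cases are handled by matching summation ranges, with the $d$ odd case resting on the same key observation that at $i=\frac{d+1}{2}$ the two binomial coefficients $\binom{d+1-i}{k-i}$ and $\binom{i}{k-d-1+i}$ coincide, so the halved top term of the starred sum accounts exactly for the term missing from the shorter sum on the left-hand side.
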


\begin{proof}
  We start by recalling the definition of the symbol
  $\displaystyle\sideset{}{^{\,*}}{\sum}_{i=0}^{\frac{\delta}{2}}T_i$. This
  symbol denotes the sum of the elements
  $T_0,T_1,\ldots,T_{\lexp{\delta}}$, where the last term is halved if
  $\delta$ is even. More precisely:
  \begin{equation*}
    \sideset{}{^{\,*}}{\sum}_{i=0}^{\frac{\delta}{2}}T_i=
    \begin{cases}
      T_0+T_1+\ldots+T_{\lexp{\delta}-1}+\frac{1}{2}T_{\lexp{\delta}}
      &\text{if $\delta$ is even},\\
      T_0+T_1+\ldots+T_{\lexp{\delta}-1}+T_{\lexp{\delta}}
      &\text{if $\delta$ is odd}.
    \end{cases}
  \end{equation*}

  \bigskip\noindent
  Let us now first consider the case $d$ odd. In this case $d+1$ is
  even, and we have:
  \begin{equation*}
    \begin{aligned}
      \sum_{i=0}^{\lexp{d+1}}\tbinom{d+1-i}{k-i}\alpha_i
      &+\sum_{i=0}^{\lexp{d}}\tbinom{i}{k-d-1+i}\alpha_i
      =\sum_{i=0}^{\lexp{d+1}}\tbinom{d+1-i}{k-i}\alpha_i
      +\sum_{i=0}^{\lexp{d+1}-1}\tbinom{i}{k-d-1+i}\alpha_i\\
      &=\sum_{i=0}^{\lexp{d+1}-1}\left(\tbinom{d+1-i}{k-i}
        +\tbinom{i}{k-d-1+i}\right)\alpha_i
      +\tbinom{d+1-\lexp{d+1}}{k-\lexp{d+1}}\alpha_{\lexp{d+1}}\\
      &=\sum_{i=0}^{\lexp{d+1}-1}\left(\tbinom{d+1-i}{k-i}
        +\tbinom{i}{k-d-1+i}\right)\alpha_i
      +{\textstyle\frac{1}{2}}\left(\tbinom{d+1-\lexp{d+1}}{k-\lexp{d+1}}
        +\tbinom{\lexp{d+1}}{k-d-1+\lexp{d+1}}\right)\alpha_{\lexp{d+1}}\\
      &=\sideset{}{^{\,*}}{\sum}_{i=0}^{\frac{d+1}{2}}
      \left(\tbinom{d+1-i}{k-i}+\tbinom{i}{k-d-1+i}\right)\alpha_i
    \end{aligned}
  \end{equation*}
  The case $d$ even is even simpler to prove. In this case $d+1$ is
  odd, hence:
  \begin{equation*}
    \begin{aligned}
      \sum_{i=0}^{\lexp{d+1}}\tbinom{d+1-i}{k-i}\alpha_i
      +\sum_{i=0}^{\lexp{d}}\tbinom{i}{k-d-1+i}\alpha_i
      &=\sum_{i=0}^{\lexp{d+1}}\tbinom{d+1-i}{k-i}\alpha_i
      +\sum_{i=0}^{\lexp{d+1}}\tbinom{i}{k-d-1+i}\alpha_i\\
      &=\sum_{i=0}^{\lexp{d+1}}\left(\tbinom{d+1-i}{k-i}
        +\tbinom{i}{k-d-1+i}\right)\alpha_i\\
      &=\sideset{}{^{\,*}}{\sum}_{i=0}^{\frac{d+1}{2}}
      \left(\tbinom{d+1-i}{k-i}+\tbinom{i}{k-d-1+i}\right)\alpha_i
    \end{aligned}
  \end{equation*}
  This completes the proof.
\end{proof}

\section{Proof of Lemma 15}
\label{app:signdet}

\newcommand{\compl}{\bar{S}}
\newcommand{\cb}[1]{\mb{\bar{#1}}}

We start by introducing what is known as
\emph{Laplace's Expansion Theorem} for determinants
(see \cite{g-tm-60,hk-la-71} for details and proofs).
Consider a $n\times{}n$ matrix $A$.
Let $\mb{r}=(r_1,r_2,\ldots,r_k)$, be a vector of $k$ row indices for
$A$, where $1\le{}k<n$ and $1\le{}r_1<r_2<\ldots<r_k\le{}n$.
Let $\mb{c}=(c_1,c_2,\ldots,c_k)$ be a vector of $k$ column indices for
$A$, where $1\le{}k<n$ and $1\le{}c_1<c_2<\ldots<c_k\le{}n$. We
denote by $S(A;\mb{r},\mb{c})$ the $k\times{}k$ submatrix of $A$
constructed by keeping the entries of $A$ that belong to a row in
$\mb{r}$ and a column in $\mb{c}$.
The \emph{complementary submatrix} for $S(A;\mb{r},\mb{c})$, denoted
by $\compl(A;\mb{r},\mb{c})$, is the $(n-k)\times(n-k)$ submatrix of
$A$ constructed by removing the rows and columns of $A$ in $\mb{r}$
and $\mb{c}$, respectively. Then, the determinant of $A$ can be
computed by expanding in terms of the $k$ columns of $A$ in
$\mb{c}$ according to the following theorem.

\begin{theorem}[\textbf{Laplace's Expansion Theorem}]\label{thm:LET}
  Let $A$ be a $n\times{}n$ matrix. Let
  $\mb{c}=(c_1,c_2,\ldots,c_k)$ be a vector of $k$ column indices for
  $A$, where $1\le{}k<n$ and $1\le{}c_1<c_2<\ldots<c_k\le{}n$. Then:
  \begin{equation}
    \det(A)=\sum_{\mb{r}}(-1)^{|\mb{r}|+|\mb{c}|}
    \det(S(A;\mb{r},\mb{c}))\det(\compl(A;\mb{r},\mb{c})),
  \end{equation}
  where $|\mb{r}|=r_1+r_2+\ldots+r_k$,
  $|\mb{c}|=c_1+c_2+\ldots+c_k$, and the summation is taken over all
  row vectors $\mb{r}=(r_1,r_2,\ldots,r_k)$ of $k$ row indices for
  $A$, where $1\le{}r_1<r_2<\ldots<r_k\le{}n$.
\end{theorem}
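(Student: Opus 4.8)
The plan is to prove the identity directly from the Leibniz (permutation) formula $\det(A)=\sum_{\sigma\in S_n}\operatorname{sgn}(\sigma)\prod_{j=1}^n a_{\sigma(j),j}$ (writing $a_{ij}$ for the $(i,j)$-entry of $A$), by partitioning the sum over $S_n$ according to how $\sigma$ moves the distinguished set of columns. Write $C=\{c_1<\dots<c_k\}$ and let $C'=\{c_1'<\dots<c_{n-k}'\}$ be the complementary columns. For $\sigma\in S_n$ set $R(\sigma):=\sigma(\{c_1,\dots,c_k\})$, list it increasingly as $\{r_1<\dots<r_k\}$, and let $R'=\{r_1'<\dots<r_{n-k}'\}$ be its complement in $\{1,\dots,n\}$. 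Splitting the product $\prod_{j=1}^n a_{\sigma(j),j}$ into the factors over $j\in C$ and over $j\in C'$, and grouping the permutations by the value $R$ of $R(\sigma)$, gives
\[
  \det(A)=\sum_{R}\ \sum_{\substack{\sigma\in S_n\\ \sigma(C)=R}}\operatorname{sgn}(\sigma)\Big(\prod_{j\in C}a_{\sigma(j),j}\Big)\Big(\prod_{j\in C'}a_{\sigma(j),j}\Big),
\]
where $R$ ranges over all $k$-element subsets of $\{1,\dots,n\}$. Since $\compl(A;\mb{r},\mb{c})$ is precisely the submatrix of $A$ on rows $R'$ and columns $C'$, it suffices to show that each inner sum equals $(-1)^{|\mb{r}|+|\mb{c}|}\det(S(A;\mb{r},\mb{c}))\det(\compl(A;\mb{r},\mb{c}))$.

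The crux is a sign lemma: for every $\sigma$ with $\sigma(C)=R$ one has $\operatorname{sgn}(\sigma)=(-1)^{|\mb{r}|+|\mb{c}|}\operatorname{sgn}(\pi)\operatorname{sgn}(\rho)$, where $\pi\in S_k$ is determined by $\sigma(c_a)=r_{\pi(a)}$ ($1\le a\le k$) and $\rho\in S_{n-k}$ by $\sigma(c_b')=r_{\rho(b)}'$ ($1\le b\le n-k$). To prove it, introduce the two ``sorting'' permutations $\theta,\phi\in S_n$ with $\theta(a)=c_a$, $\theta(k+b)=c_b'$ and $\phi(a)=r_a$, $\phi(k+b)=r_b'$. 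Because $\theta$ carries $\{1,\dots,k\}$ onto $C$, $\sigma$ carries $C$ onto $R$, and $\phi^{-1}$ carries $R$ onto $\{1,\dots,k\}$, the composite $\phi^{-1}\sigma\theta$ preserves the partition of $\{1,\dots,n\}$ into $\{1,\dots,k\}$ and $\{k+1,\dots,n\}$ (indeed $(\phi^{-1}\sigma\theta)(a)=\pi(a)$ for $a\le k$ and $(\phi^{-1}\sigma\theta)(k+b)=k+\rho(b)$ for $1\le b\le n-k$), so $\operatorname{sgn}(\phi^{-1}\sigma\theta)=\operatorname{sgn}(\pi)\operatorname{sgn}(\rho)$. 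Counting inversions of $\theta$ (the pair of positions carrying $c_a$ and $c_b'$ is an inversion iff $c_b'<c_a$, and for fixed $a$ there are exactly $c_a-a$ such $b$) gives $\operatorname{sgn}(\theta)=(-1)^{|\mb{c}|-\binom{k+1}{2}}$, and likewise $\operatorname{sgn}(\phi)=(-1)^{|\mb{r}|-\binom{k+1}{2}}$. Hence $\operatorname{sgn}(\sigma)=\operatorname{sgn}(\phi)\operatorname{sgn}(\theta)\operatorname{sgn}(\pi)\operatorname{sgn}(\rho)=(-1)^{|\mb{r}|+|\mb{c}|}\operatorname{sgn}(\pi)\operatorname{sgn}(\rho)$, the two copies of $\binom{k+1}{2}$ cancelling modulo $2$.

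With the sign lemma in hand, fix $R$. As $\sigma$ runs over the permutations with $\sigma(C)=R$, the pair $(\pi,\rho)$ runs independently over all of $S_k\times S_{n-k}$, while $\prod_{j\in C}a_{\sigma(j),j}=\prod_{a=1}^k a_{r_{\pi(a)},c_a}$ and $\prod_{j\in C'}a_{\sigma(j),j}=\prod_{b=1}^{n-k}a_{r_{\rho(b)}',c_b'}$. Therefore the inner sum equals $(-1)^{|\mb{r}|+|\mb{c}|}\big(\sum_{\pi\in S_k}\operatorname{sgn}(\pi)\prod_a a_{r_{\pi(a)},c_a}\big)\big(\sum_{\rho\in S_{n-k}}\operatorname{sgn}(\rho)\prod_b a_{r_{\rho(b)}',c_b'}\big)$, which by the Leibniz formula applied to the $k\times k$ submatrix on rows $R$, columns $C$ and to its $(n-k)\times(n-k)$ complement is exactly $(-1)^{|\mb{r}|+|\mb{c}|}\det(S(A;\mb{r},\mb{c}))\det(\compl(A;\mb{r},\mb{c}))$. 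Summing over all $k$-subsets $R$ (equivalently, over all admissible row vectors $\mb{r}$) gives the theorem.

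The one genuinely delicate point is the sign bookkeeping in the middle paragraph, in particular the identity $\operatorname{sgn}(\theta)=(-1)^{|\mb{c}|-\binom{k+1}{2}}$; I would establish it by the inversion count indicated above, or by a short induction on $k$. Everything else is a routine regrouping of the Leibniz sum together with the same formula applied to the two diagonal blocks. An alternative route first reduces to the case $\mb{c}=(1,\dots,k)$ using $|\mb{c}|-\binom{k+1}{2}$ adjacent column transpositions and then iterates ordinary cofactor ($k=1$) expansion along columns $1,\dots,k$; it works but requires the same sign care, so the partition-of-$S_n$ argument is the cleaner self-contained proof.
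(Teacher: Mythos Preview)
Your proof is correct. The partition of $S_n$ according to the image $\sigma(C)$, the factorization of each class via the bijection $\sigma\leftrightarrow(\pi,\rho)$, and the sign computation through the sorting permutations $\theta,\phi$ are all sound; in particular, the inversion count giving $\operatorname{sgn}(\theta)=(-1)^{|\mb{c}|-\binom{k+1}{2}}$ is exactly right, since for each $a\le k$ there are precisely $c_a-a$ complementary columns below $c_a$.

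Note, however, that the paper does not supply its own proof of this theorem: it is quoted as a classical result with references to Gantmacher and Hoffman--Kunze, and then used as a tool in the proof of Lemma~\ref{lem:sign-generic-det}. So there is no ``paper's proof'' to compare against. Your argument is the standard permutation-group derivation and would serve perfectly well as a self-contained justification, should one wish to include it. The alternative you mention at the end---reducing to $\mb{c}=(1,\dots,k)$ by column swaps and then iterating single-column cofactor expansion---is the other textbook route; your direct approach is cleaner precisely because it handles all the sign bookkeeping in one place rather than distributing it across $k$ successive expansions.
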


\newcommand{\VD}[2][]{\text{VD}(#1\mb{#2})}
\newcommand{\GVD}{\text{GVD}}

The next item that will be useful is some notation and discussion
about Vandermonde and generalized Vandermonde determinants. Given a
vector of $n\ge{}2$ real numbers $\mb{x}=(x_1,x_2,\ldots,x_n)$, the
\emph{Vandermonde determinant} $\VD{x}$ of $\mb{x}$ is the
$n\times{}n$ determinant
\begin{equation*}
  \VD{x}=\left|
    \begin{array}{ccccc}
      1&1&\cdots&1\\
      x_1&x_2&\cdots&x_n\\
      x_1^2&x_2^2&\cdots&x_n^2\\
      \vdots&\vdots&&\vdots\\
      x_1^{n-1}&x_2^{n-1}&\cdots&x_n^{n-1}\\
    \end{array}
  \right|=\prod_{1\le{}i<j\le{}n}(x_j-x_i)
\end{equation*}
From the above expression, it is readily seen that if the elements of
$\mb{x}$ are in strictly increasing order, then $\VD{x}>0$. A
generalization of the Vandermonde determinant is the generalized
Vandermonde determinant: if, in addition to $\mb{x}$, we specify a vector
of exponents $\mb{\mu}=(\mu_1,\mu_2,\ldots,\mu_n)$, where
we require that $0\le{}\mu_1<\mu_2<\ldots<\mu_n$, we can
define the \emph{generalized Vandermonde determinant}
$\GVD(\mb{x};\mb{\mu})$ as the $n\times{}n$ determinant:
\begin{equation*}
  \GVD(\mb{x};\mb{\mu})=\left|
    \begin{array}{ccccc}
      x_1^{\mu_1}&x_2^{\mu_1}&\cdots&x_n^{\mu_1}\\
      x_1^{\mu_2}&x_2^{\mu_2}&\cdots&x_n^{\mu_2}\\
      x_1^{\mu_3}&x_2^{\mu_3}&\cdots&x_n^{\mu_3}\\
      \vdots&\vdots&&\vdots\\
      x_1^{\mu_n}&x_2^{\mu_n}&\cdots&x_n^{\mu_n}\\
    \end{array}
  \right|.
\end{equation*}
It is a well-known fact that, if the elements of $\mb{x}$ are in
strictly increasing order, then $\GVD(\mb{x};\mb{\mu})>0$ (for
example, see \cite{g-atm-05} for a proof of this fact).

Before proceeding with the proof of Lemma \ref{lem:sign-generic-det}
we need to introduce some additional notation concerning vectors.
We denote by $\mb{e}_i$ the vector whose elements are zero except for
the $i$-th element, which is equal to 1.
Given two vectors of size $n$ $\mb{a}=(a_1,a_2,\ldots,a_n)$ and
$\mb{b}=(b_1,b_2,\ldots,b_n)$, we denote by $\mb{a}-\mb{b}$ the
vector we get by element-wise subtracting the elements of the second
vector from the elements of the first, i.e.,
$\mb{a}-\mb{b}=(a_1-b_1,a_2-b_2,\ldots,a_n-b_n)$. Finally, given
some $t\in\reals$, and a vector $\mb{x}=(x_1,x_2,\ldots,x_n)$, we
denote by $t\mb{x}$ 
the vector $(tx_1,tx_2,\ldots,tx_n)$.

We now restate Lemma \ref{lem:sign-generic-det} and prove it.

\newcounter{thmcopytmp}
\setcounter{thmcopytmp}{\thetheorem}
\setcounter{theorem}{\thethmcopy}
\addtocounter{theorem}{-1}

\technicallemma{}

\newcommand{\md}{\Delta_{k,\ell}(\tau)}
\newcommand{\mddet}{D_{k,\ell}(\tau)}

\begin{proof}
  We denote by $\md$ the matrix corresponding to the determinant
  $D_{k,\ell}(\tau)$. We are now going to apply Laplace's Expansion
  Theorem to evaluate $D_{k,\ell}(\tau)$ in terms of the first $k$
  columns of $\md$. Note that in this case $\mb{c}=(1,2,\ldots,k)$,
  so we get:
  \begin{equation}\label{equ:DLET}
    \begin{aligned}
      \mddet&=\sum_{\mb{r}}(-1)^{|\mb{r}|+|\mb{c}|}
      \det(S(\md;\mb{r},\mb{c}))\det(\compl(\md;\mb{r},\mb{c}))\\
      &=(-1)^{\frac{k(k+1)}{2}}\sum_{\mb{r}}(-1)^{|\mb{r}|}
      \det(S(\md;\mb{r},\mb{c}))\det(\compl(\md;\mb{r},\mb{c})).
    \end{aligned}
  \end{equation}
  It is easy to verify that the above sum consists of
  $\binom{k+\ell}{k}$ terms. Observe that, among these terms:
  \begin{enumerate}
  \item
    all terms for which $\mb{r}$ contains the third or the fourth row
    vanish (the corresponding row of $S(\md;\mb{r},\mb{c})$ consists
    of zeros), and
  \item
    all terms for which $\mb{r}$ does not contain the first or the
    second row vanish (in this case there exists at least one row of
    $\compl(\md;\mb{r},\mb{c})$ that consists of zeros).
  \end{enumerate}
  The remaining terms of the expansion are the $\binom{k+\ell-4}{k-2}$
  terms for which $\mb{r}=(1,2,r_3,r_4,\ldots,r_k)$, with
  $5\le{}r_3<r_4<\ldots<r_k\le{}k+\ell$. For any given such
  $\mb{r}$, we have that:
  \begin{enumerate}
  \item
    $\det(S(\md,\mb{r},\mb{c}))$ is the $k\times{}k$
    generalized Vandermonde determinant
    $\GVD(\tau\mb{x};\mb{r}-\mb{\alpha})$, where
    $\tau\mb{x}=(\tau{}x_1,\tau{}x_2,\ldots,\tau{}x_k)$,
    $\mb{\alpha}=(1,1,3,3,\ldots,3)=\mb{e}_1+\mb{e}_2+3\sum_{i=3}^k\mb{e}_i$,
    and
  \item
    $\det(\compl(\md,\mb{r},\mb{c}))$ is the $\ell\times{}\ell$
    generalized Vandermonde determinant
    $\GVD(\mb{y};\cb{r}-\mb{\beta})$, where $\cb{r}$ is the vector
    of the $\ell$, among the $k+\ell$, row indices for $\md$ that do
    not belong to $\mb{r}$, and
    $\mb{\beta}=(3,3,\ldots,3)=3\sum_{i=1}^\ell\mb{e}_i$.
  \end{enumerate}
  We can, thus, simplify the expansion in \eqref{equ:DLET} to get:
  \begin{equation}\label{equ:DLET-simple}
    \mddet=(-1)^{\frac{k(k+1)}{2}}
    \sum_{\substack{\mb{r}=(1,2,r_3,\ldots,r_k)\\5\le{}r_3<r_4<\ldots<r_k\le{}k+\ell}}
    (-1)^{|\mb{r}|}\,\GVD(\tau\mb{x};\mb{r}-\mb{\alpha})\,
    \GVD(\mb{y};\cb{r}-\mb{\beta})
  \end{equation}
  Notice that $\GVD(\tau\mb{x};\mb{r}-\mb{\alpha})
  =\tau^{|\mb{r}-\mb{\alpha}|}\,\GVD(\mb{x};\mb{r}-\mb{\alpha})
  =\tau^{|\mb{r}|-|\mb{\alpha}|}\,\GVD(\mb{x};\mb{r}-\mb{\alpha})$.
  This means that the minimum exponent for $\tau$ is attained when
  $|\mb{r}|$ is minimal, which is the case when $\mb{r}$ is equal to
  $\mb{\rho}=(1,2,5,6,\ldots,k+2)$. For this value for $\mb{r}$, we
  also have that $\GVD(\mb{x};\mb{\rho}-\mb{\alpha})=\VD{x}$,
  while $\cb{r}$ is equal to $\cb{\rho}=(3,4,k+3,k+4,\ldots,k+\ell)$.
  Hence we get:
  \begin{equation}\label{equ:DLET-asymp}
    \mddet=(-1)^{\frac{k(k+1)}{2}+|\mb{\rho}|}\,\tau^{|\mb{\rho}|-|\mb{\alpha}|}
    \,\VD{x}\,\GVD(\mb{y};\cb{\rho}-\mb{\beta})
    +O(\tau^{|\mb{\rho}|-|\mb{\alpha}|+1}).
  \end{equation}
  Since
  $|\mb{\rho}|=\sum_{i=1}^{k+2}i-(3+4)=\sum_{i=1}^ki+(k+1)+(k+2)-7=
  \frac{k(k+1)}{2}+2k-4$, while $|\mb{\alpha}|=1+1+3(k-2)=3k-4$,
  relation \eqref{equ:DLET-asymp} can be rewritten as:
  \begin{equation}\label{equ:DLET-asymp2}
    \mddet=\tau^{\frac{k(k-1)}{2}}
    \,\VD{x}\,\GVD(\mb{y};\cb{\rho}-\mb{\beta})
    +O(\tau^{\frac{k(k-1)}{2}+1}),
  \end{equation}
  where we also used the fact that
  $(-1)^{\frac{k(k+1)}{2}+|\mb{\rho}|}=(-1)^{k(k+1)+2k-4}=1$, since
  $k(k+1)$ is even for all $k$. From relation
  \eqref{equ:DLET-asymp2} we immediately deduce that:
  \begin{equation}
    \lim_{\tau\to{}0^+}\frac{\mddet}{\tau^{\frac{k(k-1)}{2}}}
    =\VD{x}\,\GVD(\mb{y};\cb{\rho}-\mb{\beta}),
  \end{equation}
  which establishes the claim of the lemma, since both $\VD{x}$ and
  $\GVD(\mb{y};\cb{\rho}-\mb{\beta})$ are strictly positive.

  We end the proof of the lemma by commenting on two special cases:
  $k=2$ and $\ell=2$. In these two cases there is a single
  non-vanishing term in the expansion of $\mddet$, namely, the term
  corresponding to $\mb{r}=(1,2)$, if $k=2$, and
  $\mb{r}=(1,2,5,6,\ldots,k+2)$, if $\ell=2$.
  More precisely, if $k=2$, then
  $\mb{\alpha}=(1,1)=\mb{e}_1+\mb{e}_2$,
  $\cb{r}=(3,4,5,6,\ldots,\ell+2)$, and, thus
  \begin{align*}
    \det(S(\md,\mb{r},\mb{c}))&
    =\GVD(\tau\mb{x};\mb{r}-\mb{\alpha})=\VD[\tau]{x}=\tau\,\VD{x}
    =\tau(x_2-x_1),\qquad\text{and}\\
    \det(\compl(\md,\mb{r},\mb{c}))&
    =\GVD(\mb{y};\cb{r}-\mb{\beta})=\VD{y}.
  \end{align*}
  If $\ell=2$, then $\cb{r}=(3,4)$, and, thus,
  \begin{align*}
    \det(S(\md,\mb{r},\mb{c}))&
    =\GVD(\tau\mb{x};\mb{r}-\mb{\alpha})=\VD[\tau]{x}
    =\tau^{\frac{k(k-1)}{2}}\,\VD{x},\qquad\text{and}\\
    \det(\compl(\md,\mb{r},\mb{c}))&
    =\GVD(\mb{y};\cb{r}-\mb{\beta})=\VD{y}=y_2-y_1.
  \end{align*}
  Hence, in both cases, we have:
  \begin{equation*}
    \mddet=(-1)^{\frac{k(k+1)}{2}+|\mb{r}|}\,
    \GVD(\tau\mb{x};\mb{r}-\mb{\alpha})\,
    \GVD(\mb{y};\cb{r}-\mb{\beta})
    =\tau^{\frac{k(k-1)}{2}}\,\VD{x}\,\VD{y},
  \end{equation*}
  which is strictly positive for any $\tau>0$.
\end{proof}

\setcounter{theorem}{\thethmcopytmp}

\end{document}